\newcommand{\one}[1]{{\mathbbm{1}}_{{#1}}}
\newcommand{\PP}[1]{\mathbb{P}\left\{{#1}\right\}} 
\renewcommand{\O}[1]{\mathcal{O}\left({#1}\right)}
\def\R{\mathbb{R}}
\def\independenT#1#2{\mathrel{\rlap{$#1#2$}\mkern2mu{#1#2}}}
\newcommand\independent{\protect\mathpalette{\protect\independenT}{\perp}}
\newcommand{\iidsim}{\stackrel{\mathrm{iid}}{\sim}}
\newcommand{\ignore}[1]{}
\newcommand{\E}[1]{\mbox{$\mathbb{E}\left(#1\right)$}}
\newcommand{\iid}{\stackrel{\emph{i.i.d.}}{\sim}}
\newcommand{\bs}{\boldsymbol}
\DeclareMathOperator*{\sign}{sign}
\newcommand{\bX}{\bs{X}}
\newcommand{\bG}{\bs{G}}
\newcommand{\bS}{\bs{\Sigma}}
\newcommand{\bXk}{\tilde{\bs{X}}}
\newcommand{\by}{\bs{y}}
\newcommand{\Zk}{\tilde{Z}}
\numberwithin{equation}{section}
\definecolor{darkred}{rgb}{0.6, 0.0, 0.0}
\newcommand{\ljcc}[1]{{\color{red} [LJ: {#1}]}}
\definecolor{ejc}{RGB}{0,0,200}
\newcommand{\ejc}[1]{\textcolor{ejc}{[EJC: #1]}}
\newcommand{\rev}[1]{{\color{black}#1}}
\newcommand{\revv}[1]{{\color{black}#1}}
\title{Panning for Gold:\\
Model-X Knockoffs for High-dimensional Controlled Variable
Selection}
\author[1]{Emmanuel Cand\`es\thanks{Author names are sorted
    alphabetically.}}
\author[2]{Yingying Fan}
\author[1]{Lucas Janson}
\author[2]{Jinchi Lv}
\date{}
\affil[1]{Department of Statistics, Stanford University}
\affil[2]{Data Sciences and Operations Department, Marshall School of
  Business, USC}
\begin{document}
\maketitle

\begin{abstract}
  Many contemporary large-scale applications involve building
  interpretable models linking a large set of potential covariates to
  a response in a nonlinear fashion, such as when the response is
  binary. Although this modeling problem has been extensively studied,
  it remains unclear how to effectively control the fraction of false
  discoveries even in high-dimensional logistic regression, not to
  mention general high-dimensional nonlinear models. To address such a
  practical problem, we propose a new framework of
  \textit{model-\revv{X}} knockoffs, which reads from a different
  perspective the knockoff procedure \citep{RB-EC:2015} originally
  designed for controlling the false discovery rate in linear
  models. \revv{Whereas the knockoffs procedure is constrained to
    homoscedastic linear models with $n\ge p$, the key innovation here
    is that model-\revv{X} knockoffs provide valid inference from
    finite samples in settings in which the conditional distribution
    of the response is arbitrary and completely unknown. Furthermore,
    this holds no matter the number of covariates. Correct inference
    in such a broad setting is achieved by constructing knockoff
    variables probabilistically instead of
    geometrically.} 
  \revv{To do this, our} approach requires \revv{the covariates be
      random (independent and identically distributed rows) with a
      distribution} that is known, although we \revv{provide preliminary experimental evidence that our procedure is} robust to unknown/estimated distributions. To our knowledge, no
    other procedure solves the \textit{controlled} variable selection
    problem in such generality, but in the restricted settings where
    competitors exist, we demonstrate the superior power of knockoffs
    through simulations. Finally, we apply our procedure to data from
    a case-control study of Crohn's disease in the United Kingdom,
    making twice as many discoveries as the original analysis of the
    same data.

\smallskip
\noindent \textbf{Keywords.} False discovery rate (FDR), knockoff
filter, testing for conditional independence in nonlinear models,
generalized linear models, logistic regression, Markov blanket,
genome-wide association study (GWAS)
\end{abstract}

\section{Introduction}
\label{sec:introduction}
\subsection{Panning for gold}
\label{sec:overview}
Certain diseases have a genetic basis, and an important biological
problem is to find which genetic features (e.g., gene expressions or
single nucleotide polymorphisms) are important for determining a given
disease. In health care, researchers often want to know which
electronic medical record entries determine future medical
costs. Political scientists study which demographic or socioeconomic
variables determine political opinions. Economists are similarly
interested in which demographic/socioeconomic variables affect future
income. Those in the technology industry seek out specific software
characteristics they can change to increase user engagement. In the
current data-driven science and engineering era, a list of such
problems would go on and on. The common theme in all these instances
is that we have a deluge of explanatory variables, often many more
than the number of observations, knowing full well that the outcome we
wish to understand better only actually depends on a small fraction of
them. Therefore, a primary goal in modern ``big data analysis'' is to
identify those important predictors in a sea of noise
variables. Having said this, a reasonable question is why do we have
so many covariates in the first place? The answer is twofold: first,
because we can. To be sure, it may be fairly easy to measure
thousands if not millions of attributes at the same time. For
instance, it has become relatively inexpensive to genotype an
individual, collecting hundreds of thousands of genetic variations at
once. Second, even though we may believe that a trait
or phenotype depends on a comparably small set of genetic
variations, we have a priori no idea about which ones are
relevant and therefore must include them all in our search for
those nuggets of gold, so to speak. To further complicate matters, a
common challenge in these big data problems, and a central focus of
this paper, is that we often have little to no knowledge of how the
outcome even depends on the few truly important variables.

To cast the ubiquitous \emph{(\revv{variable}) selection} problem 
in statistical terms, call $Y$ the random variable representing the
outcome whose determining factors we are interested in, and
$X_1,\dots,X_p$ the set of $p$ potential determining factors or explanatory variables. The
object of study is the \emph{conditional} distribution of the outcome
$Y$ given the covariates $X = (X_1,\dots,X_p)$, and we shall denote
this conditional distribution function by $F_{Y|X}$. Ideally we would
like to estimate $F_{Y|X}$, but in general, this is effectively impossible
from a finite sample. For instance, even knowing that the conditional
density depends upon 20 {\em known} covariates makes the problem
impossible unless either the sample size $n$ is astronomically large,
and/or we are willing to impose a very restrictive model.
However, in most problems $F_{Y|X}$ may realistically be assumed to
depend on a small fraction of the $p$ covariates; that is, the
function $F_{Y|X}(y|x_1, \ldots, x_p)$ only depends upon a small number
of coordinates $x_i$ (or is well approximated by such a
lower-dimensional function).
Although this assumption does not magically make the estimation of
$F_{Y|X}$ easy, it does suggest consideration of the simpler problem:
{\em which of the many variables does $Y$ depend upon?} Often, finding
a few of the important covariates---in other words, teasing out the
relevant factors from those which are not---is already scientifically
extremely useful and can be considered a first step in understanding
the dependence between an outcome and some interesting variables; we
regard this as a crucial problem in modern data science.

\subsection{A peek at our contribution}
\label{sec:problem}
This paper addresses the selection problem by considering a very
general conditional model, where the response $Y$ can depend in an
arbitrary fashion on the covariates $X_1,\dots,X_p$. The only
restriction we place on the model is that the observations
$(X_{i1},\dots,X_{ip}, Y_i)$ are independently and identically
distributed (i.i.d.), which is often realistic in high-dimensional
applications such as genetics, where subjects may be drawn randomly
from some large population, or client behavioral modeling, where
experiments on a service or user interface go out to a random subset
of users. Therefore, the model is simply
\begin{equation}
\label{eq:glm}
(X_{i1},\dots,X_{ip}, Y_i) \iid F_{XY},\;\;\; i = 1,\dots,n
\end{equation}
for some arbitrary $(p+1)$-dimensional joint distribution $F_{XY}$. We
will assume \emph{no knowledge} of the conditional distribution of
$Y\,|\,X_1,\dots,X_p$, but we do assume the joint distribution of the
covariates is known, and we will denote it by $F_X$. As a
  concrete example, consider a case-control experiment to determine
  the genetic factors which contribute to a rare disease, with
  diseased subjects oversampled to 50\% to increase power. Then the
  joint distribution of features and disease status obeys the model
  \eqref{eq:glm}, where $F_X$ is a 50\%--50\% mixture of the genetic
  distributions of diseased and healthy subjects, and $Y$ is the
  subjects' binary disease state.

In Section \ref{sec:model} below we shall discuss the merits of this
model but we would like to immediately remark on an important benefit:
namely, one can pose a meaningful problem. To do this, observe that
when we say that the conditional distribution of $Y$ actually depends
upon a (small) subset $\mathcal{S} \subset \{1, \ldots, p\}$ of the
variables $X_1, \ldots, X_p$, which we would like to identify, we mean
that we would like to find the ``smallest'' subset $\mathcal{S}$ such
that conditionally on $\{X_j\}_{j \in \mathcal{S}}$, $Y$ is
independent of all other variables. Another way to say this is that
the other variables do not provide additional information about $Y$. A
minimal set $\mathcal{S}$ with this property is usually called a
Markov blanket or Markov boundary for $Y$ in the literature on
graphical models \citep[Section 3.2.1]{JP:1988}. Under very mild
conditions about the joint distribution $F_{XY}$, the Markov blanket
is well defined and unique (see Section \ref{sec:problem_statement}
for details) so that we have a cleanly stated selection
problem. Note also that the Markov Blanket can be defined purely
  in terms of $F_{Y|X}$ without any reference to $F_X$, so that in our
  case-control example the problem is defined in exactly the same way
  as if $F_X$ were the true population genetic distribution instead of
  the oversampled mixture of diseased and healthy genetic
  distributions.

In most problems of interest, even with the knowledge of $F_X$, it is
beyond hope to recover the blanket $\mathcal{S}$ with no error. Hence,
we are naturally interested in procedures that control a Type I error;
that is, we would like to find as many variables as possible while at
the same time not having too many false positives.  In this paper, we
focus on controlling the false discovery rate (FDR)
\citep{YB-YH:1995}, which we can define here as follows: letting
$\hat{\mathcal{S}}$ be the outcome of a selection procedure operating
on the sampled data (we put a hat because $\hat{\mathcal{S}}$ is
random), the FDR is
\begin{equation}
\label{eq:fdr0}
\text{FDR} :=  \mathbb{E}[\text{FDP} ], \qquad \text{FDP} = \frac{\# \{j: j
  \in \hat{\mathcal{S}}\setminus \mathcal{S}\}}{\# \{j: j
  \in \hat{\mathcal{S}}\}}
\end{equation}
with the convention 0/0 = 0. Procedures that control the FDR are
interpretable, as they roughly bound what fraction of discoveries are
false ones, and they can be quite powerful as well.

One achievement of this paper is to show that we can design rather
powerful procedures that rigorously control the FDR \eqref{eq:fdr0} in
finite samples. This holds no matter the unknown relationship between
the explanatory variables $X$ and the outcome $Y$. We achieve this by
rethinking the conceptual framework of \citet{RB-EC:2015}, who
originally introduced the knockoff procedure (throughout this paper,
we will sometimes use ``knockoffs'' as shorthand for the knockoff
framework or procedure). Their salient idea was to construct a set of
so-called ``knockoff'' variables which were not (conditionally on the
original variables) associated with the response, but whose structure
mirrored that of the original covariates. These knockoff variables
could then be used as controls for the real covariates, so that only
real covariates which appeared to be considerably more associated with
the response than their knockoff counterparts were selected.  Their
main result was achieving exact finite-sample FDR control in the
homoscedastic Gaussian linear model when $n\ge 2p$ (along with a
nearly-exact extension to when $p\le n<2p$). By reading the knockoffs
framework from a new perspective, the present paper places \emph{no
  restriction} on $p$ relative to $n$, in sharp constrast to the
original knockoffs work which required the low-dimensional setting of
$n\ge p$. The conceptual difference is that the original knockoff
procedure treats the $X_{ij}$ as fixed and relies on specific
stochastic properties of the linear model, precluding consideration of
$p>n$ or nonlinear models. By treating the $X_{ij}$ as \emph{random}
and relying on that stochasticity instead, the model-\revv{X}
perspective allows treatment of the high-dimensional setting which is
increasingly the norm in modern applications.  We refer to the new
approach as \emph{model-\revv{X}} (M\revv{X}) knockoffs\revv{, and by
  contrast we refer to the original knockoffs approach of
  \cite{RB-EC:2015} as \emph{fixed-X} (FX) knockoffs}. In a nutshell:
\begin{itemize}
\item We propose a new knockoff construction amenable to the \revv{random
  covariate setting} \eqref{eq:glm}.
\item As in \citet{RB-EC:2015} and further reviewed in
  Section~\ref{sec:methodology}, we shall use the knockoff variables
  as controls in such a way that we can tease apart important
  variables from noise while controlling the FDR. \revv{\em We place no
    restriction on the dimensionality of the data or the conditional
    distribution of $Y\,|\,X_1,\dots,X_p$.}
\item We apply the new procedure to real data from a case-control
  study of Crohn's disease in the United Kingdom, please see Section
  \ref{sec:realdata}. There, we show that the new knockoff method
  makes twice as many discoveries as the original analysis of the
  same data.
\end{itemize}

Before turning to the presentation of our method and results, we pause
to discuss the merits and limitations of our model, the relationships
between this work and others on selective inference, and the larger
problem of high-dimensional statistical testing. 

\subsection{Relationship with the classical setup for inference}
\label{sec:model}

It may seem to the statistician that our model appears rather
different from what she is used to. Our framework is, however,
not as exotic as it looks.

\paragraph{Classical setup}
The usual setup for inference {in conditional models} is to assume a
strong parametric model for the response conditional on the
covariates, such as a homoscedastic linear model, but to assume as
little as possible about, or even condition on, the covariates. We do
the exact opposite by assuming we know \emph{everything} about the
covariate distribution but \emph{nothing} about the conditional
distribution $Y|X_1,\dots,X_p$.  Hence, we merely shift the
\emph{burden of knowledge.}  \revv{Our philosophy is, therefore, 
  to model $X$, not $Y$, whereas
  classically, $Y$ (given $X$) is modeled and $X$ is not.} In
practice, the parametric model in the classical approach is just an
approximation, and does not need to hold exactly to produce useful
inference. Analogously, we do not need to know the covariate
distribution exactly for our method to be useful, as we will
demonstrate in Sections~\ref{sec:robust} and \ref{sec:realdata}.

\paragraph{When are our assumptions useful?}
We do not claim our assumptions will always be appropriate, but there
are important cases when it is reasonable to think we know much more
about the covariate distribution than about the conditional
distribution of the response, including:
\begin{itemize}
	\item When we in fact know exactly the covariate distribution because
	we control it, such as in gene knockout experiments
	\citep{LC-ea:2013,JP-ea:2016}, genetic crossing experiments \citep{JH-CW:1931},
	or sensitivity analysis of numerical models \citep{AS-ea:2008} (for
	example climate models). In some cases we may also essentially know
	the covariate distribution even when we do not control it, such as
	in admixture mapping \citep{HT-ea:2006}.
	\item When we have a large amount of unsupervised data (covariate data without
	corresponding responses/labels) in addition to the $n$ labeled
	observations. This is not uncommon in genetic or economic studies,
	where many other studies will exist that have collected the same
	covariate information but different response variables.
	\item When we simply have considerably more prior information about
	the covariates than about the response. Indeed, the point of many
	conditional modeling problems is to relate a poorly-understood
	response variable to a set of well-understood covariates. For
	instance, in genetic case-control studies, scientists seek to
	understand the \rev{genetic basis} of an extremely biologically-complex disease
	using many comparatively simple single nucleotide polymorphisms
	(SNPs) as covariates.
\end{itemize}

\paragraph{Payoff} There are substantial payoffs to our
framework. Perhaps the main advantage is the ability to use the
  knockoff framework in high dimensions, a setting that was impossible
  using the original approach. Even in high dimensions, previous inference results
rely not only on a parametric model that is often linear and
homoscedastic, but also on the sparsity or ultra-sparsity of the
parameters of that model in order to achieve some asymptotic
guarantee. In contrast, our framework can accommodate \emph{any} model
for both the response and the covariates, and our guarantees are exact
in finite samples (non-asymptotic). In particular, our setup
encompasses any regression, classification, or survival model, including any
generalized linear model (GLM), and allows for arbitrary
nonlinearities and heteroscedasticity, such as are found in many
machine learning applications.

\subsection{Relationship with work on inference after selection}
\label{sec:related}

{There is a line of work on inference after selection, or
	post-selection inference, for high-dimensional regression, the goal
	of which is to first perform selection to make the problem
	low-dimensional, and then produce p-values that are valid
	\emph{conditional} on the selection step
	\citep{RB-ea:2013,RL-ea:2014,JL-ea:2016}. These works differ from
	ours in a number of ways so that we largely see them as
	complementary activities.
\begin{itemize}
\item First, our focus is on selecting the right variables, whereas
  the goal of this line of work is to adjust inference after some
  selection has taken place. In more detail, these works presuppose a
  selection procedure has been chosen (for reasons that may have
  nothing to do with controlling Type I error) and then compute
  p-values for the selected variables, taking into account the
  selection step. In contrast, {M\revv{X}} knockoffs is by itself a selection
  procedure that controls Type I error.
		
\item Second, inference after selection relies heavily on parametric
  assumptions about the conditional distribution, namely that
  \[Y|X_1,\dots,X_p \sim \mathcal{N}(\mu(X_1,\dots,X_p),\sigma^2),\]
  making it unclear how to extend it to the more general setting of
  the present paper.
		
\item The third difference stems from their objects of inference. In
  the selection step, a subset of size $m\le n$ of the original $p$
  covariates is selected, say $X_{j_1},\dots,X_{j_m}$, and the objects
  of inference are the coefficients of the $X_{j_k}$'s in the
  projection of $\mu$ onto the linear subspace spanned by the
  $n\times m$ matrix of observed values of these $X_{j_k}$'s. That is,
  the $k$th null hypothesis is that the aforementioned coefficient on
  $X_{j_k}$ is zero---note that whether or not inference on the $j$th
  variable is produced at all, and if it is, the object of that
  inference, both depend on the initial selection step. In contrast,
  if {M\revv{X}} knockoffs were restricted to the homoscedastic Gaussian
  model above, the $j$th null hypothesis would be that $\mu$ does not
  depend on $X_j$, and there would be $p$ such null hypotheses, one
  for each of the original variables.
\end{itemize}

	\subsection{Obstacles to obtaining p-values}
	\label{sec:obs}
	
	Our procedure does not follow the canonical approach to FDR control
	and multiple testing in general.  The canonical approach is to plug
	p-values into the BHq procedure, which controls the FDR under p-value
	independence and certain forms of dependence
	\citep{YB-YH:1995,YB-DY:2001}.
	Although these works have seeded a wealth of methodological
        innovations over the past two decades \citep{YB:2010}, all
        these procedures act on a set of valid p-values (or equivalent
        statistics), which they assume can be computed.\footnote{\rev{
            \citet{YB-YG:2009} and \citet{MB-ea:2015} transform the
            p-value cutoffs of common FDR-controlling procedures into
            penalized regression analogues to avoid p-values
            altogether. They only} provably control the FDR in
          homoscedastic linear regression when the design matrix has
          orthogonal columns (necessitating, importantly, that
          $n\ge p$) \rev{but the latter work} empirically \rev{retains
            control more generally} whenever the signal obeys sparsity
          constraints. \rev{In a very different setting with spatial
            hypotheses, \citet{JL-ea:2016b} use approximations from
            Gaussian random field theory to heuristically control the FDR.}}
        The requirement of having valid p-values is quite constraining
        for general {conditional modeling} problems.
	
	\subsubsection{Regression p-value approximations}
        \label{sec:regrp}
	In low-dimensional ($n\ge p$) homoscedastic Gaussian linear
	regression, p-values can be computed exactly even if the error
	variance is unknown, although the p-values will not in general have
	any simple dependence properties like independence or positive
	regression dependency on a subset (PRDS). Already
	for just the slightly broader class of low-dimensional GLMs, one must
	resort to asymptotic p-values derived from maximum-likelihood theory,
	which we will show in \rev{Section~\ref{sec:logreg}} can be far from valid
	in practice. 
        In high-dimensional ($n<p$) GLMs, it is not clear how to get
        p-values at all. Although some work (see for example
        \citet{SV-ea:2014}) exists on computing asymptotic p-values
        under strong sparsity assumptions \rev{(usually the number of
          important variables must be $o(\sqrt{n}/\log(p))$)}, like
        their low-dimensional maximum-likelihood counterparts, these
        methods suffer from highly non-uniform null p-values in many
        finite-sample problems \rev{(see, for example, simulations in
          \cite{RD-ea:2015})}. For binary covariates, the causal
        inference literature uses matching and propensity scores for
        approximately valid inference, but extending these methods to
        high dimensions is still a topic of current research,
        requiring similar assumptions and asymptotic approximations to
        the aforementioned high-dimensional GLM literature
        \citep{SA-GI-SW:2016}. Moving beyond generalized linear models
        to the nonparametric setting, there exist measures of feature
        importance, but no p-values.\footnote{In their online
          description of random forests
          (\url{http://www.math.usu.edu/~adele/forests/}), Leo Breiman
          and Adele Cutler propose a way to obtain a ``z-score'' for
          each variable, but without any theoretical distributional
          justification, and \cite{CS-AZ:2008} find ``that the
          suggested test is not appropriate for statements of
          significance.''}
	
	\subsubsection{Marginal testing}
        \label{sec:mrgnl}
	Faced with the inability to compute p-values for hypothesis tests of
	conditional independence, one solution is to use \emph{marginal}
	p-values, i.e., p-values for testing \emph{un}conditional (or
	marginal) independence between $Y$ and $X_j$. This simplifies the
	problem considerably, and many options exist for obtaining valid
	p-values for such a test. However, marginal p-values are in general
	{\em invalid} for testing conditional independence, and replacing
	tests of conditional independence with tests of unconditional
	independence is often undesirable. Indeed when $p \ll n$, so that
	classical (e.g., maximum-likelihood) inference techniques for
	regression give valid p-values for parametric tests of conditional
	independence, it would be very unusual to resort to marginal testing
	to select important covariates, and we cannot think of a textbook
	that takes this route. Furthermore, the class of conditional test
	statistics is far richer than that of marginal ones, and includes the
	most powerful statistical inference and prediction methodology
	available. For example, in compressed sensing, the signal recovery
	guarantees for state-of-the-art $\ell_1$-based (joint) algorithms are
	stronger than any guarantees possible with marginal methods. To
	constrain oneself to marginal testing is to completely ignore the vast
	modern literature on sparse regression that, while lacking
	finite-sample Type I error control, has had tremendous success
	establishing other useful inferential guarantees such as model
	selection consistency under high-dimensional asymptotics in both
	parametric (e.g., Lasso \citep{PZ-BY:2006,CandesPlan}) and
	nonparametric (e.g., random forests \citep{SW-SA:2016})
	settings. Realizing this, the statistical genetics community has
	worked on a number of multivariate approaches to improve power in
	genome-wide association studies using both penalized
	\citep{TW-ea:2009,QH-DL:2011} and Bayesian regression
	\citep{YG-MS:2011,JL-ea:2011}, but both approaches still suffer from a
	lack of Type I error control (without making strong assumptions on
	parameter priors). We will see that the {M\revv{X}} knockoff procedure
	is able to leverage the power of any of these techniques while adding
	rigorous finite-sample Type I error control \revv{when the covariate distribution is known}.

Some specific drawbacks of using marginal p-values are:
\begin{description}
\item[1. Power loss] Even when the covariates are independent, so that
  the hypotheses of conditional and unconditional independence
  coincide, p-values resulting from marginal testing procedures may be
  less powerful than those from conditional testing procedures. This
  phenomenon has been reported previously, for example in statistical
  genetics by \citet{CH-ea:2008} and many others. Intuitively, this is
  because a joint model in $X_1,\dots,X_p$ for $Y$ will have less
  residual variance than a marginal model in just $X_j$.
  There are exceptions, for instance if there is only one important
  variable, then its marginal model is the correct joint model and a
  conditional test will be less powerful due to the uncertainty in how
  the other variables are included in the joint model. But in general,
  marginal testing becomes increasingly underpowered relative to
  conditional testing as the \emph{absolute} number of important
  covariates increases \citep{FF-ea:2012}, suggesting particular
  advantage for conditional testing in modern applications with
  complex high-dimensional models.
		
  There are also cases in which important variables are in fact fully
  marginally independent of the response. As a toy example, if
  $X_1,X_2\iidsim\text{Bernoulli}(0.5)$ and $Y = \one{\{X_1+X_2=1\}}$,
  then $Y$ is marginally independent of each of $X_1$ and $X_2$, even
  though together they determine $Y$ perfectly. $X_1$ and $X_2$ are
  certainly \emph{conditionally} dependent on $Y$, however, so a
  conditional test can have power to discover them.
		
\item[2. Interpretability] When the covariates are not independent,
  marginal and conditional independence do not correspond, and we end
  up asking the wrong question. For example, in a model with a few
  important covariates which \rev{determine} the outcome, and many
  unimportant covariates which have no influence on the outcome but
  are correlated with the \rev{important} covariates, marginal testing
  will treat such unimportant covariates as important. Thus, because
  marginal testing is testing the wrong hypotheses, there will be many
  ``discoveries'' which have no influence on the outcome.
		
  The argument is often made, especially in genetics, that although
  discovering a\rev{n unimportant} covariate just because it was
  correlated with a\rev{n important} one is technically incorrect, it
  can still be useful as it suggests that there is a\rev{n important}
  covariate correlated with the discovered one.  While this is indeed
  useful, especially in genetics where correlated SNPs tend to be very
  close to one another on the genome, this comes at a price since it
  significantly alters the meaning of the FDR. Indeed, if we adopt
  this viewpoint, the unit of inference is no longer a SNP but,
  rather, a region on the genome, yet FDR is still being tested at the
  SNP level.  A consequence of this mismatch is that the result of the
  analysis may be completely misleading, as beautifully argued in
  \citet{MP-ea:2004,YB-RH:2007,DS-NZ-BY:2011}, see also
  \citet{chouldechova2014false} and \citet{DB-ea:2016} for later
  references.
\item[3. Dependent p-values] Marginal p-values will, in general, have
  quite complicated joint dependence, so that BHq does not control FDR
  exactly. Although procedures for controlling FDR under arbitrary
  dependence exist, their increased generality tends to make them
  considerably more conservative than BHq. In practice, however, the
  FDR of BHq applied to dependent p-values is usually below its
  nominal level, but the problem is that it can have highly
  \emph{variable} FDP. Recall that FDR is the expectation of FDP, the
  latter being the random quantity we actually care about but cannot
  control directly. Therefore, FDR control is only useful if the
  realized FDP is relatively concentrated around its expectation, and
  it is well-established \citep[Chapter 4]{BE:2010} that under
  correlations BHq can produce highly skewed FDP distributions. In
  such cases, with large probability, $\text{FDP} =0$ perhaps because
  no discoveries are made, and when discoveries are made, the FDP may
  be much higher than the nominal FDR, making it a misleading error
  bound.
\end{description}

	\subsection{Getting valid p-values via conditional randomization
		testing}
	\label{sec:crtintro}
	If we insist on obtaining p-values for each $X_j$, there is in fact a
	simple method when the covariate distribution is assumed known, as it
	is in this paper. This method is similar in spirit to both propensity
	scoring (where the distribution of a binary $X_j$ conditional on the
	other variables is often estimated) and randomization/permutation
	tests (where $X_j$ is either the only covariate or fully independent
	of the other explanatory variables). Explicitly, a conditional
	randomization test for the $j$th variable proceeds by first computing
	some feature importance statistic $T_j$ for the $j$th variable. Then
	the null distribution of $T_j$ can be computed through simulation by
	independently sampling $X^*_j$'s from the \emph{conditional}
	distribution of $X_j$ given the others (derived from the known $F_X$)
	and recomputing the same statistic $T^*_j$ with each new $X^*_j$ in
	place of $X_j$, see \rev{Section~\ref{sec:CRT}} for details.  Despite its
	simplicity, we have not seen this test proposed previously in the
	literature, although it nearly matches the usual randomization test
	when the covariates are independent of one another.
	
	\subsection{Outline of the paper}
	\label{sec:outline}
	The remainder of the paper is structured as follows:
	\begin{itemize}
        \item Section~\ref{sec:problem_statement} frames the controlled
        selection problem in rigorous mathematical terms.
        \item Section~\ref{sec:methodology} introduces the \revv{model-X} knockoff
        procedure, examines its relationship with the earlier proposal
        of \citet{RB-EC:2015}, proposes knockoff constructions and
        feature statistics, and establishes FDR control.
        \item Section~\ref{sec:CRT} introduces the conditional randomization test.
        \item Section~\ref{sec:simulations} demonstrates through simulations
        that the {M\revv{X}} knockoff procedure controls the FDR in a number
        of settings where no other procedure does, and that when
        competitors exist, knockoffs is more powerful.
        \item \rev{Section~\ref{sec:robust} gives \revv{some preliminary} simulations using
          artificial and real data \revv{regarding} the robustness of M\revv{X}
          knockoffs to unknown/misspecified covariate distributions.}
        \item Section \ref{sec:realdata} applies our procedure to a
        case-control study of Crohn's disease in the United Kingdom.
        \item Section~\ref{sec:discussion} concludes the paper with
        extensions and potential lines of future research.
	\end{itemize}
	
	\newtheorem{theorem}{Theorem}[section]
	\newtheorem{lemma}[theorem]{Lemma}
	\newtheorem{corollary}[theorem]{Corollary}
	\newtheorem{proposition}[theorem]{Proposition}
	\newtheorem{definition}[theorem]{Definition}
	\newtheorem{indhyp*}{Induction Hypothesis}
	
	\providecommand{\tabularnewline}{\\}
	\floatstyle{ruled}
	\newfloat{algorithm}{tbp}{loa}
	\providecommand{\algorithmname}{Algorithm}
	\floatname{algorithm}{\protect\algorithmname}

	\newcommand{\noj}{{{\text{-}j}}}
	
	\newcommand{\fdp}{\textnormal{FDP}}
	\newcommand{\fdr}{\textnormal{FDR}}
	\newcommand{\sfdp}{\textnormal{FDP}_{\textnormal{dir}}}
	\newcommand{\sfdr}{\textnormal{FDR}_{\textnormal{dir}}}
	\newcommand{\mfdp}{\textnormal{mFDP}}
	\newcommand{\mfdr}{\textnormal{mFDR}}
	\newcommand{\msfdp}{\textnormal{mFDP}_{\textnormal{dir}}}
	\newcommand{\msfdr}{\textnormal{mFDR}_{\textnormal{dir}}}
	\newcommand{\Sh}{\hat{S}}
	
	\newcommand{\diag}{\operatorname{diag}}
	\newcommand{\swap}{\textnormal{swap}}

	\newcommand{\bXko}{\bXk}
	\newcommand{\XX}{[\bX, \, \bXk]}
	
	\def\eqd{\,{\buildrel d \over =}\,}
	\def\geqd{\,{\buildrel d \over \ge}\,}
	\def\leqd{\,{\buildrel d \over \le}\,}
	
	\section{Problem statement}
	\label{sec:problem_statement}
	
	
	To state the controlled variable selection problem carefully, suppose
	we have $n$ i.i.d.~samples from a population, each of the form
	$(X, Y)$, where $X = (X_1, \ldots, X_p) \in \R^p$ and $Y \in \R$. If the conditional distribution of $Y$ actually depends upon
	a smaller subset of these variables, we would like to classify each
	variable as relevant or not depending on whether it belongs to this
	subset or not. Mathematically speaking, we are looking for the Markov
	blanket $\mathcal{S}$, i.e.~the ``smallest'' subset $\mathcal{S}$ such
	that conditionally on $\{X_j\}_{j \in \mathcal{S}}$, $Y$ is
	independent of all other variables.
	For almost all joint
	distributions of $(X,Y)$, there exists a unique Markov blanket but
	there are pathological cases where it does not.  An example is this:
	suppose that $X_1$ and $X_2$ are independent Gaussian variables and
	that $X_3 = X_1 - X_2$.  Further assume that the distribution of $Y$
	depends upon the vector $X$ only through $X_1 + X_2$,
	e.g.,~$Y \, | X \sim \mathcal{N}(X_1 + X_2, 1)$.  Then the set of
	relevant variables---or equivalently, the Markov blanket---is ill
	defined since we can say that the likelihood of $Y$ depends upon $X$
	through either $(X_1, X_2)$, $(X_1, X_3)$, or $(X_2, X_3)$, all these
	subsets being equally good. In order to define a unique set of
	relevant variables, we shall work with the notion of conditional
	\emph{pairwise} independence.
	\begin{definition}
		\label{def:null}
		A variable $X_j$ is said to be ``null'' if and only if $Y$ is
		independent of $X_j$ conditionally on the other variables
		$X_\noj = \{X_1, \ldots X_p\}\setminus \{X_j\}$.  The subset of null
		variables is denoted by $\mathcal{H}_0 \subset \{1, \ldots p\}$ and
		we call a variable $X_j$ ``nonnull'' or relevant if
		$j \notin \mathcal{H}_0$.
	\end{definition}
	From now on, {\em our goal is to discover as many relevant
		(conditionally dependent) variables as possible while keeping the
		FDR under control}.\footnote{Using the methods of
		\citet{LJ-WS:2016}, other error rates such as the $k$-familywise
		error rate can also be controlled using {M\revv{X}} knockoffs, but we
		focus on FDR for this paper.} Formally, for a selection rule that
	selects a subset $\hat{\mathcal{S}}$ of the covariates,
	\begin{equation}
		\label{eq:fdr2}
		\text{FDR} := \mathbb{E} \,\, \left[
		\frac{|\hat{\mathcal{S}} \cap \mathcal{H}_0|}{|\hat{\mathcal{S}}|} \right].
	\end{equation}
	
	In the example above, because of the perfect functional relationship
	$X_3 = X_2 - X_1$, all three variables $X_1, X_2, X_3$ would be
	classified as nulls. Imagine, however, breaking this relationship by
	adding a bit of noise, e.g.,~$X_3 = X_2 - X_1 + Z$, where $Z$ is
	Gaussian noise (independent of $X_1$ and $X_2$) however small. Then
	according to our definition, $X_1$ and $X_2$ are both nonnull while
	$X_3$ is null---and everything makes sense. Having said this, we
	should not let ourselves be distracted by such subtleties. In the
	literature on graphical models there, in fact, exist weak regularity
	conditions that guarantee that the (unique) set of relevant variables
	defined by pairwise conditional independence, exactly coincides with
	the Markov blanket so that there is no ambiguity.  In this field,
	researchers typically assume these weak regularity conditions hold
	(examples would include the local and global Markov properties), and
	proceed from there. For example, the textbook \citet{DE:2000}
	describes these properties on page 8 as holding ``under quite general
	conditions'' and then assumes them for the rest of the book.

	Our definition is very natural to anyone working with parametric
	GLMs. In a GLM, the response $Y$ has a
	probability distribution taken from an exponential family, which
	depends upon the covariates only through the linear combination
	$\eta = \beta_1 X_1 + \cdots + \beta_p X_p$. The relationship between
	$Y$ and $X$ is specified via a link function $g$ such that
	$\E{Y | X} = g^{-1}(\eta)$. In such models and under broad conditions,
	$Y \independent X_j \, | \, X_{\text{-}j}$ if and only if
	$\beta_j = 0$. In this context, testing the hypothesis that $X_j$ is a
	null variable is the same as testing $H_j: \, \beta_j = 0$.
	\begin{proposition}
		Take a family of random variables $X_1, \ldots, X_p$ such that one
		cannot perfectly predict any one of them from knowledge of the
		others. If the likelihood of $Y$ follows a GLM, then
		$Y \independent X_j \, | \, X_{\text{-}j}$ if and only if
		$\beta_j = 0$. Hence, $\mathcal{H}_0$ from Definition \ref{def:null}
		is exactly the set $\{j : \beta_j = 0\}$.
	\end{proposition}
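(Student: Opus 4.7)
The plan is to prove both directions by leveraging the fact that in a GLM, the conditional distribution of $Y$ given $X$ depends on $X$ \emph{only} through the scalar linear predictor $\eta(X) = \sum_{k=1}^p \beta_k X_k$.

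For the easy direction, suppose $\beta_j = 0$. Then $\eta(X)$ does not involve $X_j$, so for any realization $x$, the density $f_{Y \mid X}(y \mid x)$ is a function of $x_{\text{-}j}$ alone. This immediately yields $Y \independent X_j \mid X_{\text{-}j}$.

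For the harder converse, assume $Y \independent X_j \mid X_{\text{-}j}$; I want to show $\beta_j = 0$. First I would record that, by the exponential-family form of the GLM, the conditional density has the shape $f_{Y \mid X}(y \mid x) = h(y)\exp\!\bigl(T(y)\,\theta(\eta(x)) - A(\theta(\eta(x)))\bigr)$ where $\theta \circ g^{-1}$ is a one-to-one map from the mean $g^{-1}(\eta)$ to the natural parameter (so the distribution is identifiable in $\eta$ up to the usual mild regularity; strict convexity of $A$ does the work). The hypothesis of conditional independence says $f_{Y\mid X}(y\mid X_j, X_{\text{-}j})$ is almost-surely a function of $X_{\text{-}j}$ only. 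By identifiability, this forces $\eta(X) = \sum_k \beta_k X_k$ to be almost-surely a function of $X_{\text{-}j}$ alone, i.e. $\beta_j X_j$ is almost-surely equal to some measurable function $\varphi(X_{\text{-}j})$. If $\beta_j \neq 0$, dividing gives $X_j = \varphi(X_{\text{-}j})/\beta_j$ almost surely, contradicting the hypothesis that no $X_j$ can be perfectly predicted from the others. Hence $\beta_j = 0$.

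The main obstacle is the identifiability step: we need that two distinct values of $\eta$ yield distinct conditional distributions of $Y$, so that equality of the conditional laws over the support of $X$ can be pulled back to equality of $\eta$. This is essentially automatic for canonical GLMs (strict convexity of the log-partition function $A$ combined with injectivity of the link $g$), but one should note that this is exactly what the phrase ``under broad conditions'' in the statement is covering, together with the no-perfect-collinearity hypothesis, which is what converts ``$\eta$ does not depend on $X_j$ on the support of $X$'' into ``$\beta_j = 0$.'' Once these two ingredients are isolated, the proof is essentially a one-line consequence. The final conclusion, $\mathcal{H}_0 = \{j : \beta_j = 0\}$, then follows directly from Definition~\ref{def:null}.
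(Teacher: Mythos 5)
Your proof is correct, and it generalizes the argument the paper actually gives. The paper proves the proposition only for logistic regression (declaring ``the general case is similar''): the forward direction is the same factorization you use, and for the converse it observes that conditional independence forces the conditional law of $Y$ to factorize, hence the odds $e^{\eta}$ must be constant in $X_j$ given $X_{\text{-}j}$; since the odds equal $\exp(\beta_j X_j)$ times a function of $X_{\text{-}j}$, constancy forces $\beta_j = 0$ by the no-perfect-prediction assumption. Your version abstracts exactly the ingredient that makes this work: the odds-ratio computation is a special case of identifiability of the exponential family in $\eta$ (strict convexity of the log-partition function plus injectivity of the link), which lets you pull equality of conditional laws back to $\eta(X)$ being almost surely a function of $X_{\text{-}j}$, whence $\beta_j X_j = \varphi(X_{\text{-}j})$ a.s.\ and the non-predictability hypothesis kills $\beta_j \neq 0$. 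What your route buys is an honest proof of the stated generality (any GLM, with the ``broad conditions'' of the statement pinned down as identifiability of the natural parameter in $\eta$, e.g.\ a minimal family with injective link), where the paper settles for the representative logistic case; what the paper's route buys is concreteness, since for the logistic model the identifiability step reduces to the elementary observation that $\exp(\beta_j X_j)$ is constant only when $\beta_j = 0$. One small point worth making explicit in your write-up: identifiability can fail for degenerate families (e.g.\ constant sufficient statistic $T$), which is precisely what the hypothesis ``under broad conditions'' excludes, as you correctly note.
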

	\begin{proof}
		We prove this in the case of the logistic regression model as the
		general case is similar. Here, the conditional distribution of $Y$ is
		Bernoulli with
		\[
		\E{Y | X} = \mathbb{P}(Y = 1 | X) = \frac{e^\eta}{1 + e^{\eta}} =
		g^{-1}(\eta), \quad \eta = \beta_1 X_1 + \cdots + \beta_p X_p,
		\]
		and please note that the assumption about the covariates implies that
		the model is identifiable. Now assume first that $\beta_j = 0$. Then
		\begin{align}
			p_{Y, X_j | X_\noj}(y, x_j \, | \, x_\noj)
			& =
			p_{Y|X_j, X_\noj}
			(y \, | \, x_j, x_\noj) \, p_{X_j|X_\noj}(x_j \, | \, x_\noj)
		\end{align}
		and since the first factor in the right-hand side does not depend on
		$X_j$, we see that the conditional probability distribution function
		factorizes. This implies conditional independence.  In the other
		direction, assume that $Y$ and $X_j$ are conditionally
		independent. Then the likelihood function
		\[
		\frac{\exp(Y {(\beta_1 X_1+\cdots+\beta_pX_p)})}{1 + \exp({\beta_1 X_1+\cdots+\beta_pX_p})}
		\]
		must, conditionally on $X_\noj$, factorize into a function of $Y$
		times a function of $X_j$. A consequence of this is that conditionally
		on $X_\noj$, the odds ratio must not depend on $X_j$ (it must be
		constant). However, this ratio is equal to $\exp(\beta_j X_j)$ and is
		constant only if $\beta_j = 0$ since, by assumption, $X_j$ is not
		determined by $X_\noj$.
	\end{proof}
	
	The assumption regarding the covariates is needed. Indeed, suppose
	$X_1 \sim \mathcal{N}(0,1)$, $X_2 = 1\{X_1 > 0\}$, and $Y$ follows a
	logistic model as above with $\eta = X_1 + X_2$. Then
	$Y \independent X_2 \, | \, X_1$ even though $\beta_2 = 1$. In this
	example, the conditional distribution of $Y$ depends on $(X_1, X_2)$
	only through $X_1$. Therefore, for the purpose of identifying
	important variables (recall our task is to find important variables
	and not to learn exactly how the likelihood function depends upon these
	variables), we would like to find $X_1$ and actually do not care about
	$X_2$ since it provides no new information.

	

	
	\section{Methodology}
	\label{sec:methodology}

	\subsection{Model-\revv{X} knockoffs}
	\label{sec:MF_knockoffs}
	
	\subsubsection{Definition}
	\label{sec:def_MFk}
	\begin{definition} {\bf Model-\revv{X} knockoffs} for the family of random
		variables $X = (X_1, \ldots, X_p)$ are a new family of random
		variables $\tilde X = (\tilde X_1, \ldots, \tilde X_p)$ constructed
		with the following two properties: (1) for any subset
		$S \subset \{1, \ldots, p\}$,\footnote{$\eqd$ denotes equality in
			distribution, and the definition of the
			swapping operation is given just below.}
		\begin{equation}
			\label{eq:randomko}
			(X, \,  \tilde{X})_{\swap{(S)}} \, \eqd \, (X, \, \tilde
			X);
		\end{equation}
		(2) $\tilde X \independent Y \, | \, X$ if there is a response $Y$.
		(2) is guaranteed if $\tilde X$ is constructed without looking at $Y$.
		\label{def:def_MfK}
	\end{definition}
	Above, the vector $(X, \, \tilde{X})_{\swap{(S)}}$ is obtained from
	$(X, \, \tilde{X})$ by swapping the entries $X_j$ and $\tilde{X}_j$
	for each $j\in S$; for example, with $p = 3$ and $S = \{2,3\}$,
	\[
	(X_1, X_2, X_3, \tilde X_1, \tilde X_2, \tilde X_3) _{\swap{(\{2,3\})}} \, \rev{\eqd}
	\,  (X_1, \tilde X_2,
	\tilde X_3, \tilde X_1, X_2, X_3).
	\]
	We see from \eqref{eq:randomko} that original and knockoff variables
	are pairwise exchangeable: taking any subset of variables and
	swapping them with their knockoffs leaves the joint distribution
	invariant. \rev{Note that our exchangeability condition is on
          the covariates, and thus bears little resemblance to
          exchangeability conditions for closed permutation testing
          (see, e.g., \citet{PW-JT:2008}).} To give an example of {M\revv{X}} knockoffs, suppose that
	$X \sim \mathcal{N}(0, \bS)$. Then a joint distribution obeying
	\eqref{eq:randomko} is this:
	\begin{equation}
		\label{eq:ko_Gaussian}
		(X, \, \tilde X) \sim \mathcal{N}(0, \bG), \quad \text{where} \quad  \bG = \begin{bmatrix} \bS & \bS - \diag\{s\}\\
			\bS - \diag\{s\} & \bS \end{bmatrix};
	\end{equation}
	here, $\diag\{s\}$ is any diagonal matrix selected in such a
        way that the joint covariance matrix $\bG$ is positive
        semidefinite.  Indeed, the distribution obtained by swapping
        variables with their knockoffs is Gaussian with a covariance
        given by $\bs{P} \bG \bs{P}$, where $\bs{P}$ is the
        permutation matrix encoding the swap. Since
        $\bs{P} \bG \bs{P} = \bG$ for any swapping operation, the
        distribution is invariant. For an interesting connection with
        the invariance condition in \citet{RB-EC:2015}, see
        Appendix~\ref{app:detko}.
	
	We will soon be interested in the problem of constructing knockoff
	variables, having observed $X$. In the above example, a possibility is
	to sample the knockoff vector $\tilde X$ from the conditional
	distribution
	\[
	\tilde X \, | X \eqd \mathcal{N}(\mu, \bs V),
	\]
	where $\mu$ and $\bs V$ are given by classical regression formulas,
	namely,
	\begin{align*}
		\mu & = X - X\rev{\bS^{-1}\diag\{s\}}, \\
		\bs V & = 2\diag\{s\} - \diag\{s\}\bS^{-1} \diag\{s\}.
	\end{align*}
	There are, of course, many other ways of constructing knockoff
	variables, and for the time being, we prefer to postpone the
	discussion of more general constructions.
	
	In the setting of the paper, we are given i.i.d.~pairs
	$(X_{i1}, \ldots, X_{ip}, Y_i) \in \R^p \times \R$ of covariates and
	responses, which we can assemble in a data matrix $\bX$ and a data
	vector $y$ in such a way that the $i$th row of $\bX$ is
	$(X_{i1}, \ldots, X_{ip})$ and the $i$th entry of $y$ is $Y_i$. Then
	the {M\revv{X}} knockoff matrix $\tilde{\bX}$ is constructed in such a
	way that for each observation label $i$,
	$(\tilde X_{i1}, \ldots, \tilde{X}_{ip})$ is a knockoff for
	$(X_{i1}, \ldots, X_{ip})$ as explained above; that is to say, the
	joint vector
	$(X_{i1}, \ldots, X_{ip}, \tilde{X}_{i1}, \ldots, \tilde{X}_{ip})$
	obeys the pairwise exchangeability property \eqref{eq:randomko}.
	
	\subsubsection{Exchangeability of null covariates and their knockoffs}
	
	A crucial property of {M\revv{X}} knockoffs is that we can swap null
	covariates with their knockoffs without changing the joint
	distribution of the original covariates $X$ \rev{and} their knockoffs
	$\tilde X$, \rev{{\em conditional} on} the response $Y$.  From now on, $X_{i:j}$ for
	$i \le j$ is a shorthand for $(X_i, \ldots, X_{j})$.
	\begin{lemma}
		\label{lem:exch}
		Take any subset $S \subset \mathcal{H}_0$ of nulls. Then
		\[
                  \rev{\XX\mid y \, \eqd \, \XX_{\swap{(S)}}\mid y.}
		\]
	\end{lemma}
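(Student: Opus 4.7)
The plan is to compare the joint densities of $(\XX, y)$ and $(\XX_{\swap(S)}, y)$ pointwise and show they agree; I will work with densities for clarity, but the same argument applies to general distributions via conditional probability kernels.

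First I would use property~(2) of Definition~\ref{def:def_MfK}, namely $\tilde X \independent Y \mid X$, to factor the joint density as
\[
f_{X,\tilde X, Y}(x, \tilde x, y) \;=\; f_{Y\mid X}(y \mid x)\, f_{X, \tilde X}(x, \tilde x).
\]
Since the swap $(x, \tilde x) \mapsto (x, \tilde x)_{\swap(S)}$ is a self-inverse permutation of coordinates with unit Jacobian, the density of $(\XX_{\swap(S)}, y)$ at $(x, \tilde x, y)$ equals $f_{X,\tilde X, Y}((x, \tilde x)_{\swap(S)}, y)$. Writing $x^\star$ for the first $p$ coordinates of $(x, \tilde x)_{\swap(S)}$ (so $x^\star_j = \tilde x_j$ for $j \in S$ and $x^\star_j = x_j$ for $j \notin S$), the factorization above turns this density into $f_{Y\mid X}(y \mid x^\star)\, f_{X,\tilde X}((x, \tilde x)_{\swap(S)})$. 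Property~(1) of the same definition immediately yields $f_{X, \tilde X}((x, \tilde x)_{\swap(S)}) = f_{X, \tilde X}(x, \tilde x)$, so I am reduced to showing that $f_{Y\mid X}(y \mid x^\star) = f_{Y \mid X}(y \mid x)$.

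The hard part is this last step. Since $x$ and $x^\star$ agree on coordinates outside $S$, what needs to be shown is that $f_{Y\mid X}(y \mid x)$ does not depend on the coordinates $x_S$. For each $j \in S \subset \mathcal{H}_0$, Definition~\ref{def:null} gives $Y \independent X_j \mid X_\noj$, which means $f_{Y\mid X}(y \mid x)$ is invariant when $x_j$ alone is varied with all other coordinates held fixed. Combining these coordinate-wise statements into the joint statement that $f_{Y\mid X}(y \mid x)$ is constant in the entire subvector $x_S$ is the subtle point: as flagged in Section~\ref{sec:problem_statement}, it requires the mild regularity conditions (e.g.\ the local/global Markov properties, which hold in particular whenever the joint density of $(X, Y)$ is strictly positive) under which the set $\mathcal{H}_0$ of pairwise nulls coincides with the Markov blanket. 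Granted this regularity, I would iterate the pairwise independences one coordinate at a time to deduce $f_{Y\mid X}(y \mid x) = f_{Y\mid X_{-S}}(y \mid x_{-S})$, and hence $f_{Y\mid X}(y \mid x^\star) = f_{Y\mid X}(y \mid x)$, completing the proof.
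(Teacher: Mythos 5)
Your opening moves coincide with the paper's proof: after an implicit reduction to a single row (the paper makes this explicit via i.i.d.-ness of observations), you factor the joint law using property (2) of Definition~\ref{def:def_MfK}, cancel the $f_{X,\tilde X}$ factor using property (1), and reduce the lemma to the identity $f_{Y\mid X}(y\mid x^\star)=f_{Y\mid X}(y\mid x)$. The gap is in your final step. You assert that passing from the coordinate-wise statements $Y\independent X_j\mid X_\noj$, $j\in S$, to invariance of $f_{Y\mid X}$ in the whole block $x_S$ requires the Markov-blanket regularity conditions (positivity, intersection property, etc.), and you import those as a hypothesis. But Lemma~\ref{lem:exch} is stated, and proved in the paper, with \emph{no} such hypothesis---this is precisely why the paper defines nulls pairwise in Definition~\ref{def:null} rather than via the Markov blanket---so as written your argument proves a strictly weaker statement. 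Worse, your intermediate target $f_{Y\mid X}(y\mid x)=f_{Y\mid X_{\text{-}S}}(y\mid x_{\text{-}S})$, i.e., $Y\independent X_S\mid X_{\text{-}S}$, can genuinely fail for pairwise nulls without regularity: take $X_1=X_2$ and $Y=X_1$, so that $\{1,2\}\subset\mathcal{H}_0$ by Definition~\ref{def:null} while $Y$ is a deterministic function of $(X_1,X_2)$. The lemma nonetheless holds there (exchangeability forces $\tilde X=X$, making the swap trivial), so a correct proof cannot route through the joint independence you aim for.

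The idea you are missing is that the lemma never needs invariance of $f_{Y\mid X}$ over all values of $x_S$; it only needs equality along the discrete path that un-swaps one coordinate of $S$ at a time, and property (1) keeps that entire path inside the support of $F_X$. Concretely, for every $S'\subset S$ we have $(X,\tilde X)_{\swap{(S')}}\eqd(X,\tilde X)$, so the first $p$ coordinates of the swapped vector are distributed exactly as $F_X$; hence each hybrid point $x'$ (knockoff coordinates on $S'$, original coordinates elsewhere) lies almost surely where $p_{Y\mid X}$ is well defined, and one application of the single pairwise condition $Y\independent X_j\mid X_\noj$ gives
\[
p_{Y\mid X_{1:p}}(y\mid \tilde x_j, x'_{\noj}) \;=\; p_{Y\mid X_{\noj}}(y\mid x'_{\noj}) \;=\; p_{Y\mid X_{1:p}}(y\mid x_j, x'_{\noj}),
\]
valid almost everywhere without any positivity or intersection-property assumption. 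Iterating over $j\in S$ yields $Y\mid (X,\tilde X)_{\swap{(S)}}\eqd Y\mid(X,\tilde X)$, which is the paper's argument. Replacing your last step with this one-coordinate-at-a-time un-swapping removes the extraneous hypothesis and proves the lemma in the full generality in which it is stated.
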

        \begin{proof}
          \rev{Since $y$'s marginal distribution is the same on both
            sides of the equation, it is equivalent to show that
            $(\XX, \, y) \, \eqd \, (\XX_{\swap{(S)}}, \, y)$, which
            is how we proceed.}  Assume without loss of generality
          that $S = \{1, 2, \ldots, m\}$.  By row independence, it
          suffices to show that $((X, \tilde X), \, Y)$ $\eqd$
          $((X, \tilde X)_{\swap{(S)}}, \, Y)$, where $X$ (resp.~$Y$)
          is a row of $\bX$ (resp.~$y$).  Furthermore, since
          $(X, \tilde X)$ $\eqd$ $(X, \tilde X)_{\swap{(S)}}$, we only
          need to establish that
		\begin{equation}
			\label{eq:toprove}
			Y \, | \, (X, \tilde X)_{\swap{(S)}} \, \eqd \, Y \, | \, (X, \tilde
			X).
		\end{equation}
		Letting $p_{Y  |  X}(y  |  x)$ be the conditional
		distribution of $Y$, observe that
		\begin{align*}
			p_{Y  |  (X, \tilde X)_{\swap{(S)}}} (y  |  (x, \tilde x)) & = p_{Y  |  (X, \tilde X)} (y  |  (x, \tilde x)_{\swap{(S)}})\\
			& = p_{Y | X} (y  |  x'),
		\end{align*}
		where $x'_i = \tilde{x}_i$ if $i \in S$ and $x'_i = x_i$
		otherwise. The second equality above comes from the fact that $Y$ is
		conditionally independent of $\tilde X$ by property (2) in the
		definition of M\revv{X} knockoffs. Next, since $Y$ and $X_1$ are independent
		conditional on $X_{2:p}$, we have
		\begin{align*}
			p_{Y | X_{1:p}} (y  |  \tilde x_1, x'_{2:p}) & = p_{Y | X_{2:p}} (y  |  x'_{2:p}) \\ & = p_{Y | X_{1:p}} (y  |  x_1, x'_{2:p}).
		\end{align*}
		This shows that
		\[
		Y \, | \, (X, \tilde X)_{\swap{(S)}} \, \eqd \, Y \, | \, (X, \tilde X)_{\swap{(S\setminus\{1\})}}.
		\]
		We can repeat this argument with the second variable, the third, and so on  until $S$ is empty. This proves \eqref{eq:toprove}.
        \end{proof}
	
	\subsection{Feature statistics}
	\label{sec:W}
	
	In order to find the relevant variables, we now compute statistics
	$W_j$ for each $j \in \{1, \ldots, p\}$, a large positive value of
	$W_j$ providing evidence against the hypothesis that $X_j$ is null.
	This statistic depends on the response and the original variables
	but also on the knockoffs; that is,
	\[
	W_j = w_j([\bX, \, \tilde{\bX}], \, y)
	\]
	for some function $w_j$.  As in \citet{RB-EC:2015}, we impose a {\em
		flip-sign property}, which says that swapping the $j$th variable
	with its knockoff has the effect of changing the sign of
	$W_j$. Formally, if $[\bX, \, \tilde{\bX}]_{\swap(S)}$ is the matrix
	obtained by swapping columns in $S$,
	\begin{equation}
		\label{eq:antisymmetry}
		w_j([\bX, \, \tilde{\bX}]_{\swap(S)}, \, y) =
		\begin{cases} w_j([\bX, \, \tilde{\bX}],y), &j\not\in S,\\  - w_j([\bX, \, \tilde{\bX}],y), & j\in S.\end{cases}
	\end{equation}
	In contrast to the aforementioned work, we do not require the sufficiency
	property that $w_j$ depend on $\bX$, $\tilde{\bX}$, and $y$ only
	through $[\bX, \, \tilde{\bX}]^{\top}[\bX, \, \tilde{\bX}]$ and
	$[\bX, \, \tilde{\bX}]^{\top}y$.
	
	At this point, it may help the reader unfamiliar with the knockoff
	framework to think about knockoff statistics $W = (W_1, \ldots, W_p)$
	in two steps: first, consider a statistic $T$ for each original and
	knockoff variable,
	\[
	T \triangleq (Z, \, \tilde Z) = (Z_1, \ldots, Z_p, \tilde Z_1, \ldots,
	\tilde Z_p) = t([\bX, \, \bXko], \, y),
	\]
	with the idea that $Z_j$ (resp.~$\tilde{Z}_j$) measures the importance
	of $X_j$ (resp.~$\tilde{X}_j$). Assume the natural property that
	{switching} a variable with its knockoff simply {switches} the
	components of $T$ in the same
	way, 
	namely, for each $S \subset \{1, \ldots, p\}$,
	\begin{equation}
		\label{eq:obvious}
		(Z, \, \tilde Z)_{\swap{(S)}}  =  t([\bX, \, \bXko]_{\swap{(S)}}, \, y).
	\end{equation}
	Then one can construct a $W_j$ obeying the flip-sign condition
	\eqref{eq:antisymmetry} by setting
	\[
	W_j = f_j(Z_j, \tilde{Z}_j),
	\]
	where $f_j$ is any anti-symmetric function.\footnote{An anti-symmetric
		function $f$ is such that $f(v,u) = - f(u,v)$.}  (Conversely, any
	statistic $W_j$ verifying the flip sign condition can be constructed
	in this fashion.) Adopting this approach, consider a regression
	problem and run the Lasso on the original design augmented with
	knockoffs,
	\begin{equation}
		\label{eq:augmentedLasso}
		\text{min}_{b \in \R^{2p}} \quad \textstyle{\frac12} \|y - [\bX, \,
		\bXk]
		b\|_2^2 + \lambda \|b\|_1
	\end{equation}
	and denote the solution by $\hat{b}(\lambda)$ (the first $p$
	components are the coefficients of the original variables and the last
	$p$ are for the knockoffs). Then the {\em Lasso
		coefficient-difference} (LCD) statistic sets
	$Z_j = |\hat{b}_j(\lambda)|$, $\tilde{Z}_j = |\hat{b}_{j+p}(\lambda)|$, and
	\begin{equation}
		\label{eq:LCD}
		W_j = Z_j - \tilde{Z}_j = |\hat{b}_j(\lambda)| -
		|\hat{b}_{j+p}(\lambda)|.
	\end{equation}
	A large positive value of $W_j$ provides some evidence that the
	distribution of $Y$ depends upon $X_j$, whereas under the null $W_j$
	has a symmetric distribution and, therefore, is equally likely to take on
	positive and negative values, as we shall see next. Before moving on,
	however, please carefully observe that the value of $\lambda$ in
	\eqref{eq:LCD} does not need to be fixed in advance, and can be
	computed from $y$ and $\XX$ in any data-dependent fashion as long as
	permuting the columns of $\bX$ does not change its value; for
	instance, it can be selected by cross-validation.
	
	\begin{lemma}
		\label{lem:key}
		Conditional on \rev{$(|W_1|,\dots,|W_p|)$}, the signs of the null $W_j$'s,
		$j \in \mathcal{H}_0$, are i.i.d.~coin flips.
	\end{lemma}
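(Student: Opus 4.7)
The plan is to combine the null-swap exchangeability from Lemma~\ref{lem:exch} with the flip-sign property \eqref{eq:antisymmetry} to show that the joint distribution of $W = (W_1, \ldots, W_p)$ is invariant under flipping the signs of any subset of null coordinates. Once this is established, the claim about conditional coin flips drops out of a short group-invariance argument.

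First, I would fix an arbitrary $S \subset \mathcal{H}_0$. Lemma~\ref{lem:exch} gives $(\XX, y) \eqd (\XX_{\swap(S)}, y)$. Applying the deterministic vector-valued statistic $w = (w_1, \ldots, w_p)$ to both sides and using the flip-sign property \eqref{eq:antisymmetry} coordinatewise, the $j$th entry $W_j = w_j(\XX, y)$ gets mapped to $-W_j$ when $j \in S$ and is left unchanged when $j \notin S$. Writing $\epsilon^S_j = -1$ if $j \in S$ and $\epsilon^S_j = +1$ otherwise, this yields
\[
(W_1, \ldots, W_p) \,\eqd\, (\epsilon^S_1 W_1, \ldots, \epsilon^S_p W_p), \qquad \text{for every } S \subset \mathcal{H}_0.
\]

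To finish, let $\sigma_j := \sign(W_j)$ (with any fixed convention at zero). Since $|W_j|$ is invariant under $W_j \mapsto \epsilon^S_j W_j$, the display above upgrades to the joint identity $\bigl((\sigma_1, |W_1|), \ldots, (\sigma_p, |W_p|)\bigr) \eqd \bigl((\epsilon^S_1 \sigma_1, |W_1|), \ldots, (\epsilon^S_p \sigma_p, |W_p|)\bigr)$. Passing to the regular conditional distribution given $|W|$, the vector of null signs $(\sigma_j)_{j \in \mathcal{H}_0}$ is invariant, almost surely in $|W|$, under multiplication by every element of the group $\{\pm 1\}^{\mathcal{H}_0}$. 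A probability measure on a finite group that is invariant under the action of the group on itself by translation must be the uniform measure, and the uniform measure on $\{\pm 1\}^{\mathcal{H}_0}$ is precisely the law of independent Rademacher signs. This is the stated conclusion.

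The real content sits in the first step: it is the place where the two defining ingredients of Section~\ref{sec:MF_knockoffs} --- pairwise exchangeability of nulls with their knockoffs, and the coordinatewise antisymmetry of $w_j$ under the swap --- are brought together. The remainder is a routine symmetry-to-uniformity argument on a finite abelian group, and the only bookkeeping concern is a measurable choice of $\sign$ at zero, which is handled by any deterministic convention.
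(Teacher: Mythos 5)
Your proof is correct and follows essentially the same route as the paper: the paper's proof likewise combines Lemma~\ref{lem:exch} with the flip-sign property \eqref{eq:antisymmetry} to get $W \eqd \epsilon \odot W$ (drawing a random sign vector $\epsilon$ and setting $S = \{j : \epsilon_j = -1\}$, which is equivalent to your quantification over fixed $S \subset \mathcal{H}_0$), and your explicit group-invariance-to-uniformity step merely spells out what the paper leaves implicit after \eqref{eq:key}. One small caveat: your claim that ``any fixed convention at zero'' works is not quite right, since $\sign(\epsilon^S_j W_j) = \epsilon^S_j \sign(W_j)$ requires the odd convention $\sign(0)=0$ (or restricting attention to coordinates with $|W_j|>0$, as the paper implicitly does via its threshold set $\mathcal{W}_+ = \{|W_j| : |W_j| > 0\}$); with a convention like $\sign(0)=+1$ the upgrade to the joint identity for $(\sigma, |W|)$ fails on the event $\{W_j = 0\}$.
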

	\begin{proof}
		Let $\epsilon = (\epsilon_1, \ldots, \epsilon_p)$ be a sequence of
		independent random variables such that $\epsilon_j = \pm 1$ with
		probability $1/2$ if $j \in \mathcal{H}_0$, and $\epsilon_j = 1$
		otherwise.  To prove the claim, it suffices to establish that
		\begin{equation}
			\label{eq:key}
			W \, \eqd  \, \epsilon \odot W,
		\end{equation}
		where $\odot$ denotes pointwise multiplication, i.e.
		$\epsilon \odot W = (\epsilon_1 W_1, \ldots, \epsilon_p W_p)$.  Now,
		take $\epsilon$ as above and put
		$S = \{j : \epsilon_j = -1\} \subset \mathcal{H}_0$.  Consider
		swapping variables in $S$:
		\[
		W_{\swap(S)} \triangleq w (\XX_{\swap(S)} , y).
		\]
		On the one hand, it follows from the flip-sign property that
		$W_{\swap(S)} = \epsilon \odot W$. On the other hand, Lemma
		\ref{lem:exch} implies that $W_{\swap(S)} \, \eqd \, W$ since
		$S \subset \mathcal{H}_0$. These last two properties give \eqref{eq:key}.
	\end{proof}
        \rev{In fact, since the pairwise exchangeability property of $\XX$
          holds conditionally on $y$ according to Lemma~\ref{lem:exch},
          the coin-flipping property also holds conditionally on $y$.}
	
	
	\subsection{FDR control}
	\label{sec:knockoffs}
	
	From now on, our methodology follows that of \citet{RB-EC:2015} and we
	simply rehearse the main ingredients while referring to their paper
	for additional insights. It follows from Lemma \ref{lem:key} that the
	null statistics $W_j$ are symmetric and that for any fixed threshold
	$t > 0$,
	\[
	\#\{j:W_j\leq -t\} \geq \#\{\text{null }j:W_j\leq -t\}\eqd
	\#\{\text{null }j:W_j\geq t\}.
	\]
	Imagine then selecting those variables such that $W_j$ is sufficiently
	large, e.g.,~$W_j\ge t$, then the false discovery proportion (FDP)
	\begin{equation}
		\label{eq:fdp}
		\fdp(t) = \frac{\# \{\text{null }j:W_j\geq t\}}{\# \{ j: W_j\geq t\}}
	\end{equation}
	can be estimated via the statistic
	\[
	\widehat{\fdp}(t) = \frac{\#\{j:W_j\leq -t\}}{\# \{ j: W_j\geq t\}}
	\]
	since the numerator is an upward-biased estimate of the unknown
	numerator in \eqref{eq:fdp}. The idea of the knockoff procedure is to
	choose a data-dependent threshold as liberal as possible while having
	an estimate of the FDP under control. The theorem below shows that
	estimates of the FDR process can be inverted to give tight FDR
	control.
	\begin{theorem}
		\label{thm:main}
		Choose a threshold $\tau>0$ by setting\footnote{When we write
			$\min\{t > 0 : \ldots\}$, we abuse notation since we actually mean
			$\min\{t \in \mathcal{W}_+ : \ldots\}$, where
			$\mathcal{W}_+ = \{|W_j|: |W_j| > 0\}$.}
		\begin{equation}\label{eqn:knockoff_stoppingtime}
			\tau = \min\left\{t > 0 : \frac{\#\{j: W_j \leq - t\}}{\#\{j: W_j\geq
				t\}} \leq  q \right\} \quad (\textbf{Knockoffs}),
		\end{equation}
		where $q$ is the target FDR level (or $\tau = +\infty$ if the set above
		is empty).  Then the procedure selecting the variables
		\[\Sh = \{j : W_j \geq \tau\}\;\] controls the
		modified FDR defined as
		\[
		\mfdr = \mathbb{E} \left[ \frac{|\{j \in \hat S \cap \mathcal{H}_0\}|}{|\hat
			S| + 1/q} \right] \le q.
		\]
		The slightly more conservative procedure, given by incrementing the
		number of negatives by one,
		\begin{equation}
			\label{eqn:knockoffplus_stoppingtime}
			\tau_+ = \min\left\{t > 0 : \frac{1+\#\{j: W_j \leq - t\}}{\#\{j:
				W_j\geq t\}} \leq q \right\}  \quad (\textbf{Knockoffs+})
		\end{equation}
		and setting $\Sh = \{j: W_j\geq \tau_+\}$, controls the usual FDR,
		\[
		\mathbb{E} \left[ \frac{|\{j \in \hat S \cap \mathcal{H}_0\}|}{|\hat
			S| \vee 1} \right] \le q.
		\]
		These results are non-asymptotic and hold no matter the dependence
		between the response and the covariates\rev{---in
                  fact, they hold \emph{conditionally} on the response $y$}.
	\end{theorem}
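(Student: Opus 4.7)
The plan is to follow the martingale argument of \citet{RB-EC:2015}, with Lemma~\ref{lem:key} supplying the only probabilistic input needed. The starting point is the decomposition
\[
\frac{V^+(\tau_+)}{|\hat S|\vee 1}
\;=\;
\frac{V^+(\tau_+)}{1+V^-(\tau_+)}\cdot\frac{1+V^-(\tau_+)}{|\hat S|\vee 1},
\]
where $V^\pm(t)=\#\{j\in\mathcal{H}_0:\pm W_j\ge t\}$. The second factor is at most $q$ for free: the stopping criterion \eqref{eqn:knockoffplus_stoppingtime} forces $(1+\#\{j:W_j\le -\tau_+\})/|\hat S|\le q$ whenever $\tau_+<\infty$, and $V^-(\tau_+)$ is bounded above by the unrestricted negative count; the case $\tau_+=\infty$ contributes $0$ to FDP since then $|\hat S|=0$. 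So the whole game reduces to showing
\[
\mathbb{E}\!\left[\frac{V^+(\tau_+)}{1+V^-(\tau_+)}\right]\le 1.
\]

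To prove this, I condition on $|W|$ and the nonnull signs; by Lemma~\ref{lem:key}, the null signs $\epsilon_1,\ldots,\epsilon_J$ (ordered by decreasing $|W_j|$, with $J=|\mathcal{H}_0|$) are then i.i.d.\ fair $\pm 1$. Writing $V^\pm_k=\#\{i\le k:\epsilon_i=\pm 1\}$, we have $V^\pm(t)=V^\pm_k$ on the interval between the $(k{+}1)$st and $k$th largest null $|W_j|$, and $\tau_+=|W_{(k^\star)}|$ for some index $k^\star$. The conceptual move is to consider the \emph{backward} filtration
\[
\mathcal{F}_k=\sigma\bigl(|W|,\ \text{nonnull signs},\ V^+_k,V^-_k,\ \epsilon_{k+1},\ldots,\epsilon_J\bigr),
\]
which grows as $k$ decreases from $J$ to $1$. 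Because $k^\star$ is the \emph{largest} $k$ at which the stopping criterion fires, both ``criterion met at $k$'' and ``criterion fails at every $k'>k$'' are $\mathcal{F}_k$-measurable, so $k^\star$ is a bona fide stopping time in this backward filtration.

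The key technical claim is that $M_k:=V^+_k/(1+V^-_k)$ is a martingale with respect to $(\mathcal{F}_k)$ running backward. This reduces to a one-line check: conditional on $(V^+_k,V^-_k)=(a,b)$, exchangeability of i.i.d.\ fair signs inside positions $1,\ldots,k$ makes $\epsilon_k=+1$ with probability $a/k$, and a short algebraic identity then gives $\mathbb{E}[M_{k-1}\mid\mathcal{F}_k]=M_k$. Optional stopping yields
\[
\mathbb{E}\!\left[\tfrac{V^+(\tau_+)}{1+V^-(\tau_+)}\right]
= \mathbb{E}[M_{k^\star}] = \mathbb{E}[M_J]
= \mathbb{E}\!\left[\tfrac{V^+_J}{1+J-V^+_J}\right]\le 1,
\]
where the last inequality follows from an elementary computation with $V^+_J\sim\mathrm{Bin}(J,1/2)$. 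Combining with the decomposition above gives $\fdr\le q$.

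The Knockoffs (non-``$+$'') version is handled in exactly the same way: the stopping rule \eqref{eqn:knockoff_stoppingtime} makes the second factor $(1+V^-(\tau))/(|\hat S|+1/q)\le q$ (absorbing the missing ``$+1$'' into the enlarged mFDR denominator), after which the same martingale bound finishes the job. The main obstacle here is conceptual rather than computational: identifying the correct backward filtration so that $\tau_+$ is a stopping time in it. Once this is in place, verification of the martingale property is a short computation and everything else is algebra.
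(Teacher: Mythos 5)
Your proposal is correct and takes essentially the same approach as the paper: the paper omits the proof, stating it is identical to that of Theorems 1 and 2 of \citet{RB-EC:2015} once Lemma~\ref{lem:key} supplies the i.i.d.\ null signs, and your two-factor FDP decomposition, backward sign-count filtration, martingale $M_k = V^+_k/(1+V^-_k)$, and optional stopping at $k^\star$ are precisely that argument. The only difference is cosmetic---you verify the stopping-time measurability and the martingale identity explicitly rather than by citation.
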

	The proof is the same as that of Theorems 1 and 2 in
        \citet{RB-EC:2015}---and, therefore, omitted---since all we
        need is that the null statistics have signs distributed as
        i.i.d.~coin flips \rev{(even conditionally on $y$)}. Note that
        Theorem~\ref{thm:main} only tells one side of the story: Type
        I error control; the other very important side is power, which
        leads us to spend most of the remainder of the paper
        considering how best to construct knockoff variables and
        statistics.

	\subsection{Constructing model-\revv{X} knockoffs}
	\label{sec:construction}
	
	\subsubsection{Exact constructions}
	
	We have seen in Section~\ref{sec:def_MFk} one way of constructing
	{M\revv{X}} knockoffs in the case where the covariates are Gaussian.
	How should we proceed for non-Gaussian data? In this regard, the
	characterization below may be useful.
	\begin{proposition}
		\label{prop:characterization}
		The random variables $(\tilde{X}_1, \ldots, \tilde{X}_p)$ are
		model-\revv{X} knockoffs for $(X_1, \ldots, X_p)$ if and only if for any
		$j \in \{1, \ldots, p\}$, the pair $(X_j, \tilde{X}_j)$ is
		exchangeable conditional on all the other variables and their
		knockoffs (and, of course, $\tilde X \independent Y \, | \, X$).
	\end{proposition}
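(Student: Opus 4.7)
The two directions have very different flavors.

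For the forward ($\Rightarrow$) direction, the statement is essentially immediate from Definition~\ref{def:def_MfK}. Apply the swap property \eqref{eq:randomko} to the singleton $S=\{j\}$: this yields
\[
(X_j,\tilde X_j, X_\noj,\tilde X_\noj)\,\eqd\,(\tilde X_j, X_j, X_\noj,\tilde X_\noj),
\]
which is exactly pairwise exchangeability of $(X_j,\tilde X_j)$ conditional on $(X_\noj,\tilde X_\noj)$ (the marginals of $(X_\noj,\tilde X_\noj)$ agree on both sides, and the conditional distribution must therefore be exchangeable almost surely).

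For the reverse ($\Leftarrow$) direction, I need to upgrade the pairwise property to the full joint swap property \eqref{eq:randomko} for arbitrary $S\subseteq\{1,\ldots,p\}$. The plan is induction on $|S|$, decomposing an arbitrary swap as a composition of single-coordinate swaps: writing $S=\{j_1,\ldots,j_k\}$, the map $(X,\tilde X)\mapsto(X,\tilde X)_{\swap(S)}$ equals the composition of the single-coordinate swaps $\sigma_{\{j_1\}}\circ\cdots\circ\sigma_{\{j_k\}}$ (they commute because they act on disjoint coordinates). So it suffices to establish the following key lemma: if the pair $(X_j,\tilde X_j)$ is exchangeable conditional on $(X_\noj,\tilde X_\noj)$, then the single-coordinate swap $\sigma_{\{j\}}$ preserves the joint distribution of $(X,\tilde X)$.

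The key lemma is a tower-property computation: for any measurable $A\subseteq\R^{2p}$,
\[
\PP{(X,\tilde X)_{\swap(\{j\})}\in A}=\EE{\PPst{(X,\tilde X)_{\swap(\{j\})}\in A}{X_\noj,\tilde X_\noj}},
\]
and the inner conditional probability equals $\PPst{(X,\tilde X)\in A}{X_\noj,\tilde X_\noj}$ by the assumed conditional exchangeability of $(X_j,\tilde X_j)$ given $(X_\noj,\tilde X_\noj)$; taking the outer expectation recovers $\PP{(X,\tilde X)\in A}$. With the key lemma in hand, the inductive step is routine: assuming the swap property holds for every subset of size $<|S|$, pick any $j\in S$ and write the distribution of $(X,\tilde X)_{\swap(S)}$ as the distribution of $(X,\tilde X)_{\swap(S\setminus\{j\})}$ pushed forward by $\sigma_{\{j\}}$; the inductive hypothesis handles the former, and the key lemma handles the pushforward.

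The condition $\tilde X\independent Y\,|\,X$ in property (2) of Definition~\ref{def:def_MfK} does not enter the equivalence being proved; it is included parenthetically in the statement because it is a separate requirement of being MF knockoffs that is orthogonal to the joint-distribution characterization. The only real subtlety I anticipate is a bookkeeping one, namely being careful that ``conditional on all the other variables and their knockoffs'' refers to the same index set $\noj$ on both sides of the exchangeability, so that the single-coordinate swap really does act only on the $j$-th pair while leaving the conditioning $\sigma$-algebra fixed; this is precisely what makes the conditional expectation argument in the key lemma go through cleanly.
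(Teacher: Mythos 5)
Your proof is correct. Note that the paper itself omits the proof of Proposition~\ref{prop:characterization}, stating only that it ``consists of simple manipulations of the definition,'' so there is no paper argument to compare against; your write-up supplies exactly the missing manipulations. The forward direction (singleton swap $S=\{j\}$ plus the fact that equality of joint laws determines conditional laws almost surely) is right, and the key lemma's tower-property computation is sound: conditional exchangeability of $(X_j,\tilde X_j)$ given $(X_\noj,\tilde X_\noj)$ makes the inner conditional probabilities of $\{(X,\tilde X)_{\swap(\{j\})}\in A\}$ and $\{(X,\tilde X)\in A\}$ agree a.s., and the outer expectation finishes it.

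The only step worth tightening is the inductive one, where you apply the key lemma to the already-swapped vector $(X,\tilde X)_{\swap(S\setminus\{j\})}$ rather than to $(X,\tilde X)$ itself. The lemma as you state it assumes conditional exchangeability of the $j$-th pair of the vector it acts on, so you should say explicitly why the swapped vector inherits this hypothesis: both the hypothesis and the conclusion of the key lemma depend only on the joint distribution of the $2p$-dimensional vector, and by the inductive hypothesis $(X,\tilde X)_{\swap(S\setminus\{j\})}\eqd(X,\tilde X)$, whose $j$-th pair is conditionally exchangeable by assumption. With that one sentence added (or equivalently, restating the key lemma as a statement about an arbitrary law on $\R^{2p}$ with conditionally exchangeable $j$-th pair), the induction is airtight, since the single-coordinate swaps indeed commute as they act on disjoint coordinate pairs. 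Your parenthetical remark about property (2) of Definition~\ref{def:def_MfK} is also correct: $\tilde X\independent Y\,|\,X$ appears verbatim on both sides of the equivalence and plays no role in the distributional characterization.
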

	The proof consists of simple manipulations of the definition and is,
	therefore, omitted. Our problem can thus also be posed as constructing pairs that are
	conditionally exchangeable. If the components of the vector $X$ are
	independent, then any independent copy of $X$ would work; that is, any
	vector $\tilde X$ independently sampled from the same joint
	distribution as $X$ would work.  With dependent coordinates, we may
	proceed as follows:
	\begin{algorithm}[H]
		\caption{Sequential Conditional Independent Pairs.\label{alg:sequential}}
		\SetAlgoLined\DontPrintSemicolon
		{$j = 1$ \;
			\While{$j  \leq p$} {
				Sample $\tilde{X}_j$ from $\mathcal{L}(X_j \, | \, X_\noj, \, \tilde{X}_{1:j-1})$\; \\
				$j = j + 1$\;
			}
		}
	\end{algorithm}
	Above, $\mathcal{L}(X_j \, | \, X_\noj, \, \tilde{X}_{1:j-1})$ is the
	conditional distribution of $X_j$ given $(X_\noj, \tilde{X}_{1:j-1})$.
	When $\rev{p} = 3$, this would work as follows: sample $\tilde{X}_1$ from
	$\mathcal{L}(X_1 \, | \, X_{2:3})$. Once this is done,
	$\mathcal{L}(X_{1:3}, \tilde{X}_1)$ is available and we, therefore,
	know $\mathcal{L}(X_2 \, | \, X_1, X_3, \tilde{X}_1)$.  Hence, we can
	sample $\tilde{X}_2$ from this distribution. Continuing,
	$\mathcal{L}(X_{1:3}, \tilde{X}_{1:2})$ becomes known and we can
	sample $\tilde{X}_3$ from
	$\mathcal{L}(X_3 \, | \, X_{1:2}, \tilde{X}_{1:2})$.
	
	It is not immediately clear why Algorithm \ref{alg:sequential} yields
	a sequence of random variables obeying the exchangeability property
	\eqref{eq:randomko}, and we prove this fact in
	Appendix~\ref{app:scip}.
	There is, of course, nothing special about the ordering in which
	knockoffs are created and equally valid constructions may be
	obtained by looping through an arbitrary ordering of the
	variables. For example, in a data analysis application where we
	would need to build a knockoff copy for each row of the design,
	independent (random) orderings may be used.
	
	To have power or, equivalently, to have a low Type II error rate, it
	is intuitive that we would like to have original features $X_j$ and
	their knockoff companions $\tilde{X}_j$ to be as ``independent'' as
	possible.
	
	We do not mean to imply that running Algorithm \ref{alg:sequential} is
	a simple matter. In fact, it may prove rather complicated since we
	would have to recompute the conditional distribution at each step;
	this problem is left for future research. Instead, in this paper we
	shall work with approximate {M\revv{X}} knockoffs and will demonstrate
	empirically that for models of interest, such constructions yield FDR
	control.
	
	\subsubsection{Approximate constructions: second-order model-\revv{X}
		knockoffs}
	\label{sec:asdp}
	Rather than asking that $(X, \, \tilde{X})_{\swap{(S)}}$ and
	$(X, \, \tilde X)$ have the same distribution for any subset $S$, we
	can ask that they have the same first two moments, i.e., the same mean
	and covariance. Equality of means is a simple matter. As far as the
	covariances are concerned, equality is equivalent to
	\begin{equation}
		\label{eq:ko_cov}
		\operatorname{cov}(X, \, \tilde X) = \bG, \quad \text{where} \quad  \bG = \begin{bmatrix} \bS & \bS - \diag\{s\}\\
			\bS - \diag\{s\} & \bS \end{bmatrix}.
	\end{equation}
	We, of course, recognize the same form as in \eqref{eq:ko_Gaussian}
	where the parameter $s$ is chosen to yield a positive semidefinite
	covariance matrix. (When $(X, \tilde X)$ is Gaussian, a matching of
	the first two moments implies a matching of the joint distributions so
	that we have an exact construction.)  Furthermore,
	Appendix~\ref{app:detko} shows that the same problem was already solved
	in \citet{RB-EC:2015}, as the same constraint on $s$ applies but with
	the empirical covariance replacing the true covariance. This means
	that the same two constructions proposed in \citet{RB-EC:2015} are
	just as applicable to {\em second-order model-\revv{X} knockoffs}.
	
	For the remainder of this section, we will assume the covariates have
	each been translated and rescaled to have mean zero and variance
	one. To review, the \emph{equicorrelated} construction uses
	\[s^{\text{EQ}}_j = 2\lambda_{\text{min}}(\bS)\wedge 1 \text{ for
		all }j,\]
	which minimizes the correlation between variable-knockoff pairs
	subject to the constraint that all such pairs must have the same
	correlation. The \emph{semidefinite program (SDP)} construction solves
	the convex program
	\begin{equation}
		\label{eq:sdp}
		\begin{array}{rl}
			\text{minimize} & \quad \sum_{j}|1-s^{\text{SDP}}_j| \\
			\text{subject to} & \quad s^{\text{SDP}}_j \ge 0 \\
			& \quad \text{diag}\{s^{\text{SDP}}\}\preceq 2\bS, \end{array}
	\end{equation}
	which minimizes the sum of absolute values of variable-knockoff
	correlations among all suitable $s$.
	
	In applying these constructions
	to problems with large $p$, we run into some new difficulties:
	\begin{itemize}
		\item Excepting very specially-structured matrices like the identity,
		$\lambda_{\text{min}}(\bS)$ tends to be extremely small as $p$
		gets large. The result is that constructing equicorrelated
		knockoffs in high dimensions, while fairly computationally easy,
		will result in very low power, since all the original variables will
		be nearly indistinguishable from their knockoff counterparts.
		\item For large $p$, \eqref{eq:sdp}, while convex, is prohibitively
		computationally expensive. However, if it could be computed, it
		would produce much larger $s_j$'s than the equicorrelated
		construction and thus be considerably more powerful.
	\end{itemize}
	To address these difficulties, we first generalize the two knockoff
	constructions by the following two-step procedure, which we call
	the approximate semidefinite program (ASDP) construction:
	\begin{description}
		\item[Step 1.] Choose an approximation $\bS_{\text{approx}}$ of
		$\bS$ and solve:
		\begin{equation}
			\label{eq:asdp}
			\begin{array}{rl}
				\text{minimize} & \quad \sum_{j}|1-\hat{s}_j| \\
				\text{subject to} & \quad \hat{s}_j \ge 0 \\
				& \quad \text{diag}\{\hat{s}\}\preceq 2\bS_{\text{approx}}. \end{array}
		\end{equation}
		\item[Step 2.] Solve:
		\begin{equation}
			\begin{array}{rl}
				\text{maximize} & \quad \gamma \\
				\text{subject to} & \quad \text{diag}\{\gamma\hat{s}\}\preceq 2\bS, \end{array}
		\end{equation}
		and set $s^{\text{ASDP}} = \gamma \hat{s}$. Note this problem can be solved quickly
		by, e.g., bisection search over $\gamma\in [0,1]$.
	\end{description}
	ASDP with $\bS_{\text{approx}} = \bs I$ trivially gives $\hat{s}_j=1$
	and $\gamma = 2\lambda_{\text{min}}(\bS)\wedge 1$, reproducing the
	equicorrelated construction. ASDP with $\bS_{\text{approx}}=\bS$
	clearly gives $\hat{s}_j=s^{\text{SDP}}$ and $\gamma=1$, reproducing
	the SDP construction. Note that the ASDP step 2 is always fast, so the
	speed of the equicorrelated construction comes largely because the
	problem \emph{separates} into $p$ computationally independent SDP
	subproblems of
	$\text{min}|1-\hat{s}_j|\text{ s.t. } 0\le \hat{s}_j\le 2$. However,
	power is lost due to the very na\"{i}ve approximation
	$\bS_{\text{approx}}= \bs I$ which results in a very small $\gamma$.
	
	In general, we can choose $\bS_{\text{approx}}$ to be an
	$m$-block-diagonal approximation of $\bS$, so that the ASDP from
	Step 1 separates into $m$ smaller, more computationally tractable, and
	trivially parallelizable SDP subproblems. If the approximation is
	fairly accurate, we may also find that $\gamma$ remains large, so that
	the knockoffs are nearly as powerful as if we had used the SDP
	construction. We demonstrate the ASDP construction in
	Section~\ref{sec:realdata} when we analyze the Crohn's disease data.
	
	\section{The conditional randomization test}
	\label{sec:CRT}
	
	This section presents an alternative approach to the controlled
	variable selection problem.  To describe our approach, it may be best
	to consider an example. Assume we are in a regression setting and let
	$\hat{b}_j(\lambda)$ be the value of the Lasso estimate of the $j$th
	regression coefficient. We would like to use the statistic
	$\hat{b}_j(\lambda)$ to test whether $Y$ is conditionally independent
	of $X_j$ since large values of $|\hat{b}_j(\lambda)|$ provide evidence
	against the null. To construct a test, however, we would need to know
	the sampling distribution of $\hat{b}_j(\lambda)$ under the null
	hypothesis that $Y$ and $X_j$ are conditionally independent, and it is
	quite unclear how one would obtain such knowledge.
	
	\subsection{The test}
	
	A way out is to sample the covariate $X_j$ conditional on all the
	other covariates (but not the response), where by ``sample'' we
	explicitly mean to draw a new sample from the conditional distribution
	of $X_j\,|\,X_{\text{-}j}$ using a random number generator. We then
	compute the Lasso statistic $\hat{b}^*_j(\lambda)$, where the
	$^*$ superscript indicates that the statistic is computed from the artificially
	sampled value of the covariate $X_j$. Now, under the null hypothesis
	of conditional independence between $Y$ and $X_j$, it happens that
	$\hat{b}^*_j(\lambda)$ and $\hat{b}_j(\lambda)$ are identically
	distributed and that, furthermore, this statement holds true
	conditional on $Y$ and all the other covariates. This claim is proved
	in Lemma \ref{lem:CRT} below. A consequence of this is that by
	simulating a covariate conditional on the others, we can sample at
	will from the conditional distribution of any test statistic and compute
	p-values as described in Algorithm \ref{alg:CRT}.
	
	\begin{algorithm*}[t]
		\caption{Conditional Randomization Test.\label{alg:CRTa}}
		\label{alg:CRT}
		\begin{tabular}{>{\raggedright}p{1\textwidth}}
			\textbf{Input}:
			A set of $n$ independent samples $(X_{i1}, \ldots, X_{ip}, Y_i)_{1
				\le i \le n}$ assembled in a data
			matrix $\bX$ and a response vector $y$, a feature importance
			statistic
			$T_j(\bX, y)$  to test whether $X_j$ and $Y$
			are conditionally independent.
			\vspace{0.7em}
			\tabularnewline
			\textbf{Loop: for $k=1,2,\ldots,K$ do} \\
			Create a new data matrix $\bX^{(k)}$ by simulating the $j$th
			column of $\bX$ from $\mathcal{L}(X_j|X_{\text{-}j})$ (and keeping
			the remaining columns the same). That is,
			${X}_{ij}^{(k)}$ is sampled from the conditional
			distribution $X_{ij} \, | \, \{X_{i1}, \ldots, X_{ip}\}\setminus \{X_{ij}\}$, and is (conditionally)
			independent of $X_{ij}$.  \vspace{0.7em}
			\tabularnewline
			\textbf{Output:} A (one-sided) p-value
			\[
			P_j = \frac{1}{K+1} \left[ 1 + \sum_{k = 1}^K \one{T_j(\bX^{(k)},y) \ge
				T_j({\bX} ,y)} \right].
			\]
			As with permutation
			tests, adding one in both the numerator and the denominator
			makes sure that the
			null  p-values are stochastically larger than uniform variables.
			\tabularnewline
		\end{tabular}
	\end{algorithm*}
	
	\begin{lemma}
		\label{lem:CRT} Let $(Z_1, Z_2, Y)$ be a triple of random variables, and
		construct another triple $(Z_1^*, Z_2, Y)$ as
		\[
		Z_1^* \, | \, (Z_2,Y) \quad \eqd \quad Z_1 \, | \, Z_2.
		\]
		Then under the null hypothesis $Y \independent Z_1 \, | \, Z_2$, any test
		statistic $T = t(Z_1, Z_2, Y)$ obeys
		\[
		T \, | \, (Z_2, Y) \quad \eqd \quad T^* \, | \, (Z_2,Y),
		\]
		where $T^* = t(Z_1^*, Z_2, Y)$.
	\end{lemma}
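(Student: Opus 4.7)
The plan is to establish equality in distribution of $Z_1$ and $Z_1^*$ conditional on $(Z_2, Y)$, and then transport it through the deterministic map $z \mapsto t(z, Z_2, Y)$. Since the conclusion is a statement about the conditional law given $(Z_2, Y)$, once we fix $(Z_2, Y) = (z_2, y)$, both $T$ and $T^*$ become measurable functions of $Z_1$ and $Z_1^*$ respectively, applied via the same map $z_1 \mapsto t(z_1, z_2, y)$. So the whole proof reduces to checking that $Z_1$ and $Z_1^*$ have the same conditional law given $(Z_2, Y)$.

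For this, I would combine two ingredients. First, by the defining construction of $Z_1^*$, we have
\[
Z_1^* \mid (Z_2, Y) \; \eqd \; Z_1 \mid Z_2,
\]
which is given. Second, under the null hypothesis $Y \independent Z_1 \mid Z_2$, the conditional distribution of $Z_1$ given $(Z_2, Y)$ does not depend on $Y$; that is,
\[
Z_1 \mid (Z_2, Y) \; \eqd \; Z_1 \mid Z_2.
\]
Chaining these two identities yields $Z_1 \mid (Z_2, Y) \eqd Z_1^* \mid (Z_2, Y)$.

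To finish, I would note that $T = t(Z_1, Z_2, Y)$ and $T^* = t(Z_1^*, Z_2, Y)$ are obtained by applying the \emph{same} measurable function to random variables that agree in distribution conditional on $(Z_2, Y)$, with the other two arguments $(Z_2, Y)$ being measurable with respect to the conditioning $\sigma$-algebra. Hence $T \mid (Z_2, Y) \eqd T^* \mid (Z_2, Y)$, as claimed.

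There is no real obstacle here; the only subtlety is being careful that ``conditional on $(Z_2, Y)$'' is used consistently in both the hypothesis and the conclusion, and that the null hypothesis is invoked exactly to drop $Y$ from the conditioning on $Z_1$. One could write the argument explicitly in terms of conditional densities (or regular conditional distributions), but the cleanest presentation is the two-line equivalence above.
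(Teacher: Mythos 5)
Your proof is correct and follows essentially the same route as the paper: both reduce the claim to showing $Z_1 \mid (Z_2,Y) \eqd Z_1^* \mid (Z_2,Y)$, and both obtain this by chaining the defining construction of $Z_1^*$ with the fact that the null hypothesis allows dropping $Y$ from the conditioning. Your extra remark about transporting the equality through the common measurable map $z_1 \mapsto t(z_1, z_2, y)$ simply makes explicit a step the paper leaves implicit.
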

	\begin{proof}
		To prove the claim, it suffices to show that $Z_1$ and $Z_1^*$ have
		the same distribution conditionally on $(Z_2, Y)$. This follows from
		\begin{align*}
			Z_1^* \, | \, (Y, Z_2) \quad \eqd \quad Z_1 \, | \, Z_2 \quad \eqd \quad Z_1 \, |
			\, (Z_2, Y).
		\end{align*}
		The first equality comes from the definition of $Z_1^*$ while the
		second follows from the conditional independence of $Y$ and $Z_1$,
		which holds under the null.
	\end{proof}
	The consequence of Lemma \ref{lem:CRT} is that we can compute the 95\%
	percentile, say, of the conditional distribution of $T^*$ denoted
	by $t^*_{0.95}(Z_2,Y)$. Then by definition, under the null,
	\[
	\mathbb{P}(T > t^*_{0.95}(Z_2,Y) \, | \, (Z_2,Y)) \le 0.05.
	\]
	Since this equality holds conditionally, it also holds marginally.
	
	\subsection{Literature review}
	
	The conditional randomization test is most closely related to the propensity score
	\citep{PR-DR:1983}, which also uses the conditional distribution
	$X_j \,| \,X_{\text{-}j}$ to perform inference on the conditional
	relationship between $Y$ and $X_j$ given $X_{\text{-}j}$. However, propensity scores require
	$X_j$ be binary, and the propensity score itself is normally
	estimated, although \citet{PR:1984} shows that when all the covariates
	jointly take a small number of discrete values, propensity score
	analysis can be done exactly. \citet{GD--KM-KZ-BS:2014} also rely on
	the data containing repeated observations of $X_{\text{-}j}$ so that
	certain observations can be permuted nonparametrically while
	maintaining the null distribution. In fact, the exact term
	``conditional randomization test'' has also been used in randomized
	controlled experiments to test for independence of $Y$ and $X_j$
	conditioned more generally on some function of $X_{\text{-}j}$ (such
	as a measure of imbalance in $X_{\text{-}j}$ if $X_j$ is binary),
	again relying on discreteness of the function so that there exist
	permutations of $X_j$ which leave the function value
	unchanged. Despite the similar name, our conditional
	randomization test is quite distinct from these, as it does not rely
	on discreteness or experimental control in any of the covariates.
	
	Another line of work exists within the linear model regime, whereby
	the null (without $X_j$) model is estimated and then the empirical
	residuals are permuted to produce a null distribution for $Y$
	\citep{DF-DL:1983}. Because this approach is only exact when the
	empirical residuals match the true residuals, it explicitly relies on
	a parametric model for $Y\,|\, X_{\text{-}j}$, as well as the ability to
	estimate it quite accurately.

	\subsection{Comparisons with knockoffs}
	\label{sec:crtko}
	
	One major limitation of the conditional randomization method is its
	computational cost. It requires computing randomization p-values for
	many covariates and to a high-enough resolution for
	multiple-comparisons correction. Clearly, this requires samples in
	the extreme tail of the p-value distribution.  This means computing a
	very large number of feature importance statistics $T_j$, each of
	which can be expensive since for reasons outlined in the drawbacks
	associated with marginal testing, powerful $T_j$'s will take into
	account the full dimensionality of the model, e.g., absolute value of
	the Lasso-estimated coefficient.  In fact, the number of computations
	of $T_j$, tallied over all $j$, required by the conditional
	randomization method is $\Omega(p)$.\footnote{$a(N) \in \Omega(b(N))$
		means that there exist $N_0$ and $C>0$ such that $a(N) \ge Cb(N)$ for
		$N\ge N_0$.}  To see this, suppose for simplicity that all $R$
	rejected p-values take on the value of half the BHq cutoff equal to
	$\tau = qR/p$, and all we need to do is upper-bound them below
	$\tau$. This means there are $R$ p-values $P_j$ for which plugging
	$K = \infty$ into Algorithm \ref{alg:CRT} would yield $P_j = \tau/2$.
	After $K<\infty$ samples, the approximate p-value (ignoring the {$+1$}
	correction) is distributed as $K^{-1} \, \operatorname{Bin}(K,P_j)$.
	We could then use this binomial count to construct a confidence
	interval for $P_j$. A simple calculation shows that to be reasonably
	confident that $P_j \le \tau$, $K$ must be on the order of
	at least $1/\tau$. Since there are $R$ such p-values, this justifies
	the claim.

	{Note that for knockoffs, the analogous computation of $T$ need only
		be done exactly once. If, for instance, each $T_j$ requires a Lasso
		computation, then the conditional randomization test's computational
		burden} is very challenging for medium-scale $p$ in the thousands
	and prohibitive for large-scale (e.g., genetics) $p$ in the hundreds of
	thousands or millions. We will see in Section~\ref{sec:condrand} that
	there are power gains, along with huge computational costs, to be had
	by using conditional randomization in place of knockoffs, and
	Section~\ref{sec:realdata} will show that the {M\revv{X}} knockoff
	procedure easily scales to large data sets.
	
	Another advantage of {M\revv{X}} knockoffs is its guaranteed control of
	the FDR, whereas the BHq procedure does not offer strict control when
	applied to arbitrarily dependent p-values.

	\section{Numerical simulations}
	\label{sec:simulations}
	In this section we demonstrate the importance, utility, and
        practicality of M\revv{X} knockoffs for high-dimensional
        nonparametric conditional modeling.
	\subsection{Logistic regression p-values}
	\label{sec:logreg}
	Asymptotic maximum likelihood theory promises valid p-values for each
	coefficient in a GLM only when $n \gg p$. However, these approximate
	p-values can usually be computed as long as $n > p$, so a natural
	question arising from high-dimensional applications is whether such
	asymptotic p-values are valid when $n$ and $p$ are both large with
	$p/n \ge 0.1$, for example. We simulated $10^4$ independent design
	matrices ($n=500$, $p=200$) and binary responses from a logistic
	regression for the following two settings:
	\begin{itemize}
		\item[(1)] $(X_1,\dots,X_p)$ is an AR(1) time series with AR
		coefficient $0.5$ and \[Y\, |\, X_1,\dots,X_p \sim \text{Bernoulli}(0.5)\]
		\item[(2)] $(X_1,\dots,X_p)$ is an AR(1) time series with AR
		coefficient $0.5$ and \[Y\, |\, X_1,\dots,X_p \sim \text{Bernoulli}\left(\text{logit}\left(0.08(X_2+\cdots +X_{21})\right)\right)\]
	\end{itemize}
	\begin{figure}\centering
		\subfigure{\label{null_ar1}
			\includegraphics[width=0.45\textwidth]{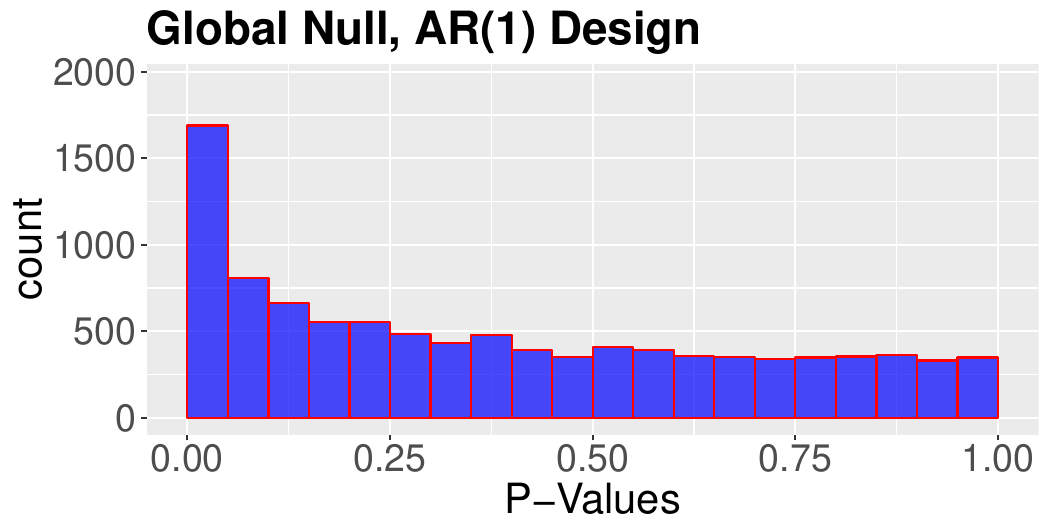}
		}
		\;
		\subfigure{\label{20nz_ar1}
			\includegraphics[width=0.45\textwidth]{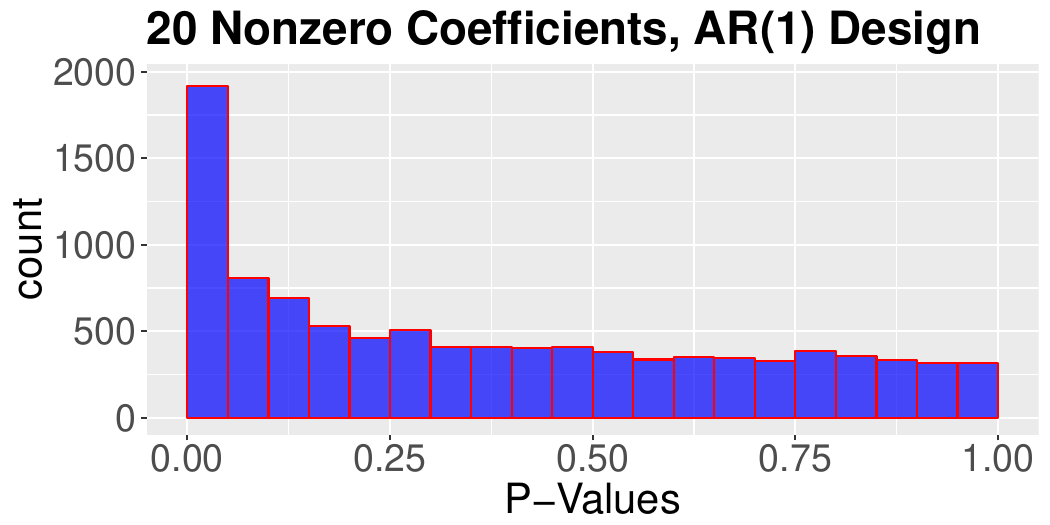}
		}
		\caption{Distribution of null logistic regression p-values with
			$n=500$ and $p=200$; 10,000 replications.}
		\label{LRpvals}
	\end{figure}
	Histograms for the p-values for $\beta_1$ (null in all cases) are
	shown in Figure~\ref{LRpvals}. Both histograms are far from uniform,
	and Table~\ref{LRpvalinfl} shows each distribution's concentration
	near zero. We see that the small quantiles have extremely inflated
	probabilities---over 20 times nominal for
	$\PP{p\text{-value}\le 0.1\%}$ in setting (2). We also see that the
	exact null distribution depends on the unknown coefficient sequence
	$\beta_2,\dots,\beta_p$, since the probabilities between settings
	differ statistically significantly at all three cutoffs.
	
	To confirm that this non-uniformity is not just a finite-sample
	effect, we also simulated $10^4$ i.i.d. $\mathcal{N}(0,1)$ design
	matrices with independent Bernoulli$(0.5)$ responses for $n=500$,
	$p=200$ and $n=5000$, $p=2000$ as settings (3) and (4),
	respectively. Table~\ref{LRpvalinfl} shows that the distribution does
	not really change as $n$ and $p$ are increased with constant
	proportion.  \setlength\extrarowheight{2pt}
	\begin{table}[ht]\centering
\begin{tabular}{|c|c|c|c|c|c|}
\hline
 & (1) & (2) & (3) & (4) \\
\hline
$\PP{p\text{-value}\le 5\%}$ & $16.89\%$ $(0.37\%)$ & $19.17\%$ $(0.39\%)$ & $16.88\%$ $(0.37\%)$ & $16.78\%$ $(0.37\%)$ \\
\hline
$\PP{p\text{-value}\le 1\%}$ & $6.78\%$ $(0.25\%)$ & $8.49\%$ $(0.28\%)$ & $7.02\%$ $(0.26\%)$ & $7.03\%$ $(0.26\%)$ \\
\hline
$\PP{p\text{-value}\le 0.1\%}$ & $1.53\%$ $(0.12\%)$ & $2.27\%$ $(0.15\%)$ & $1.87\%$ $(0.14\%)$ & $2.04\%$ $(0.14\%)$ \\
\hline
\end{tabular}
\caption{Inflated p-value probabilities with estimated Monte Carlo
  standard errors in parentheses. See text for meanings of settings
  (1), (2), (3), (4).}
\label{LRpvalinfl}
\end{table}
	
	These results show that the usual logistic regression p-values one
	might use when $n\ge p$ can have null distributions that are quite far
	from uniform, and even if one wanted to correct that distribution, it
	depends in general on unknown problem parameters, further complicating
	matters. When $n<p$ the problem becomes even more challenging, with
	existing methods similarly asymptotic as well as requiring stringent sparsity assumptions
	\citep{SV-ea:2014}. Thus, despite the wealth of research on
	controlling FDR, without a way to obtain valid p-values, even the problem of controlling
	FDR in medium-to-high-dimensional GLMs remains unsolved.
	
	\subsection{Alternative knockoff statistics}
	As mentioned in Section~\ref{sec:W}, the new \revv{model-X}
        knockoffs framework allows for a wider variety of $W$
        statistics to be used than in the \revv{fixed-X}
        framework. Choices of $Z_j$ include well-studied statistical
        measures such as the coefficient estimated in a GLM, but can
        also include much more ad-hoc/heuristic measures such as
        random forest bagging feature importances or sensitivity
        analysis measures such as the Monte-Carlo-estimated total
        sensitivity index. By providing \revv{variable} selection with rigorous
        Type I error control for general models and statistics,
        knockoffs can be used to improve the interpretability of
        complex black-box supervised/machine learning models. There
        are also many available choices for the anti-symmetric
        function $f_j$, such as ${|Z_j|-|\Zk_j|}$,
        ${\sign(|Z_j|-|\Zk_j|)\max\{|Z_j|,|\Zk_j|\}}$, or
        ${\log(|Z_j|)-\log(|\Zk_j|)}$.
	
	\revv{The main point of this subsection is that knockoffs can be used as a wrapper around essentially \emph{any} data-fitting or prediction algorithm, and regardless of the chosen algorithm still provides rigorous error control for variable selection.} We discuss here a few appealing new options for statistics $W$, but defer full exploration of these very extensive possibilities to future work.
	
	\subsubsection{Adaptive knockoff statistics}
	The default statistic suggested in \citet{RB-EC:2015} is the Lasso
	Signed Max (LSM), which corresponds to $Z_j$ being the largest penalty
	parameter at which the $j$th variable enters the model in the Lasso
	regression of $y$ on $[\bX, \bXk]$, and
	$f_j = {\sign(|Z_j|-|\Zk_j|)\max\{|Z_j|,|\Zk_j|\}}$.  In addition to the
	LSM statistic, \citet{RB-EC:2015} suggested alternatives such as the
	difference in absolute values of estimated coefficients  for
	a variable and its knockoff:
	\[W_j = |\hat{b}_j|-|\hat{b}_{j+p}|,\]
	where the $\hat{b}_j,\hat{b}_{j+p}$ are estimated so that $W$ obeys
	the sufficiency property required by the \revv{FX} knockoff procedure,
	e.g.,~by ordinary least squares or the Lasso with a pre-specified tuning
	parameter. The removal of the sufficiency requirement for {M\revv{X}}
	knockoffs allows us to improve this class of statistics by adaptively
	tuning the fitted model. The simplest example is the LCD statistic
	introduced in Section~\ref{sec:W}, which uses cross-validation to choose
	the tuning parameter in the Lasso. Note the LCD statistic can be
	easily extended to any GLM by replacing the first term in
	\eqref{eq:augmentedLasso} by a non-Gaussian negative log-likelihood, such as in
	logistic regression; we will refer to all such statistics generically
	as LCD. The key is that the tuning and cross-validation is done on the
	augmented design matrix $[\bX,\, \bXk]$, so that $W$ still obeys the
	flip-sign property.
	
	More generally, {M\revv{X}} knockoffs allows us to construct statistics that
	are highly adaptive to the data, as long as that adaptivity does not
	distinguish between original and knockoff variables. For instance, we could compute the
	cross-validated error of the ordinary Lasso (still of $y$ on $[\bX,
	\bXk]$) and compare it to that of a random forest, and choose $Z$ to
	be a feature importance measure derived from whichever one has smaller
	error. Since the Lasso works best when the true model is close to
	linear, while random forests work best in non-smooth models, this
	approach gives us high-level adaptivity to the model smoothness, while
	the {M\revv{X}} knockoff framework ensures strict Type I error control.
	
	Returning to the simpler example of adaptivity, we found the LCD
	statistic to be uniformly more powerful than the LSM statistic across
	a wide range of simulations (linear and binomial GLMs, ranging
	covariate dependence, effect size, sparsity, sample size, total number
	of variables), particularly under covariate dependence. We note,
	however, the importance of choosing the penalty parameter that
	minimizes the cross-validated error, as opposed to the default in some
	computational packages of using the ``one standard error'' rule, as the latter
	causes LCD to be underpowered compared to LSM in low-power
	settings. Figure~\ref{fig:kn_v52} shows a simulation with $n=3000$,
	$p=1000$ of a binomial linear model (with statistics computed from
	Lasso logistic regression) that is representative of the power
	difference between the two statistics. In all our simulations, unless otherwise specified,
	{M\revv{X}} knockoffs is always run using the LCD
	statistic. Explicitly, when the response variable is continuous, we
	use the standard Lasso with Gaussian linear model likelihood, and when
	the response is binary, we use Lasso-penalized logistic regression.
	\begin{figure}\centering
		\subfigure{\label{kn_pwr_v65}
			\includegraphics[width=0.45\textwidth]{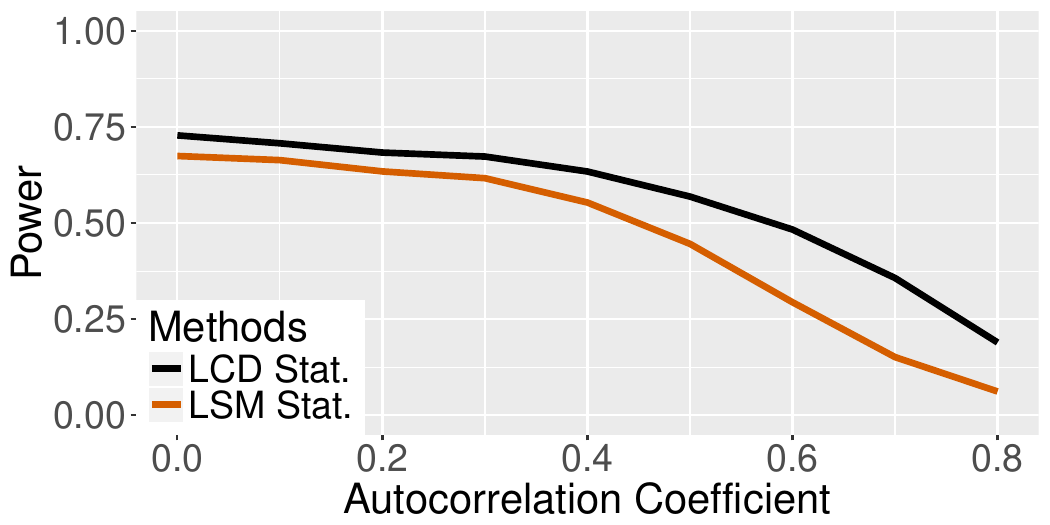}
		}
		\;
		\subfigure{\label{kn_fdr_v65}
			\includegraphics[width=0.45\textwidth]{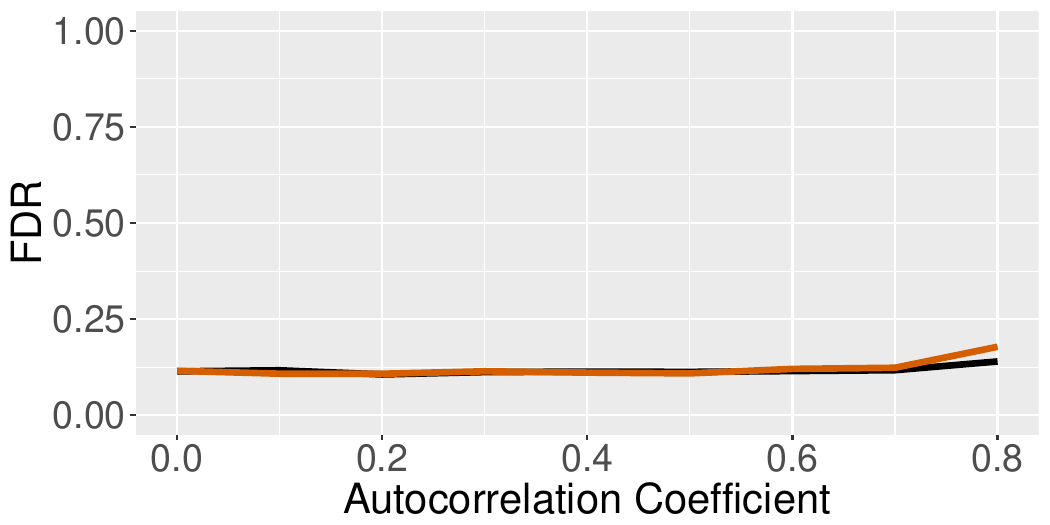}
		}
		\caption{Power and FDR (target is 10\%) for knockoffs with the LCD
			and LSM statistics. The design matrix has i.i.d.~rows and AR(1)
			columns with autocorrelation coefficient specified by the x-axes
			of the plots, and marginally each
			$X_j\sim\mathcal{N}(0,1/n)$. Here, $n=3000$, $p = 1000$, and
			$y$ comes from a {binomial linear model with logit link
				function} with 60 nonzero regression coefficients of magnitude 3.5 and
			random signs. Each point represents
			200 replications.}
		\label{fig:kn_v52}
	\end{figure}
	
	\subsubsection{Bayesian knockoff statistics}
	\label{sec:bvs}
	Another very interesting source of knockoff statistics comes from
	Bayesian procedures. If a statistician has prior knowledge about the
	problem, he or she can encode it in a Bayesian model and use the
	resulting estimators to construct a statistic (e.g.,~difference of
	absolute posterior mean coefficients, or difference or log ratio of
	posterior probabilities of nonzero coefficients with a sparse
	prior). What makes this especially appealing is that the statistician
	gets the power advantages of incorporating prior information, while
	maintaining a strict frequentist guarantee on the Type I error,
	\emph{even if the prior is false}!
	
	As an example, we ran knockoffs in an experiment with a Bayesian
	hierarchical regression model with $n=300$, $p=1000$, and
	$\E{\|\beta\|_0}=60$ ($\|\cdot\|_0$ denotes the $\ell_0$ norm, or the
	number of nonzero entries in a vector); see Appendix~\ref{app:bvs} for
	details. We chose a simple canonical model with Gaussian response to
	demonstrate our point, but the same principle applies to more complex,
	nonlinear, and non-Gaussian Bayesian models as well. The statistics we
	used were the LCD and a Bayesian variable selection (BVS) statistic,
	namely $Z_j - \tilde{Z}_j$ where $Z_j$ and $\tilde{Z}_j$ are the
	posterior probabilities that the $j$th original and knockoff
	coefficients are nonzero, respectively \citep{EG-RM:1997}; again see
	Appendix~\ref{app:bvs} for details. Figure~\ref{fig:bvskn_v4} shows
	that the accurate prior information supplied to the Bayesian knockoff
	statistic gives it improved power over LCD which lacks such
	information, but that they have the same FDR control (and they would
	even if the prior information were incorrect).
	\begin{figure}\centering
		\subfigure{\label{bvskn_pwr_v4}
			\includegraphics[width=0.45\textwidth]{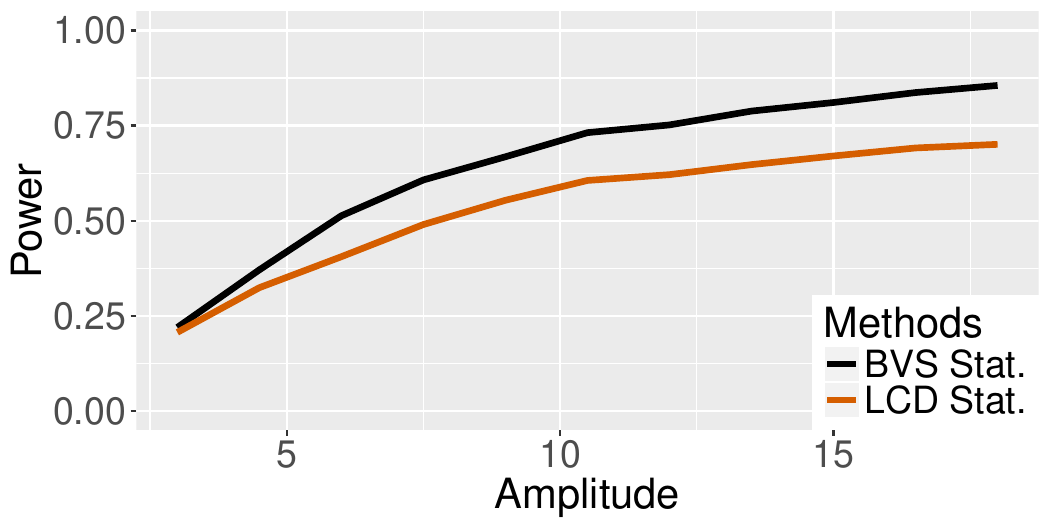}
		}
		\;
		\subfigure{\label{bvskn_fdr_v4}
			\includegraphics[width=0.45\textwidth]{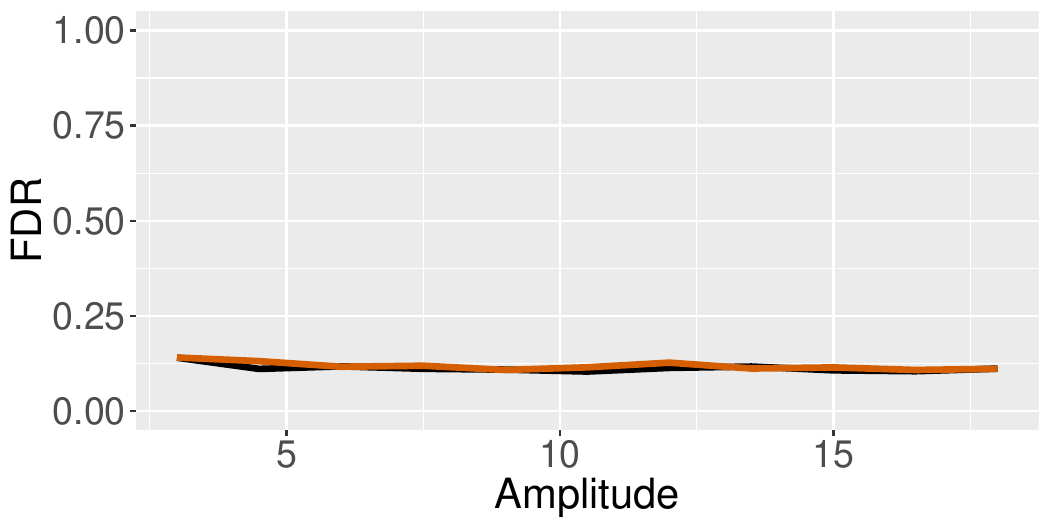}
		}
		\caption{Power and FDR (target is 10\%) for knockoffs with the LCD
			and BVS statistics. The design matrix is i.i.d. $\mathcal{N}(0,1/n)$,
			$n=300$, $p = 1000$, and $y$ comes from a Gaussian linear
			model with $\beta$ and the noise variance randomly chosen (see
			Appendix~\ref{app:bvs} for the precise model). Here, the nonzero
			entries of $\beta$ are Gaussian with mean zero and standard
			deviation given on the x-axis; the expected number of nonzero
			components is 60; the expected variance of the noise is 1.  Each
			point represents 200 replications.}
		\label{fig:bvskn_v4}
	\end{figure}
	
	\subsection{Alternative procedures}
	To assess the relative power of knockoffs, we compare to a number of
	alternatives in settings in which they are valid:
	\begin{itemize}
		\item[1.] The \revv{FX} knockoff procedure with settings recommended in
		\citet{RB-EC:2015}. This method can only be
		applied in homoscedastic Gaussian linear regression when $n\ge p$.
		\item[2.] BHq applied to asymptotic GLM p-values. This method can only
		be applied when $n\ge p$, and although for linear regression exact
		p-values can be computed (when the MLE exists), for any other GLM
		these p-values can be far from valid unless $n\gg p$, as shown in
		\rev{Section~\ref{sec:logreg}}.
		\item[3.] BHq applied to marginal test p-values. The correlation
		between the response and each covariate is computed and compared to its null
		distribution, which under certain Gaussian assumptions is
		closed-form, but in general can at least be simulated exactly by
		conditioning on $y$ and using the known marginal distribution of
		$X_j$. Although these tests are valid for testing hypotheses of \emph{marginal} independence
		(regardless of $n$ and $p$), such hypotheses only agree with the
		desired \emph{conditional} independence hypotheses when the
		covariates are exactly independent of one another.
		\item[4.] BHq applied to the p-values from the conditional
		randomization test described in
                \rev{Section~\ref{sec:crtintro} and Section~\ref{sec:CRT}}.
	\end{itemize}
	\revv{Note that we are using knockoffs, not knockoffs+, in all simulations, and thus we are technically controlling a slightly modified version of FDR. The FDR is nevertheless effectively controlled in all simulations except in extremely low-power settings, and even then the violations are small. We could have sacrificed a small amount of power and used knockoffs+ (both MX and FX) for exact FDR control, but then a more fair comparison in settings 2--4 would replace BHq with the conservative procedure in \citet{YB-DY:2001}, since the joint distribution of the p-values will not in general satisfy the assumptions for BHq to control the FDR exactly. However that conservative procedure had extremely noncompetitive power, so we prefer instead to compare knockoffs and regular BHq, which are more powerful and still effectively control the FDR.}
	
	\subsubsection{Comparison with conditional randomization}
	\label{sec:condrand}
	We start by comparing {M\revv{X}} knockoffs with procedure 4, BHq
	applied to conditional randomization test p-values, for computational
	reasons. We simulated $n=400$ i.i.d. rows of $p=600$ AR(1) covariates
	with autocorrelation 0.3, and response following a logistic regression
	model with 40 nonzero coefficients of random
	signs. Figure~\ref{fig:rand_v9} shows the power and FDR curves as the
	coefficient amplitude was varied.
	\begin{figure}\centering
		\subfigure{\label{rand_pwr_v9}
			\includegraphics[width=0.45\textwidth]{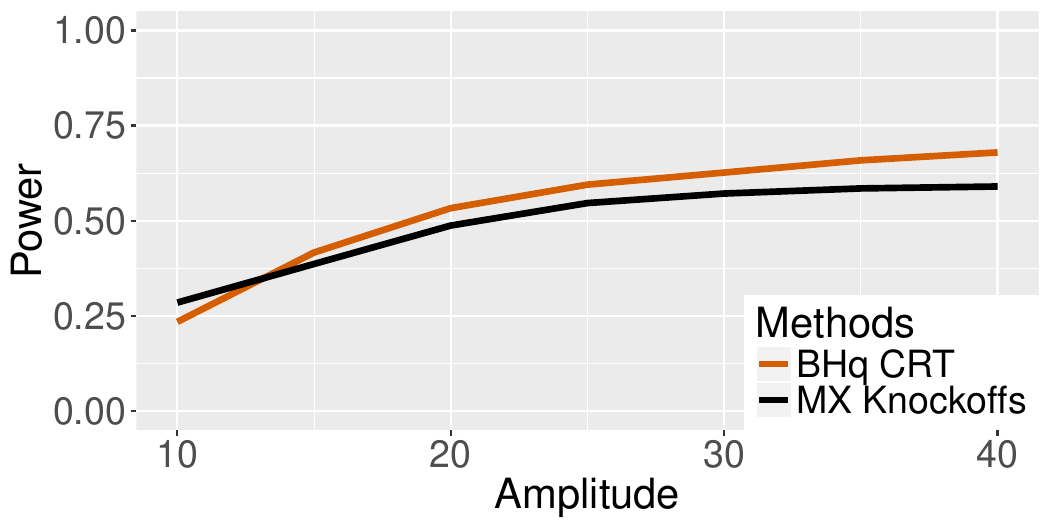}
		}
		\;
		\subfigure{\label{rand_fdr_v9}
			\includegraphics[width=0.45\textwidth]{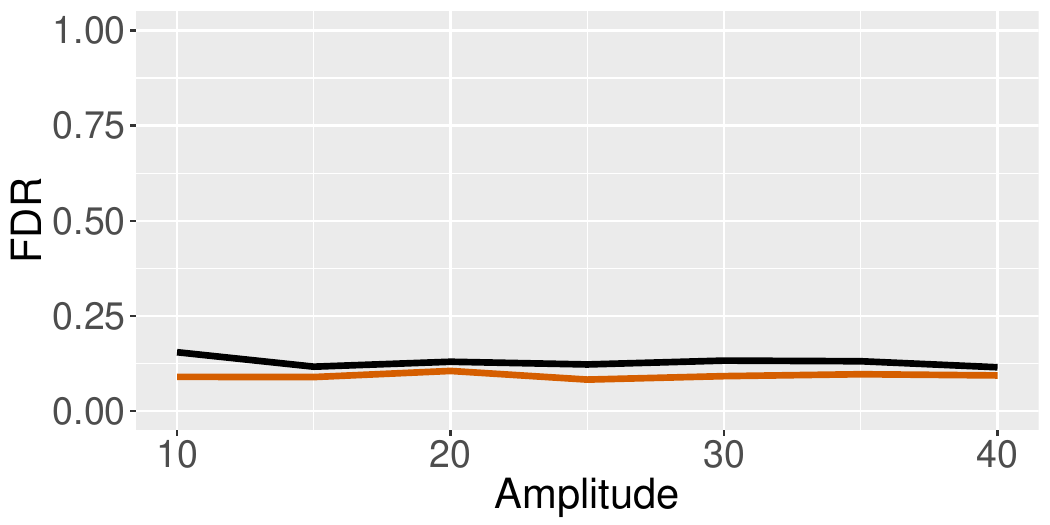}
		}
		\caption{Power and FDR (target is 10\%) for {M\revv{X}} knockoffs
			and BHq applied to conditional randomization test p-values. The
			design matrix has i.i.d. rows and AR(1) columns with
			autocorrelation 0.3, $n=400$, $p = 600$, and $y$ comes from a
			binomial linear model with logit link function with
			$\|\beta\|_0=40$, and
			all nonzero entries of $\beta$ having equal magnitudes and random
			signs; each point represents 200 replications.}
		\label{fig:rand_v9}
	\end{figure}
	We see that the conditional randomization test gives higher
        power with similar FDR control, but this comes at a hugely
        increased computational cost. This simulation has considerably
        smaller $n$ and $p$ than any other simulation in the paper,
        and we still had to apply a number of computational
        speed-ups/shortcuts, described in Appendix~\ref{app:CRspeed}, to keep the computation time within
        reason.
	
	With these speed-ups, Figure~\ref{fig:rand_v9} took roughly three
	years of serial computation time, while the {M\revv{X}} knockoffs
	component took only about six hours, or about 5000 times less (all
	computation was run in Matlab 2015b, and both methods used glmnet to
	compute statistics). Because of the heavy computational burden, we
	were unable to include the conditional randomization test in our
	further, larger simulations---\rev{we show in Section~\ref{sec:crtko}} that
	the number of $T_j$ computations scales optimistically linearly in
	$p$. To summarize, conditional randomization testing appears somewhat
	more powerful than {M\revv{X}} knockoffs, but is computationally
	infeasible for large data sets (like that in
	Section~\ref{sec:realdata}).
	
\subsubsection{Effect of signal amplitude}
Our first simulation comparing {M\revv{X}} knockoffs to procedures 1--3 is by
necessity in a Gaussian linear model with $n>p$ and independent
covariates---the only setting in which all procedures approximately
control the FDR. Specifically, the left side of
Figure~\ref{kn_v44_45} plots the power and FDR for the four
procedures when
$X_{ij} \iid \mathcal{N}(0,1/n)$, $n=3000$, $p=1000$,
$\|\beta\|_0 = 60$, the noise variance $\sigma^2=1$, and the nonzero
entries of $\beta$ have random signs and equal magnitudes, varied
along the x-axis. All methods indeed control the FDR, and {M\revv{X}}
knockoffs is the most powerful, with as much as 10\% higher power than
its \emph{nearest} alternative. The right side of Figure~\ref{kn_v44_45} shows the same setup but in high dimensions:
$p=6000$. In the high-dimensional regime, neither maximum likelihood
p-values nor \revv{FX} knockoffs can even be computed, and {the {M\revv{X}}
  knockoff procedure} has considerably higher power than BHq applied
to marginal p-values.
\begin{figure}\centering
    \subfigure{\label{kn_v44}
      \includegraphics[width=0.45\textwidth]{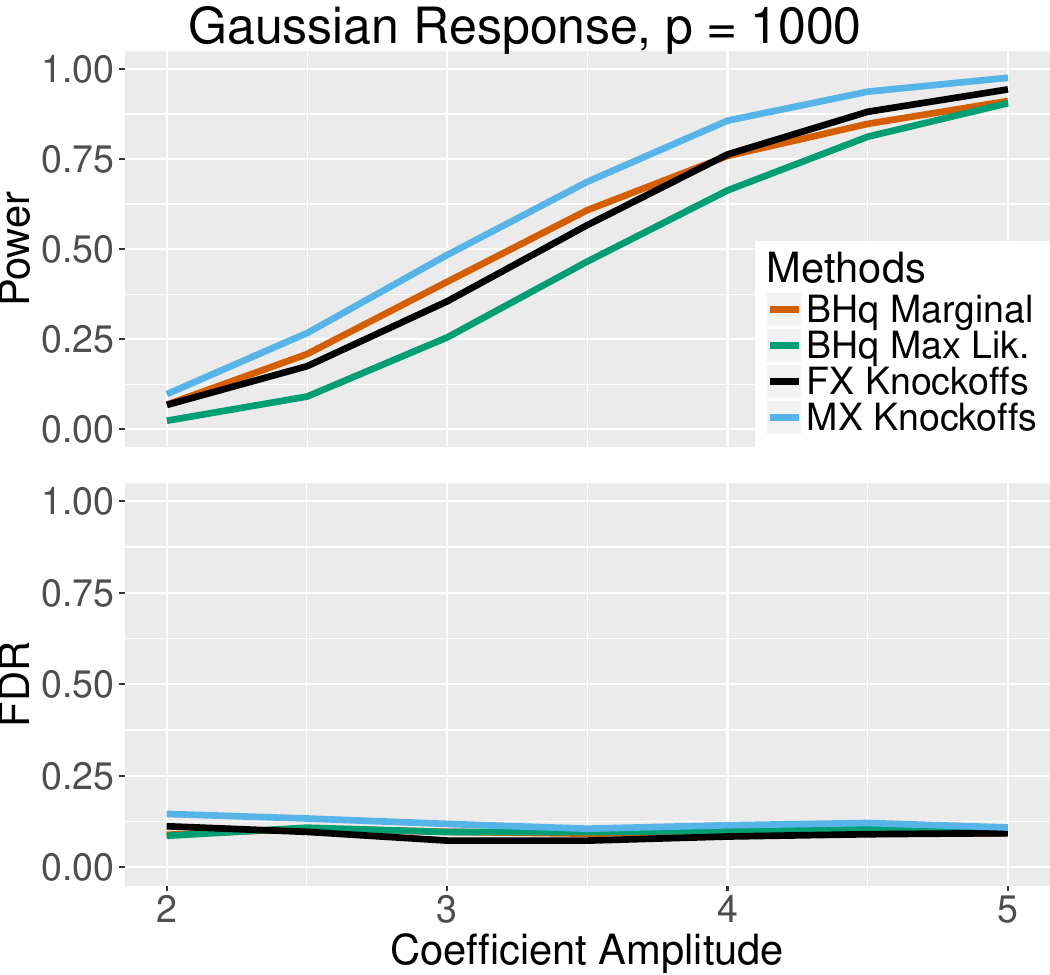}
    }
    \;
    \subfigure{\label{kn_v45}
      \includegraphics[width=0.45\textwidth]{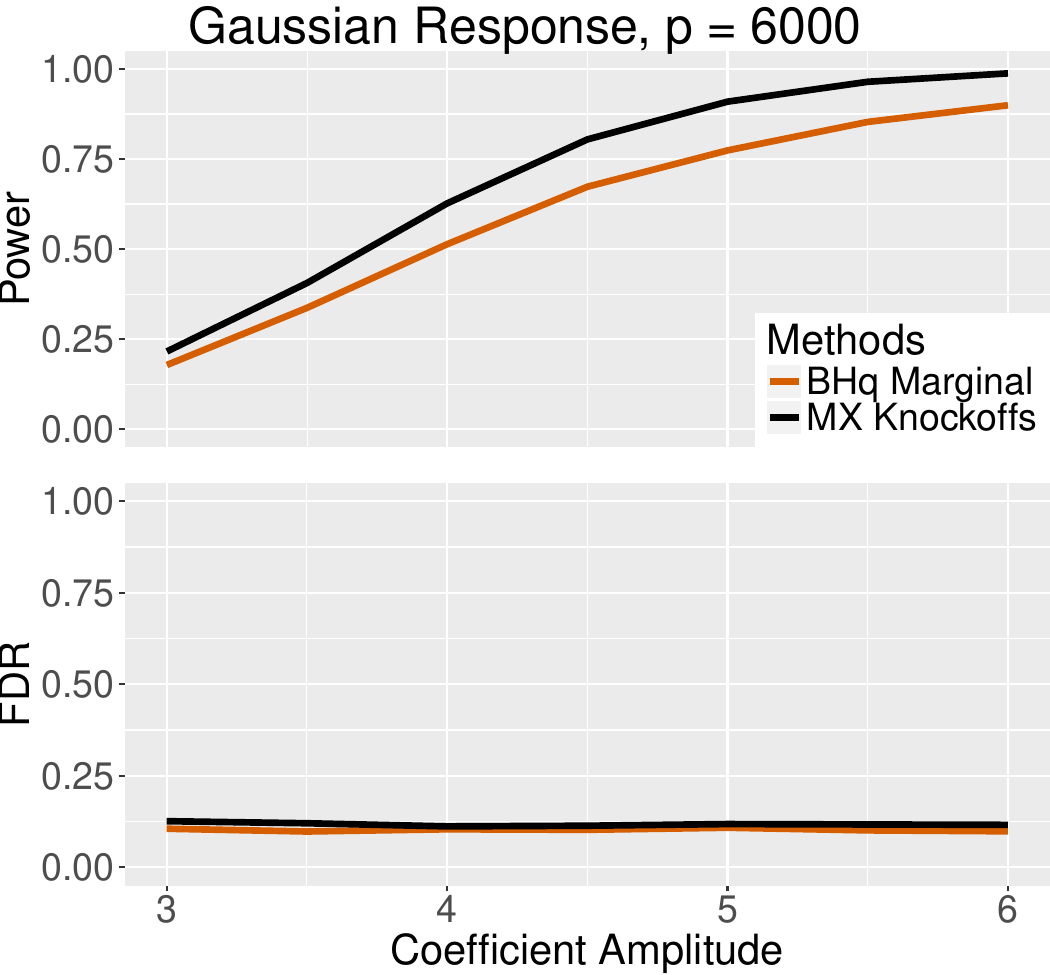}
    }

    \caption{Power and FDR (target is 10\%) for {M\revv{X}} knockoffs
      and alternative procedures. The design matrix is
      i.i.d. $\mathcal{N}(0,1/n)$, $n=3000$, $p = $ (left): $1000$ and (right):
      $6000$, and $y$ comes from a Gaussian linear model with 60
      nonzero regression coefficients having equal magnitudes and
      random signs. The noise variance is 1.  Each point represents
      200 replications.}
\label{kn_v44_45}
\end{figure}

Next we move beyond the Gaussian linear model to a binomial linear
model with logit link function, precluding the use of the original
knockoff procedure. Figure~\ref{kn_v46_47} shows the same simulations
as Figure~\ref{kn_v44_45} but with $Y$ following the binomial
model. The results are similar to those for the Gaussian linear model, except
that BHq applied to the asymptotic maximum likelihood p-values now has
an FDR above 50\% (rendering its high power meaningless), which can be
understood as a manifestation of the phenomenon from \rev{Section~\ref{sec:logreg}}. In
summary, {M\revv{X} knockoffs} continues to have the highest power among
FDR-controlling procedures.
\begin{figure}\centering
    \subfigure{\label{kn_v46}
      \includegraphics[width=0.45\textwidth]{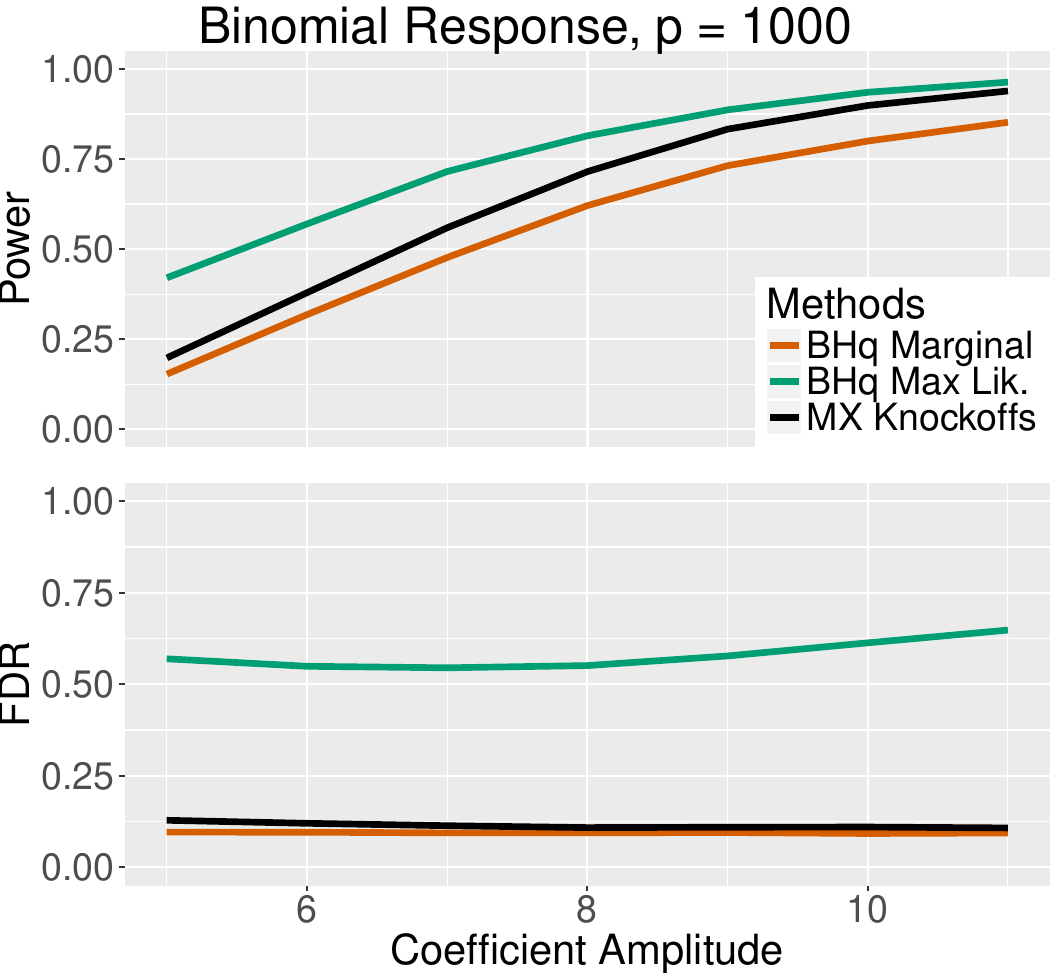}
    }
    \;
    \subfigure{\label{kn_v47}
      \includegraphics[width=0.45\textwidth]{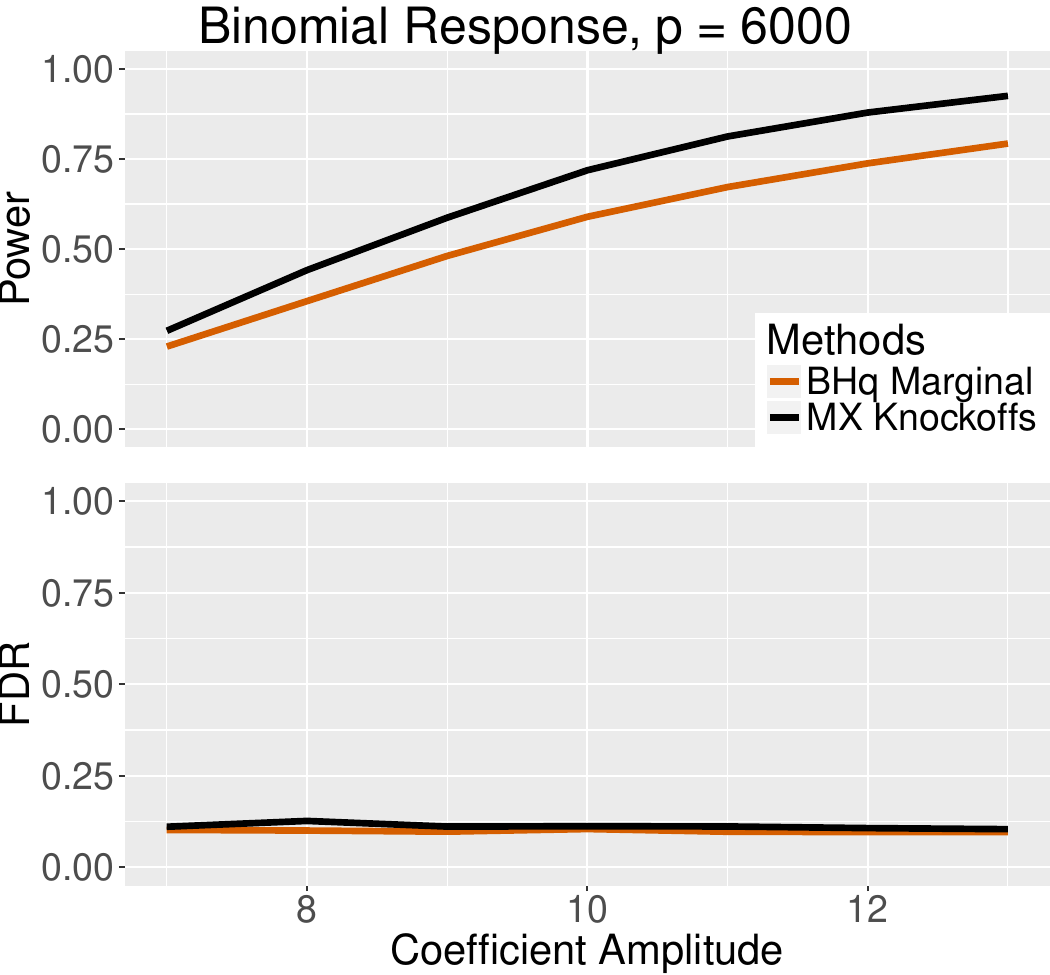}
    }

    \caption{Power and FDR (target is 10\%) for {M\revv{X}} knockoffs
      and alternative procedures. The design matrix is
      i.i.d. $\mathcal{N}(0,1/n)$, $n=3000$, $p = $ (left): $1000$ and (right):
      $6000$, and $y$ comes from a binomial linear model with logit
      link function, and 60 nonzero regression coefficients having
      equal magnitudes and random signs. Each point represents 200
      replications.}
\label{kn_v46_47}
\end{figure}

\subsubsection{Effect of covariate dependence}
To assess the relative power and FDR control of {M\revv{X}} knockoffs
as a function of covariate dependence, we ran similar simulations as
in the previous section, but with covariates that are AR(1) with varying
autocorrelation coefficient (while the coefficient amplitude remains
fixed). It is now relevant to specify that the locations of the
nonzero coefficients are uniformly distributed on $\{1,\dots,p\}$. In
the interest of space, we only show the low-dimensional ($p=1000$)
Gaussian setting (where all four procedures can be computed) and the
high-dimensional ($p=6000$) binomial setting, as little new
information is contained in the plots for the remaining two
settings. Figure~\ref{kn_v51_49} shows that, as
expected, BHq with marginal testing quickly loses FDR control with
increasing covariate dependence. This is because the marginal tests
are testing the null hypothesis of \emph{marginal} independence
between covariate and response, while recall from
Definition~\ref{def:null} that all conditionally independent
covariates are considered null, even if they are marginally dependent
on the response. Concentrating on the remaining methods and just the
left-hand part of the BHq marginal curves where FDR is controlled,
Figure~\ref{kn_v51_49} shows that {M\revv{X} knockoffs} continues to be considerably more
powerful than alternatives as covariate dependence is introduced, in
low- and high-dimensional linear and nonlinear models.
\begin{figure}\centering
    \subfigure{\label{kn_v51}
      \includegraphics[width=0.45\textwidth]{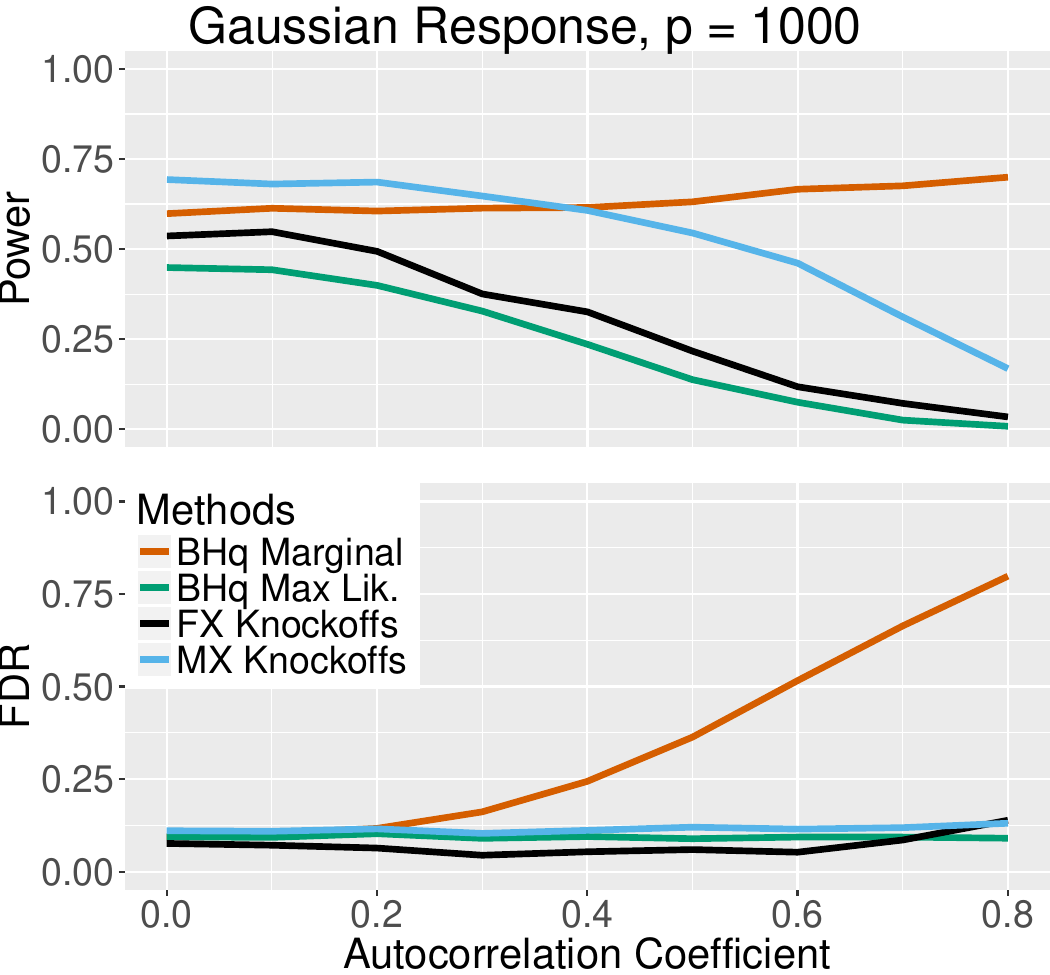}
    }
    \;
    \subfigure{\label{kn_v49}
      \includegraphics[width=0.45\textwidth]{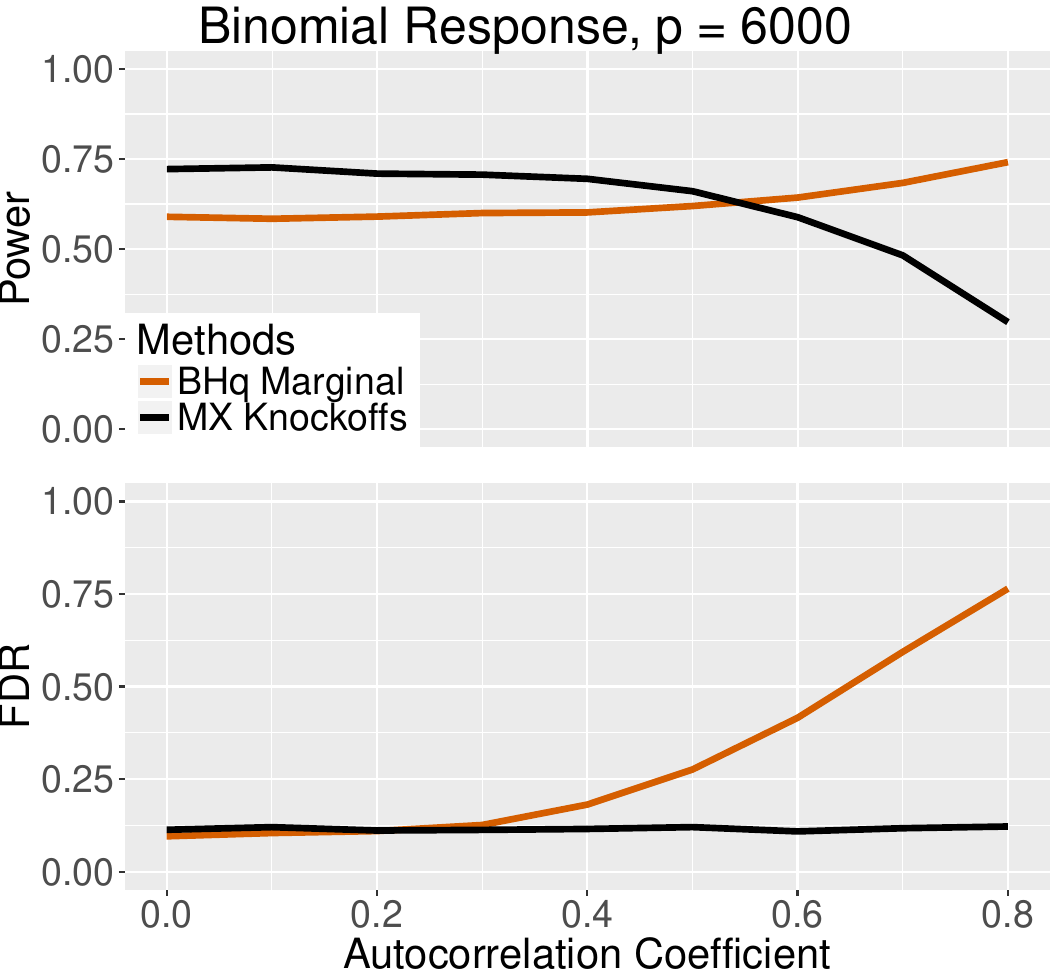}
    }

    \caption{Power and FDR (target is 10\%) for {M\revv{X}} knockoffs
      and alternative procedures. The design matrix has i.i.d.~rows
      and AR(1) columns with autocorrelation coefficient specified by
      the x-axes of the plots, and marginally each
      $X_j\sim\mathcal{N}(0,1/n)$. (left): $n = 3000$, $p=1000$, and
      $y$ follows a Gaussian linear model.  (right): $n = 3000$,
      $p=6000$, and $y$ follows a binomial linear model with logit
      link function. In both cases, there are 60 nonzero coefficients
      having magnitudes equal to 3.5 on the left and 10 on the right,
      random signs, and randomly selected locations. Each point
      represents 200 replications.}
\label{kn_v51_49}
\end{figure}
	
        \rev{
	\section{Robustness}
	\label{sec:robust}
	In many real applications, the true joint covariate distribution may
	not be known exactly, forcing the user to estimate it from the
	available data. As already mentioned, this is a challenging problem by
	itself, but often we have considerable outside information or
	unsupervised data that can be brought to bear to improve
	estimation. This raises the important question of how robust
	{M\revv{X} knockoffs} is to error in the joint covariate
        distribution. Theoretical guarantees of robustness are
          beyond the scope of this paper, but we present instead three
          compelling simulation studies to demonstrate robustness. The
          first study investigates error that biases that distribution toward the empirical
	covariate distribution, often referred to as overfitting
        error, on simulated data. We generated knockoffs for Gaussian variables,
	but instead of using the true covariance matrix, we used
        in-sample covariance estimates which ranged in
          overfitting error. Figure~\ref{fig:kn_v59} shows the power and FDR as the
	covariance we use ranges from the true covariance matrix (AR(1) with
	autocorrelation 0.3), to a graphical lasso estimator, to
        convex combinations of the true and empirical covariance (see
        Appendix~\ref{app:robust} for explicit formulae for the estimators). The
        plot is indexed on the x-axis by the average relative
        Frobenius norm $\|\hat{\bS}-\bS\|_{\text{Fro}}/\|\bS\|_{\text{Fro}}$
        of the estimator $\hat{\bS}$.
	\begin{figure}\centering
		\subfigure{\label{kn_pwr_robust}
			\includegraphics[width=0.45\textwidth]{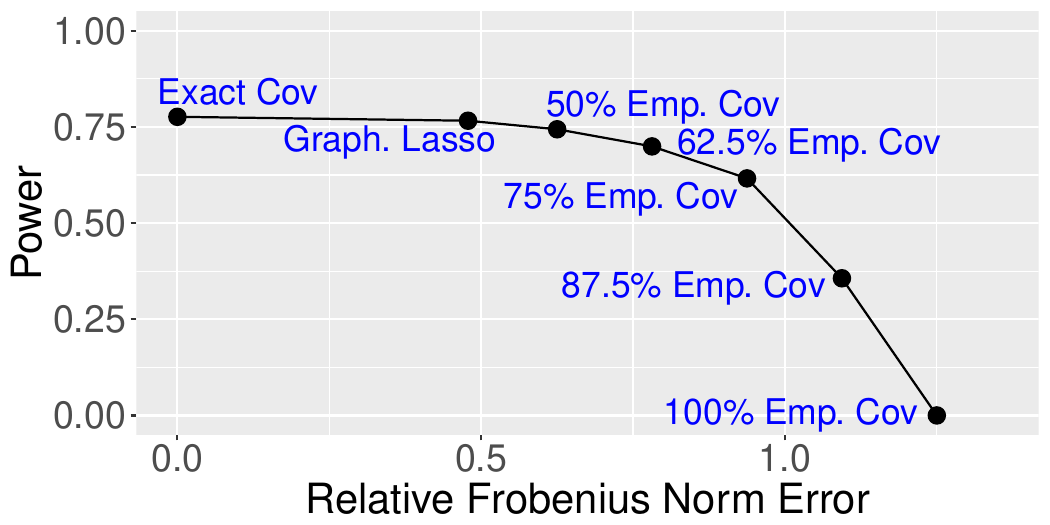}
		}
		\;
		\subfigure{\label{kn_fdr_robust}
			\includegraphics[width=0.45\textwidth]{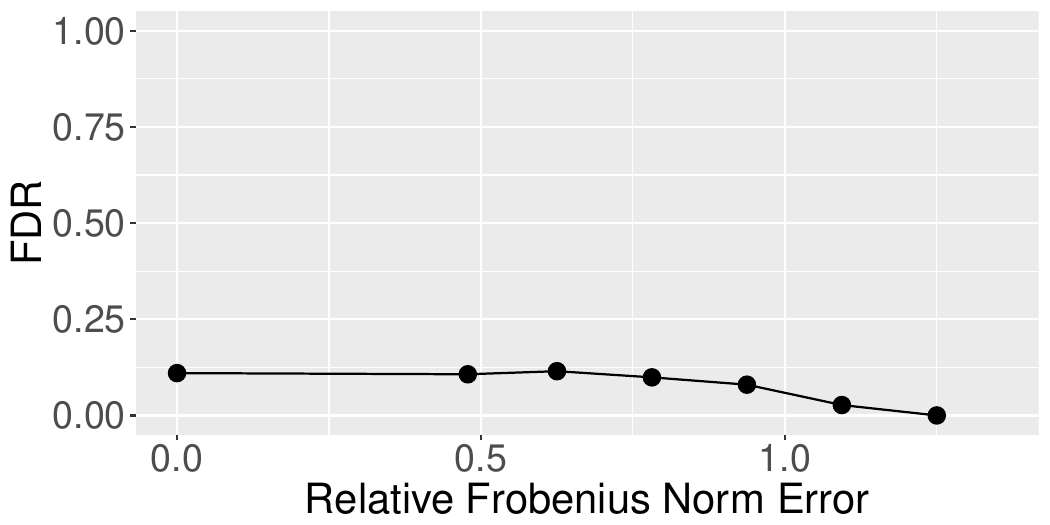}
		}
		\caption{\rev{Power and FDR (target is 10\%) for knockoffs with the LCD
			statistic as the covariance matrix used to generate knockoffs
			ranges from the truth to an estimated covariance; see text for
			details. The design matrix has i.i.d. rows and AR(1) columns
			with autocorrelation coefficient 0.3, and the matrix (including
			knockoffs) is standardized so that each column has mean zero and
			Euclidean norm 1. Here, $n=800$, $p = 1500$, and $y$ comes
			from a binomial linear model with logit link function with 50
			nonzero entries having magnitude 20 and random signs. Each point
			represents 200 replications.}}
		\label{fig:kn_v59}
	\end{figure}
	Although the graphical Lasso is well suited for this problem
        since the covariates have a sparse precision matrix, its
        covariance estimate is still off by nearly 50\%, and yet
        surprisingly the resulting power and FDR are nearly
        indistinguishable from when the exact covariance is used. The
        covariance estimate worsens as the empirical covariance---a
        very poor estimate of the true covariance given the high
        dimensionality---is combined in increasing proportion with the
        truth. At 75\% weight on the empirical covariance, the
        covariance estimate is nearly 100\% off and yet the power and FDR of M\revv{X}
        knockoffs are only slightly decreased. Beyond this point, M\revv{X}
        knockoffs becomes quite conservative, with power and FDR
        approaching zero as the estimated covariance approaches the
        empirical covariance. This behavior at 100\% weight on the
        empirical covariance is not
        surprising, since $p>n$ and thus the empirical covariance is
        rank-deficient, forcing the knockoff variables to be exact
        replicas of their original counterparts.\footnote{\revv{When the knockoff variables are exact copies of the original variables we are guaranteed zero power and zero FDR since all $W_j=0$.} \rev{Although in
          principle we could break ties and assign signs by coin flips
          when $W_j=0$, we prefer to only select $X_j$ with $W_j>0$,
          as $W_j=0$ provides no evidence against the null
          hypothesis.}} The main take-aways from this plot are that (a)
        the nominal level of 10\% FDR is never violated, even for
        covariance estimates very far from the truth, and (b) the
        more overfitting done on the covariance, the more conservative
        the procedure is, although even at almost 100\% relative
        error, M\revv{X} knockoffs had only lost about 20\% of the power it
        would have had if the covariance were known exactly. \revv{Intuitively, instead of treating the variables as coming from their true joint distribution, M\revv{X} knockoffs with an overfit covariate distribution seems to treat them as coming from their true distribution ``conditional" on being similar to their observed values. Thus FDR should be roughly controlled \emph{conditionally} which implies marginal FDR control, while power may be lost if the conditioning is too great, which matches what we see in the simulations.}

        Our second and third experiments use real covariate data from
        a genome-wide association study (GWAS), the details of which
        are given in Section~\ref{sec:realdata} and
        Appendix~\ref{app:realdata}. In brief, it is a
        high-dimensional setting with $X_{ij}\in\{0,1,2\}$ and strong
        spatial structure, whose covariance we estimate in-sample
        using the GWAS-tailored covariance estimator of
        \citet{XW-MS:2010}. We check the robustness of constructing
        second-order model-\revv{X} knockoffs by ASDP (from
        Section~\ref{sec:asdp}) by choosing a reasonable but
        artificial model for $Y|X_1,\dots,X_p$ and simulating
        artificial response data using the real covariate data. The
        exact details of the simulation are given in
        Section~\ref{app:simrealdata}, but note that this simulation
        used the exact same covariance estimation, SNP clustering and
        representative selection, knockoff construction, and knockoff
        selection procedure as used for the real data analysis of the
        next section. Our second experiment varies the signal
        amplitude in a binomial linear model, and
        Figure~\ref{fig:wtccc_fake} shows the FDR and power. As hoped,
        the FDR is consistently controlled over a wide range of
        powers.
	\begin{figure}\centering
		\subfigure{\label{wtccc_pwr_fake_73_78}
			\includegraphics[width=0.45\textwidth]{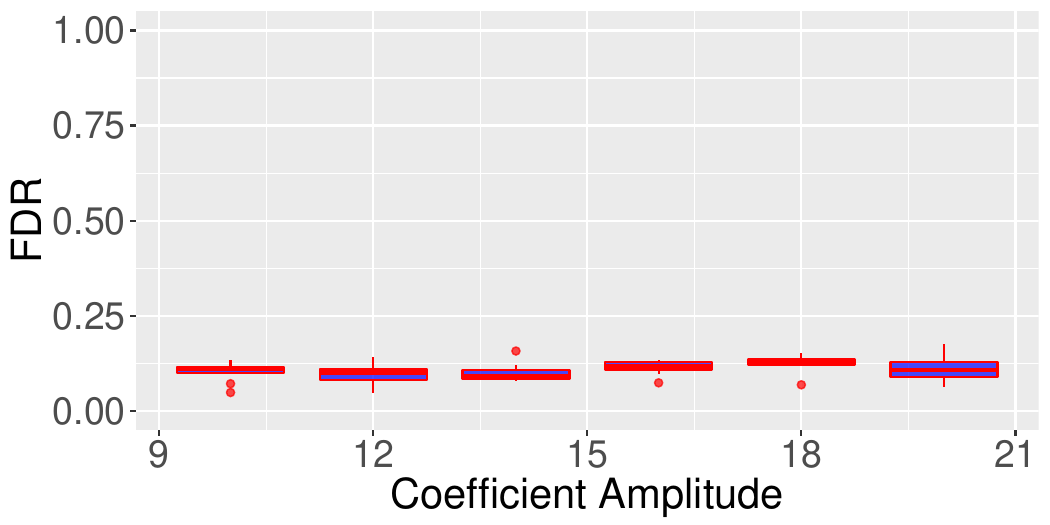}
		}
		\;
		\subfigure{\label{wtccc_fdr_fake_73_78}
			\includegraphics[width=0.45\textwidth]{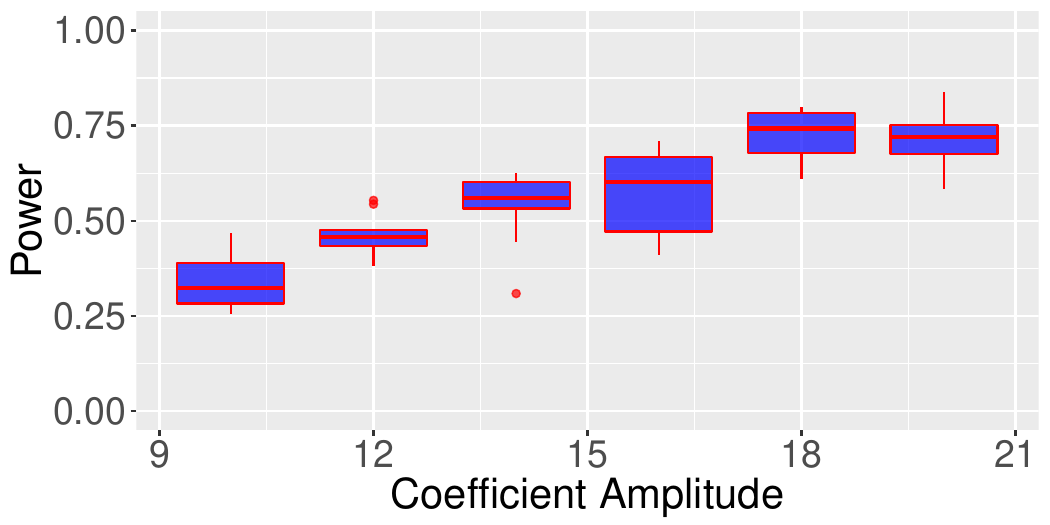}
		}
		\caption{\rev{Power and FDR (target is 10\%) for knockoffs
                  with the LCD statistic applied to subsamples of a
                  real genetic design matrix. Each boxplot represents
                  10 different logistic regression models with 60
                  nonzero coefficients \rev{with amplitudes given by the
                    x-axis}, and for each model, 1000
                  common observations were used for picking cluster
                  representatives, and the remaining 13,708
                  observations were divided into 10 disjoint parts and
                  knockoffs run on each part, with power and FDR for
                  that model then computed by averaging the results
                  over those 10 parts.}}
		\label{fig:wtccc_fake}
	\end{figure}
        Our third experiment, instead of varying the signal strength
        of our artificial model for $Y|X_1,\dots,X_p$, deliberately
        corrupts the covariance estimate of \citet{XW-MS:2010} by
        varying a shrinkage parameter that is not meant to be
        varied. That parameter is $m$ in \citet[equation
        (2.7)]{XW-MS:2010}, and we vary it from $1/10$ to $10$ times
        its intended value. This variation has a huge effect on how
        much shrinkage is applied off the diagonal, with the average
        correlation varying by a factor of about 13 over the range of
        shrinkage. Figure~\ref{fig:wtccc_fake_tuning} shows that, even
        as we range from substantial undershrinkage to substantial
        overshrinkage, M\revv{X} knockoffs never significantly violates FDR
        control, with only a bit of conservativeness when the
        undershrinkage is most drastic (the same phenomenon as the
        right side of Figure~\ref{fig:kn_v59}).
	\begin{figure}\centering
		\subfigure{\label{wtccc_pwr_fake_79_84}
			\includegraphics[width=0.45\textwidth]{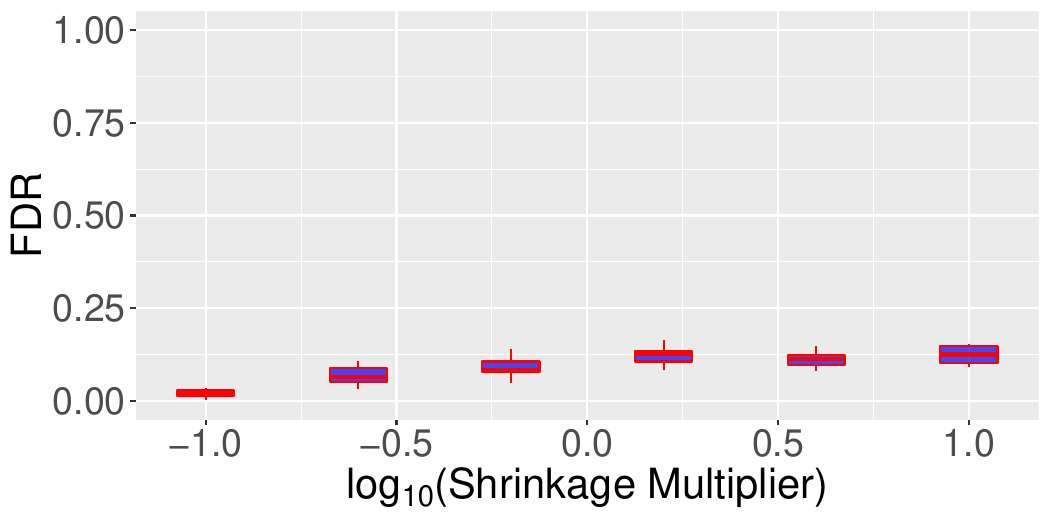}
		}
		\;
		\subfigure{\label{wtccc_fdr_fake_79_84}
			\includegraphics[width=0.45\textwidth]{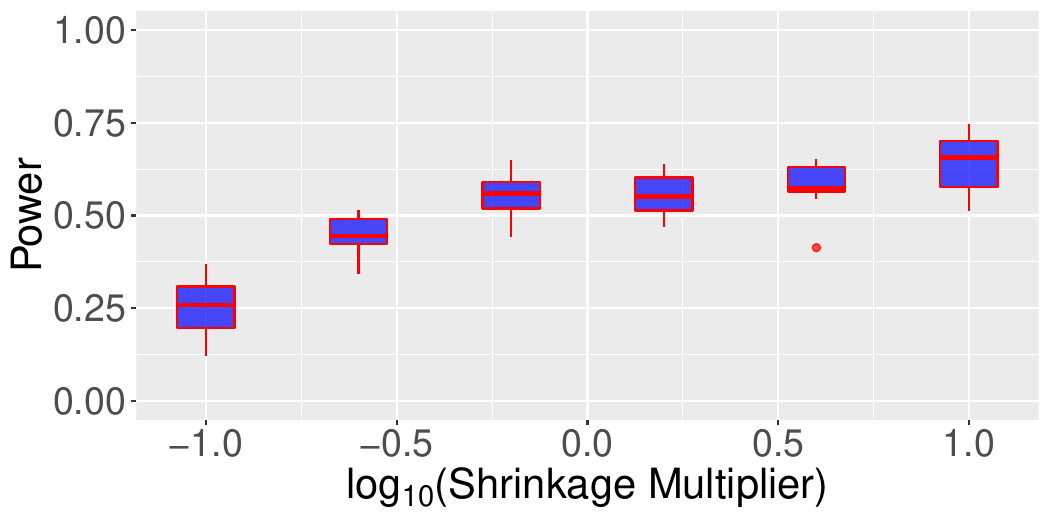}
		}
		\caption{\rev{The setup is the same as
                  Figure~\ref{fig:wtccc_fake} except with the
                  amplitude fixed at 14 and the amount of off-diagonal
                shrinkage in the covariance estimate varied on the x-axis.}}
		\label{fig:wtccc_fake_tuning}
	\end{figure}

        }

	\section{Genetic analysis of Crohn's disease}
	\label{sec:realdata}
	To test the robustness and practicality of the new knockoff
	procedure, we applied it to a data set containing genetic information
	on cases and controls for Crohn's disease (CD). The data is provided
	by the Wellcomme Trust Case Control Consortium and has been studied
	previously \citep{WTCCC:2007}. 	After preprocessing, there were $p=377,749$ single nucleotide
	polymorphisms (SNPs) measured on $n=4,913$ subjects (1,917 CD patients
	and 2,996 healthy controls). Although $p \gg n$, the joint dependence
	of SNPs has a strong spatial structure, and outside data can be used
	to improve estimation. In particular, we approximated the standardized
	joint distribution as multivariate Gaussian with covariance matrix
	estimated using the methodology of \citet{XW-MS:2010}, which shrinks
	the off-diagonal entries of the empirical covariance matrix using
	genetic distance information estimated from the HapMap CEU
	population. This approximation was used on each chromosome, and SNPs
	on different chromosomes were assumed to be independent. The statistic
	we use is the LCD. Although the data itself cannot be made available, all
	code is available at \revv{\url{http://web.stanford.edu/group/candes/knockoffs/software/knockoff/}}.
	
	One aspect of SNP data is that it contains some very high
	correlations, which presents two challenges to our methodology. The
	first is generic to the variable selection problem: it is very hard to
	choose between two or more nearly-identical (highly-correlated)
	variables if the data supports at least one of them being
	selected.\footnote{This is purely a problem of power and would not
		affect the Type I error control of knockoffs.} To alleviate this, we
	clustered the SNPs using the estimated correlations as a similarity
	measure with a single-linkage cutoff of 0.5, and settle for
	discovering important SNP clusters. To do so we choose one
	representative from each cluster and approximate the null hypothesis
	that a cluster is conditionally independent of the response given the
	other clusters by the null hypothesis that a cluster
	\emph{representative} is conditionally independent of the response
	given the other cluster \emph{representatives}. To choose the
	representatives, we could ignore the observed data altogether and do
	something like pick representatives with the highest minor allele
	frequency (computed from outside data), and then run knockoffs as
	described in the paper. Although this produces a powerful procedure
	(about 50\% more powerful than the original analysis by
	\citet{WTCCC:2007}), a more powerful approach is to select cluster
	representatives using a fraction of the observations, including their
	responses, such as by marginal testing. Note that such a
	data-splitting approach appears to make our null hypotheses random, as
	in the work on inference after selection reviewed in
	Section~\ref{sec:related}. However, the approximation we are making is
	that each representative stands for its cluster, and each cluster has
	exactly one associated null hypothesis, no matter how selection is
	performed, even if it were nonrandom. That is, the \emph{approximate}
	hypotheses being tested do not actually depend on the selection
	(unlike \citet{DB-ea:2016} where clusters are selected and where the
	very definition of a cluster actually depends on the selection), and
	our approach remains \revv{free of a model for $Y\mid X$}, which together should make it clear
	that it is still quite different from the literature on inference
	after selection.
	
	Explicitly, we randomly chose 20\% of our observations and on those
	observations only, we ran marginal t-tests between each SNP and the
	response. Then from each cluster we chose the SNP with smallest t-test
	p-value to be the single representative of that cluster. Because the
	observations used for selecting cluster representatives have had their
	representative covariate values selected for dependence on the
	outcome, if we constructed knockoff variables as usual and included
	them in our procedure, the required exchangeability established in
	Lemma~\ref{lem:exch} would be violated. However, taking a page from
	\citet{RB-EC:2016}, we can still use these observations in our
	procedure by making their knockoff variables just identical copies of
	the original variables (just for these observations). It is easy to
	show (see Appendix~\ref{app:knselect}) that constructing knockoffs in
	this way, as exact copies for the observations used to pick
	representatives and as usual for the remaining observations, the
	pairwise exchangeability of null covariates with their knockoffs is
	maintained. Of course, the observations with identical original and
	knockoff covariate values do not directly help the procedure
	distinguish between the original and knockoff variables, but including
	them improves the accuracy of the fitting procedure used to produce
	the feature importance statistics, so that power is improved
	indirectly because the $Z_j$ become more accurate measures of feature
	importance.
	
	Replacing clusters with single representatives
	reduces $p$ down to $71,145$ (so the average cluster size was just
	over five SNPs, although there was substantial variance) and, by
	construction, upper-bounds pairwise SNP correlations by 0.5. Note that
	this is far from removing all dependence among the SNPs, so
	considering conditional independence instead of marginal dependence
	remains necessary for interpretation.  Scientifically, we consider a
	selected SNP to be a true discovery if and only if it is the
	representative of a cluster containing a truly important SNP.
	
	The second challenge is the one discussed in
	Section~\ref{sec:construction}, and we use the approximate SDP
	knockoff construction proposed there. The approximate covariance
	matrix was just the estimated covariance matrix with zeros outside of
	the block diagonal, with the blocks chosen by single-linkage
	clustering on the estimated correlation matrix, aggregating clusters
	up the dendrogram as much as possible subject to a maximum cluster
	size of 999. In this case, even separating the problem by chromosome,
	the SDP construction was computationally infeasible and the
	equicorrelated construction produced extremely small $s$:
	$\text{mean}(s^{\text{EQ}}) = 0.08$.
	The parallelized approximate SDP construction took just a matter of
	hours to run, and increased $s$ on average by almost an order of magnitude,
	with $\text{mean}(s^{\text{ASDP}}) = 0.57$.
	
	Although it incorporates strong prior information, our estimate of the
	joint distribution of the SNPs is still an approximation, so our
	first step is to test the robustness of knockoffs to this
	approximation.
	
	\subsection{Simulations with genetic design matrix}
	\label{app:simrealdata}
	\rev{In Section~\ref{sec:robust}, the second and third
          experiments} check the robustness of knockoffs to this
        particular joint distribution approximation \rev{by taking} a
        reasonable model for $Y|X_1,\dots,X_p$ and simulat\rev{ing}
        artificial response data using the real covariate data
        itself. We split the rows of our design matrix into 10 smaller
        data sets (re-estimating the joint covariate distribution each
        time), \rev{and} for each conditional model we run knockoffs 10
        times and compute the realized FDP each time. Then averaging
        the 10 FDPs gives an estimate of knockoffs' FDR for that
        conditional model. In an attempt to make each smaller data set
        have size more comparable to the $n\approx 5,000$ in our
        actual experiment, we combined the healthy and CD genetic
        information with that from 5 other diseases from the same data
        set:\footnote{Bipolar disorder was also part of this data set,
          but a formatting error in the data we were given prevented
          us from including it.} coronary artery disease,
        hypertension, rheumatoid arthritis, type 1 diabetes, and type
        2 diabetes. This made for $14,708$ samples altogether. To
        further increase the effective sample size and match the
        actual experiment, we used a random subset of 1000
        observations for choosing cluster representatives, but the
        \emph{same} 1000 for each smaller data set, so that each
        subsampled data set contained $\approx 1,400$ unique samples,
        $+ 1000$ common samples (for each of these common samples
        $\tilde X_{ij} = X_{ij}$). For computational reasons, we used
        only the first chromosome in this experiment, so each of our
        simulations had (pre-clustering) $29,258$ covariates. The
        conditional model for the response was chosen to be a logistic
        regression model with 60 nonzero coefficients of random signs
        and locations uniformly chosen from among the original (not
        just representatives) SNPs. \rev{In the second experiment (Figure~\ref{fig:wtccc_fake}), t}he coefficient amplitude was
        varied, and for each amplitude value, 10 different conditional
        models (different random locations of the nonzero
        coefficients) were simulated. Each simulation ran the exact
        same covariance estimation, SNP clustering and representative
        selection, knockoff construction, and knockoff selection
        procedure as used for the real data. \rev{In the third
          experiment (Figure~\ref{fig:wtccc_fake_tuning}), the
          shrinkage was amplified or attenuated by varying the value
          $m$ in \citet[Equation (2.7)]{XW-MS:2010} from its intended
          value, while the coefficient amplitude was fixed at 14.} In both of these experiments, knockoffs appears to be quite robust to the approximation we make in estimating the joint distribution of the SNPs.
	
	\subsection{Results on real data}
	Encouraged by these robustness simulations, we
	proceeded to run knockoffs with a nominal FDR level of 10\% on the
	full genetic design matrix and real Crohn's disease outcomes. Since
	knockoffs is a randomized procedure, we re-ran knockoffs 10 times
	(after choosing the representatives)
	and recorded the selected SNPs over all repetitions, summarized in
	Table~\ref{tab:CDresults}. The serial computation time for a single
	run of knockoffs was about 6 hours, but the knockoff generation
	process is trivially parallelizable over chromosomes, so with 20
	available computation nodes, the total parallelized computation time was
	about one hour.
	\setlength\extrarowheight{1pt}
	\begin{table}\centering
		\begin{tabular}{|p{1.7cm}|p{2.85cm}|c|c|p{2.2cm}|p{1.8cm}|}
			\hline
			Selection frequency & Cluster \;\;\;Representative (Cluster Size) & Chrom. &
			Position Range (Mb) & Confirmed in \cite{AF-ea:2010}? & Selected in \cite{WTCCC:2007}? \\
			\hline
			100\% & rs11805303 (16) & 1 & 67.31--67.46 & Yes & Yes \\
			\hline
			100\% & rs11209026 (2) & 1 & 67.31--67.42 & Yes & Yes \\
			\hline
			100\% & rs6431654 (20) & 2 & 233.94--234.11 & Yes & Yes \\
			\hline
			100\% & rs6601764 (1) & 10 & 3.85--3.85 & No & No \\
			\hline
			100\% & rs7095491 (18) & 10 & 101.26--101.32 & Yes & Yes \\
			\hline
			90\% & rs6688532 (33) & 1 & 169.4--169.65 & Yes & No \\
			\hline
			90\% & rs17234657 (1) & 5 & 40.44--40.44 & Yes & Yes \\
			\hline
			90\% & rs3135503 (16) & 16 & 49.28--49.36 & Yes & Yes \\
			\hline
			80\% & rs9783122 (234) & 10 & 106.43--107.61 & No & No \\
			\hline
			80\% & rs11627513 (7) & 14 & 96.61--96.63 & No & No \\
			\hline
			60\% & rs4437159 (4) & 3 & 84.8--84.81 & No & No \\
			\hline
			60\% & rs7768538 (1145) & 6 & 25.19--32.91 & Yes & No \\
			\hline
			60\% & rs6500315 (4) & 16 & 49.03--49.07 & Yes & Yes \\
			\hline
			60\% & rs2738758 (5) & 20 & 61.71--61.82 & Yes & No \\
			\hline
			50\% & rs7726744 (46) & 5 & 40.35--40.71 & Yes & Yes \\
			\hline
			50\% & rs4246045 (46) & 5 & 150.07--150.41 & Yes & Yes \\
			\hline
			50\% & rs2390248 (13) & 7 & 19.8--19.89 & No & No \\
			\hline
			50\% & rs7186163 (6) & 16 & 49.2--49.25 & Yes & Yes \\
			\hline
			40\% & rs10916631 (14) & 1 & 220.87--221.08 & No & No \\
			\hline
			40\% & rs4692386 (1) & 4 & 25.81--25.81 & No & No \\
			\hline
			40\% & rs7655059 (5) & 4 & 89.5--89.53 & No & No \\
			\hline
			40\% & rs7759649 (2) & 6 & 21.57--21.58 & Yes* & No \\
			\hline
			40\% & rs1345022 (44) & 9 & 21.67--21.92 & No & No \\
			\hline
			30\% & rs6825958 (3) & 4 & 55.73--55.77 & No & No \\
			\hline
			30\% & rs9469615 (2) & 6 & 33.91--33.92 & Yes* & No \\
			\hline
			30\% & rs4263839 (23) & 9 & 114.58--114.78 & Yes & No \\
			\hline
			30\% & rs2836753 (5) & 21 & 39.21--39.23 & No & No \\
			\hline
			10\% & rs459160 (2) & 1 & 44.75--44.75 & No & No \\
			\hline
			10\% & rs6743984 (23) & 2 & 230.91--231.05 & Yes & No \\
			\hline
			10\% & rs2279980 (20) & 5 & 57.95--58.07 & No & No \\
			\hline
			10\% & rs4959830 (11) & 6 & 3.36--3.41 & Yes & No \\
			\hline
			10\% & rs13230911 (9) & 7 & 1.9--2.06 & No & No \\
			\hline
			10\% & rs7807268 (5) & 7 & 147.65--147.7 & No & No \\
			\hline
			10\% & rs2147240 (1) & 9 & 71.83--71.83 & No & No \\
			\hline
			10\% & rs10761659 (53) & 10 & 64.06--64.41 & Yes & Yes \\
			\hline
			10\% & rs4984405 (3) & 15 & 93.06--93.08 & No & No \\
			\hline
			10\% & rs17694108 (1) & 19 & 38.42--38.42 & Yes & No \\
			\hline
			10\% & rs3932489 (30) & 20 & 15.01--15.09 & No & No \\
			\hline
		\end{tabular}
				\caption{\label{tab:CDresults} SNP clusters discovered to be important for Crohn's disease over 10
			repetitions of knockoffs.
			Clusters not found in \citet{AF-ea:2010} represent promising sites
			for further investigation, especially rs6601764 and rs4692386, whose
			nearest genes have been independendently linked to CD. See
			text for detailed description. SNP positions are as listed
			in the original data, which uses Human Genome Build 35.}
	\end{table}
	Although in this case we certainly do not
	know the ground truth, we can get some sort of confirmation by
	comparing to the results of studies with newer and much larger data
	sets than ours. In particular, we compared with the results of
	\citet{AF-ea:2010}, which used roughly 22,000 cases and 29,000
	controls, or about 10 times the sample size of the WTCCC data. We also
	compare to the \citet{WTCCC:2007} results, where the p-value cutoff
	used was justified as controlling the Bayesian FDR at close to 10\%---the same level we use. We
	consider discovered clusters in different studies to correspond
	(``Yes'' in Table~\ref{tab:CDresults}) if
	their position ranges overlap, and to nearly correspond (``Yes*'' in
	Table~\ref{tab:CDresults}) if the distance from our discovered cluster to the
	nearest cluster in the other study was less than the width of that
	cluster in the other study.
	
	One thing to notice in the table is that a small number of the discovered
	clusters actually overlap with other clusters, specifically the
	clusters represented by rs11805303 and rs11209026 on chromosome 1, and
	rs17234657 and rs7726744 on chromosome 5. And although they don't
	overlap one another, the three nearby clusters represented by rs3135503,
	rs6500315, and rs7186163 on chromosome 16 all overlap the same
	discovered region in \citet{AF-ea:2010}. Although puzzling at first,
	this phenomenon is readily explained by one of four possibilities:
	\begin{itemize}
		\item By construction, the representatives of overlapping clusters are
		not very highly-correlated (less than 0.5), so the fact that
		knockoffs chose multiple clusters in the same region may mean there
		are multiple important SNPs in this region, with one (or more) in
		each cluster. Focusing on the clusters on chromosome 1, the same
		region on the IL23R gene was reported in \citet{AF-ea:2010} to have
		by far the strongest signal (estimated odds ratio of 2.66, next
		highest was 1.53) among all the regions they identified. If we
		conclude from this that the region or gene is fundamentally
		important for Crohn's disease, it stands to reason that mutations at
		multiple nearby but distinct loci could have important detrimental
		effects of their own.
		\item There could be an important SNP located between the two
		clusters, but which was not in the data. Then the two clusters on either
		side would both be conditionally dependent on the
		response, and knockoffs would be correct to reject them.
		\item One or more of the overlapping clusters could be mundane false
		discoveries caused by null covariates that happened to take values
		more conditionally related to the response than would be
		typical. This would be a facet of the data itself, and thus an
		unavoidable mistake.
		\item The tandem discoveries could also be due to a breakdown in our
		covariate distribution estimate. If the covariance between the two
		representatives were substantially underestimated and one had a
		large effect while the other was null, then the Lasso would have a
		(relatively) hard time distinguishing the two original variables,
		but a much easier time separating them from the knockoff of the null
		representative, since it is less correlated with the signal
		variable. As a result, the null variable and its knockoff would not
		be exchangeable as required by knockoffs, and a consistent error
		could be made. However, given that the empirical and estimated
		correlations between the two representatives on chromosome 1 were
		-0.1813 and -0.1811, respectively, and for the two on chromosome 5
		were -0.2287 and -0.2286, respectively, it seems unlikely that we
		have made a gross estimation error.  Also, note that the separation
		of nearly all the discovered regions, along with the simulations of
		the previous subsection, suggest this effect is at worst very small
		among our discoveries.
	\end{itemize}
	Overlapping clusters aside, the knockoffs results display a number of
	advantages over the original marginal analysis in \citet{WTCCC:2007}:
	\begin{itemize}
		\item First, the power is much higher, with \citet{WTCCC:2007}
		making 9 discoveries, while knockoffs made 18 discoveries on
		average, doubling the power.
		\item Quite a few of the discoveries made by knockoffs that were confirmed by
		\citet{AF-ea:2010} were not discovered in \citet{WTCCC:2007}'s
		original analysis.
		\item Knockoffs made a number of discoveries not found in either
		\citet{WTCCC:2007} or \citet{AF-ea:2010}. Of course we expect some
		(roughly 10\%) of these to be false discoveries, particularly towards
		the bottom of the table. However, especially given the evidence from
		the simulations of the previous subsection suggesting the FDR is
		controlled, it is likely that many of these correspond to true
		discoveries. Indeed, evidence from independent studies about
		adjacent genes shows some of the top hits to be promising
		candidates. For example, the closest gene to rs6601764 is KLF6,
		which has been found to be associated with multiple forms of IBD,
		including CD and ulcerative colitis \citep{WG-ea:2016}; and the closest
		gene to rs4692386 is RBP-J, which has been linked to CD through its
		role in macrophage polarization \citep{MB-ea:2013}.
	\end{itemize}
	Note that these benefits required relatively little customization of
	the knockoff procedure. For instance, \citet{WTCCC:2007} used marginal tests specifically
	tailored to SNP case-control data, while we simply used the LCD
	statistic. We conjecture that the careful use of knockoffs by domain
	experts would compound the advantages of knockoffs, as such
	users could devise more powerful statistics and better model/cluster the
	covariates for their particular application.
	
	\section{Discussion}
	\label{sec:discussion}
	
	\newcommand{\Xko}{\tilde{X}}

	This paper introduced a novel approach to variable selection
        in general non-parametric models, which teases apart important
        from irrelevant variables while guaranteeing Type I error
        control. This approach is a significant rethinking of the
        knockoff filter from \citet{RB-EC:2015}. A distinctive feature
        of our approach is that selection is achieved without ever
        constructing p-values.  This is attractive since (1) p-values
        are not needed and (2) it is unclear how they could be
        efficiently constructed, in general. (The conditional
        randomization approach we proposed is one way of getting such
        p-values but it comes at a computational cost.)
	
	\paragraph{Deployment in highly correlated settings} We posed a simple
	question: which variables does a response of interest depend upon?  In
	many problems, there may not be enough ``resolution'' in the data to
	tell whether $Y$ depends on $X_1$ or, instead, upon $X_2$ when the two
	are strongly correlated.  This issue is apparent in our genetic
	analysis of Crohn's disease from Section \ref{sec:realdata}, where
	co-located SNPs may be extremely correlated. In such examples,
	controlling the FDR may not be a fruitful question. A more meaningful
	question is whether the response appears to depend upon a group of
	correlated variables while controlling for the effects of a number of
	other variables (e.g.,~from SNPs located in a certain region of the
	genome while controlling for the effects of SNPs elsewhere on the
	chromosomes). In such problems, we envision applying our techniques to
	grouped variables: one possibility is to develop a model-\revv{X} group
	knockoff approach following \citet{dai2016knockoff}. Another is to
	construct group representatives and proceed as we have done in Section
	\ref{sec:realdata}. It is likely that there are several other ways to
	formulate a meaningful problem and solution.
	
	\paragraph{Open questions}
	Admittedly, this paper may pose more problems than it solves;
	we close our discussion with a few of them below.
	\begin{itemize}
		\item {\em How do we construct {M\revv{X}} knockoffs?} Even though we presented
		a general strategy for constructing knockoffs, we have essentially
		skirted this issue except for the important case of Gaussian
		covariates. It would be important to address this problem, and write
		down concrete algorithms for some specific distributions of features
		of practical relevance.
		
		\item {\em Which {M\revv{X}} knockoffs?} Even in the case of Gaussian
		covariates, the question remains of how to choose
		$\operatorname{corr}(X_j, \tilde{X}_j)$ or, equivalently, the
		parameter $s_j$ from Section \ref{sec:methodology} since
		$\operatorname{corr}(X_j, \tilde{X}_j) = 1 - s_j$. Should we make
		the marginal correlations small? Should we make the partial
		correlations small? Should we take an information-theoretic approach
		and minimize a functional of the joint distribution such as the
		mutual information between $X$ and $\tilde X$?
		
		\item {\em What would we do with multiple M\revv{X} knockoffs?} As suggested
		in \citet{RB-EC:2015}, we could in principle construct multiple
		knockoff variables $(\Xko^{(1)}, \ldots, \Xko^{(d)})$ in such a way
		that the $(d+1)p$-dimensional family
		$(X,\Xko^{(1)}, \ldots, \Xko^{(d)})$ obeys the following extended
		exchangeability property: for any variable $X_j$, any permutation in
		the list $(X_j, \Xko_j^{(1)}, \ldots, \Xko_j^{(d)}$) leaves the
		joint distribution invariant. On the one hand, such constructions
		would yield more accurate information since we could compute, among
		multiple knockoffs, the rank with which an original variable enters
		into a model. On the other hand, this would constrain the
		construction of knockoffs a bit more, perhaps making them less
		distinguishable from the original features. What is the right trade
		off?
		
		Another point of view is to construct several knockoff matrices
		exactly as described in the paper. Each knockoff matrix would yield
		a selection, with each selection providing FDR control as described
		in this paper. Now an important question is this: is it possible to
		combine/aggregate all these selections leading to an increase in
		power while still controlling the FDR?
		
		\item {\em Can we prove some form of robustness?}  Although our
		theoretical guarantees rely on the knowledge of the joint covariate
		distribution, \revv{Section~\ref{sec:robust} showed preliminary examples with remarkable} robustness when this
		distribution is simply estimated from data.  For instance, the
		estimation of the precision matrix for certain Gaussian designs
		seems to have rather secondary effects on FDR and power levels.  It
		would be interesting to provide some theoretical insights into this
		phenomenon.
		
		\item {\em Which feature importance statistics should we use?} The
		knockoff framework can be seen as an inference machine: the
		statistician provides the test statistic $W_j$ and the machine
		performs inference. It is of interest to understand which statistics
		yield high power, as well as design new ones.
		
		\item {\em Can we speed up the conditional randomization testing
			procedure?} Conditional randomization provides a powerful
		alternative method for controlling the false discovery rate in
		model-\revv{X} variable selection, but at a computational cost that is
		currently prohibitive for large problems. However, there exist a
		number of promising directions for speeding it up, including:
		importance sampling to estimate small p-values with fewer
		randomizations, faster feature statistics $T_j$ with comparable or
		higher power than the absolute value of lasso-estimated
		coefficients, and efficient computation re-use and warm starts to
		take advantage of the fact that each randomization changes
		only a single column of the design matrix.
	\end{itemize}
	
	In conclusion, much remains to be done. On the upside, though, we have
	shown how to select features in high-dimensional nonlinear models
	(e.g.,~GLMs) in a reliable way. This arguably is a fundamental
	problem, and it is really not clear how else it could be achieved.

	\bibliography{references}
	\bibliographystyle{apalike}

\appendix

\section{Relationship with the knockoffs of Barber and Cand\`es}
\label{app:detko}
We can immediately see a key difference with the earlier
framework of \citet{RB-EC:2015}. In their work, the design matrix is
viewed as being fixed and setting $\bS = \bX^\top \bX$, knockoff
variables are constructed as to obey
\begin{equation}
	\label{eq:detko}
	\XX^\top \, \XX = \bG, \quad \bG =  \begin{bmatrix} \bS & \bS - \diag\{s\}\\
		\bS - \diag\{s\} & \bS \end{bmatrix}.
\end{equation}
Imagine the columns of $\bX$ are centered, i.e.,~have vanishing means,
so that we can think of $\bX^\top \bX$ as a sample covariance
matrix. Then the construction of Barber and Cand\`es is asking that
the {\em sample} covariance matrix of the joint set of variables obeys
the exchangeability property---i.e.,~swapping rows and columns leaves
the covariance invariant---whereas in this paper, it is the {\em
  population} covariance that must be invariant. In particular, the
{M\revv{X}} knockoffs will be far from obeying the relationship
$\bX^\top \bX = \tilde{\bX}^\top \tilde{\bX}$ required in
\eqref{eq:detko}. To drive this point home, assume
$X \sim \mathcal{N}(0, \bs I)$. Then we can choose
$\tilde{X} \sim \mathcal{N}(0, \bs I)$ independently from $X$, in
which case $\bX^\top \bX/n$ and $\tilde{\bX}^\top \tilde{\bX}/n$ are
two independent Wishart variables. An important consequence of this is
that in the {M\revv{X}} approach, the sample correlation between the $j$th
columns of $\bX$ and $\tilde{\bX}$ will typically be far smaller than
that in the original framework of Barber and Cand\`es. {For example,
  take $n = 3000$ and $p = 1000$ and assume the equicorrelated
  construction from \citet{RB-EC:2015}. Then the sample correlation
  between any variable $\bX_j$ and its knockoff $\tilde{\bX}_j$ will
  be about 0.65 while in the random case, the average magnitude of the
  correlation is about 0.015.}  This explains the power gain of our
new method that is observed in Section~\ref{sec:simulations}.

	\section{Sequential conditional independent pairs algorithm}
	\label{app:scip}
	
	We will prove by induction on $j$ that Algorithm~\ref{alg:sequential}
	produces knockoffs that satisfy the exchangeability property
	\eqref{eq:randomko}.   We prove the result for the discrete case; the
	general case follows the same argument with a slightly more careful
	measure-theoretic treatment using Radon--Nikodym derivatives instead
	of probability mass functions. Below we denote the probability mass
	function (PMF) of $(X_{1:p},\Xko_{1:j-1})$ by
	$\mathcal{L}(X_{\text{-}j},X_j,\Xko_{1:j-1})$.
	
	\begin{indhyp*}
		After $j$ steps, every pair $X_k,\Xko_k$ is exchangeable in the
		joint distribution of $(X_{1:p},\Xko_{1:j})$ for $k= 1, \ldots, j$.
	\end{indhyp*}
	
	By construction, the induction hypothesis is true after 1 step since
	$X_1$ and $\tilde X_1$ are conditionally independent and have the same
	marginal distribution (this implies conditional exchangeability).
	Assuming the induction hypothesis holds until $j-1$, we prove that it
	holds after $j$ steps. Note that by by assumption, $\mathcal{L}$ is
	symmetric in $X_k,\Xko_k$ (that is, the function value remains
	unchanged when the argument values $X_k,\Xko_k$ are swapped) for
	$k = 1,\dots,j-1$.  Also, the conditional PMF of $\Xko_j$ given
	$X_{1:p},\Xko_{1:j-1}$ is given by
	\[
	\frac{\mathcal{L}(X_{\text{-}j},\Xko_j,\Xko_{1:j-1})}{\sum_u\mathcal{L}(X_{\text{-}j},u,\Xko_{1:j-1})}.\]
	Therefore, the joint PMF of $(X_{1:p},\Xko_{1:j})$ is given by the
	product of the aforementioned conditional PMF with the joint PMF of $(X_{1:p},\Xko_{1:j-1})$:
	\begin{equation}
		\label{eq:sciplik}
		\frac{\mathcal{L}(X_{\text{-}j},X_j,\Xko_{1:j-1})\mathcal{L}(X_{\text{-}j},\Xko_j,\Xko_{1:j-1})}{\sum_u\mathcal{L}(X_{\text{-}j},u,\Xko_{1:j-1})}.
	\end{equation}
	Exchangeability of $X_j,\Xko_j$ follows from the symmetry of
	\eqref{eq:sciplik} to those two values. For $k<j$, note that
	\eqref{eq:sciplik} only depends on $X_k,\Xko_k$ through the function
	$\mathcal{L}$, and that $\mathcal{L}$ is symmetric in $X_k,\Xko_k$. Therefore,
	\eqref{eq:sciplik} is also symmetric in $X_k,\Xko_k$, and therefore
	the pair is exchangeable in the joint distribution of
	$(X_{1:p},\Xko_{1:j})$.

	\section{Bayesian knockoff statistics}
	\label{app:bvs}
	
	The data for the simulation of Section~\ref{sec:bvs} was drawn from:
	\begin{align*}
		X_j &\iidsim \mathcal{N}(0,1/n), \; j\in\{1,\dots,p\},\\
		\beta_j &\iidsim \mathcal{N}(0,\tau^2), \; j\in\{1,\dots,p\},\\
		\delta_j &\iidsim \text{Bernoulli}(\pi), \; j\in\{1,\dots,p\},\\
		\frac{1}{\sigma^2} &\sim \text{Gamma}(A,B) \qquad \text{(shape/scale
			parameterization, as opposed to shape/rate)},\\
		Y &\sim \mathcal{N}\left(\sum_{j:\delta_j=1}X_j\beta_j,\;\sigma^2\right).
	\end{align*}
	The simulation used $n = 300$, $p = 1000$, with parameter values $\pi
	= \frac{60}{1000}$, $A = 5$, $B = 4$, and $\tau$ varied along the x-axis of the plot.
	
	To compute the Bayesian variable selection (BVS) knockoff statistic,
	we used a Gibbs sampler on the following model (treating $X_1,\dots,X_p$ and $\tilde{X}_1,\dots,\tilde{X}_p$ as fixed):
	\begin{align*}
		\beta_j &\iidsim \mathcal{N}(0,\tau^2), \; j\in\{1,\dots,p\},\\
		\tilde{\beta}_j &\iidsim \mathcal{N}(0,\tau^2), \; j\in\{1,\dots,p\},\\
		\lambda_j &\iidsim \text{Bernoulli}(\pi), \; j\in\{1,\dots,p\},\\
		(\delta_j,\tilde{\delta}_j) &\iidsim \left\{\begin{array}{ll}
			(0,0) & \text{if } \lambda_j = 0 \\
			(0,1) \text{ w.p. } 1/2 & \text{if } \lambda_j = 1 \\
			(1,0) \text{ w.p. } 1/2 & \text{if }\lambda_j = 1
		\end{array}\right\}
		,\; j\in\{1,\dots,p\},\\
		\frac{1}{\sigma^2} &\sim \text{Gamma}(A,B) \qquad \text{(shape/scale
			parameterization, as opposed to shape/rate)},\\
		Y &\sim \mathcal{N}\left(\sum_{j:\delta_j=1}X_j\beta_j+\sum_{j:\tilde{\delta}_j=1}\tilde{X}_j\tilde{\beta}_j,\;\sigma^2\right),
	\end{align*}
	which requires only a very slight modification of the procedure in
	\citet{EG-RM:1997}. After computing the posterior probabilities
	$\hat{\delta}_j$ and $\hat{\tilde{\delta}}_j$ with 500 Gibbs samples (after 50 burn-in samples), we
	computed the $j$th knockoff statistic as
	\[W_j = \hat{\delta}_j-\hat{\tilde{\delta}}_j.\]
	
	\section{Conditional randomization speedups}
	\label{app:CRspeed}
	In order to make the conditional randomization computation feasible
	for Figure~\ref{fig:rand_v9}, we had to apply a number of
	speed-ups/shortcuts. With these, the serial computation time for the figure was
	reduced to roughly three years.
	\begin{itemize}
		\item As mentioned, the LCD statistic is a powerful one for {M\revv{X}}
		knockoffs, so we wanted to compare it to the analogue for the
		conditional randomization test, namely, the absolute value of the
		Lasso-estimated coefficient. However, choosing the penalty parameter
		by cross-validation turned out to be too expensive, so instead we
		chose a fixed value by simulating repeatedly from the known model
		(with amplitude 30), running cross-validation for each
		repetition, and choosing the median of the
		cross-validation-error-minimizing penalty parameters (the chosen
		value was 0.00053). For a fair comparison, we did the same for
		{M\revv{X}} knockoffs (except the simulations included knockoff
		variables too---the chosen value was 0.00077). This speed-up is of
		course impossible in practice, as the true model is not known. It
		is not clear how one would choose the penalty parameter if not by
		cross-validation (at considerable extra computational
		expense), although a topic of current research is how to choose tuning
		parameters efficiently and without explicit reliance on a model for
		the response, e.g., \citep{JL-JL:2014}.
		\item Because the statistic used was the (absolute value of the)
		estimated coefficient from a sparse estimation procedure, many of
		the observed statistics were exactly zero, and for these, the
		conditional randomization p-values can be set to one without further
		computation. This did not require any prior knowledge, although it
		will only work for statistics whose distribution has a point mass at
		the smallest possible value.
		\item {Because the power was calibrated, we knew to expect at least
			around 10 discoveries, and thus could anticipate the BHq cutoff
			being at least $0.1 \times 10/600$. This cutoff gives a sense of
			the p-value resolution needed, and we chose the number of
			randomizations to be roughly 10 times the inverse of the BHq
			cutoff, namely, $10,000$. However, we made sure that all $10,000$
			randomizations were only computed for very few covariates, both
			using the previous bullet point and also by checking after
			periodic numbers of randomizations whether we can reject the
			hypothesis that the true p-value is below the approximate BHq
			cutoff upper-bound of $0.1 \times 44/600$ (the 44 comes from 40
			nonzeros with 10\% false discoveries). For instance, after just 10
			randomizations, if the approximate p-value so far is greater than
			0.2, we can reject the hypothesis that the exact p-value is below
			$0.1 \times 44/600$ at significance 0.0001 (and thus compute no
			further randomizations for that covariate).}  Speed-ups like this
		are possible in practice, although they require having an idea of
		how many rejections will be made, which is not generally available
		ahead of time.
	\end{itemize}
	
\section{Robustness Simulations}
\label{app:robust}
In Figure~\ref{fig:kn_v59}, the points labeled ``$100\alpha\%$
Emp. Cov'' represents knockoffs run using the following convex
combination of the true and empirical covariance matrices:
\[\bS_{\text{EC}} = (1-\alpha) \bS +
  \alpha \hat\bS,\] where $\hat\bS$ is the empirical covariance
matrix. The point labeled ``Graph. Lasso'' represents knockoffs run
using the following covariance estimate:
\[\bS_{\text{GL}} = \diag(r)\,\hat{\bs{\Theta}}^{-1}\diag(r)\]
where the rescaling vector $r$ has
$r_j = \sqrt{\Sigma_{jj}/(\hat{\bs{\Theta}}^{-1})_{jj}}$, and
$\hat{\bs{\Theta}}$ is the inverse covariance estimated by the
graphical Lasso with penalty parameter chosen by 2-fold
cross-validation on the log-likelihood.

	\section{Genetic analysis of Crohn's disease}
	\label{app:realdata}
	The SNP arrays came from an Affymetrix 500K chip, with calls made by
	the BRLMM algorithm \citet{BRLMM}. SNPs satisfying any
	of the following conditions were removed:
	\begin{itemize}
		\item minor allele frequency (MAF) $< 1\%$,
		\item Hardy--Weinberg equilibrium test p-value  $< 0.01\%$,
		\item missing $>5\%$ of values (values were considered missing if
		the BRLMM score was $> 0.5$),
		\item position was listed as same as another SNP (this occured just
		for the pair rs16969329 and rs4886982; the former had smaller MAF
		and was removed),
		\item position was not in genetic map.
	\end{itemize}
	Furthermore, subjects missing $>5\%$ of values were removed. Following
	the original paper describing/using this data \citep{WTCCC:2007}, we
	did not adjust for population structure. Any missing values that
	remained after the above preprocessing were replaced by the mean of
	the nonmissing values for their respective SNPs.
		
	\section{Knockoffs with selection}
	\label{app:knselect}
	
	First, recall that the results of Theorem~\ref{thm:main} hold if for
	any subset $S \subset \mathcal{H}_0$, we have
	\begin{equation}
		\label{eq:1}
		([\bX, \, \tilde{\bX}]_{\swap(S)}, y) \, \eqd \, ([\bX, \,
		\tilde{\bX}], y).
	\end{equation}
	In fact, M\revv{X} knockoffs are defined in such a way that this property
	holds. Now, the procedure employed in Section~\ref{sec:realdata} to
	construct knockoffs is slightly different from that described in the
	rest of the paper.  Explicitly, the data looks like
	\[
	\bX = \begin{bmatrix} \bX^{(1)} \\ \bX^{(2)} \end{bmatrix}, \quad y
	= \begin{bmatrix} y^{(1)} \\ y^{(2)} \end{bmatrix},
	\]
	where $y^{(1)}$ is $983 \times 1$, $\bX^{(1)}$ is $983 \times 71,145$,
	$y^{(2)}$ is $3930 \times 1$ and $\bX^{(2)}$ is $3930 \times 71,145$;
	recall that the set of samples labeled (1) is used to select the
	cluster representatives, and that the two sets (1) and (2) are
	independent of each other.  The knockoffs $\tilde{\bX}^{(2)}$ for
	$\bX^{(2)}$ are generated as described in the paper (using the ASDP
	construction) and for (1), we set $\tilde{\bX}^{(1)} = \bX^{(1)}$. To
	verify \eqref{eq:1}, it suffices to show that for any subset
	$S\in\mathcal{H}_0$ and each $i \in \{1,2\}$,
	\begin{equation}
		\label{eq:selexch1}
		([\bX^{(i)},\, \tilde{\bX}^{(i)}], y^{(i)}) \, \eqd \, ([\bX^{(i)},\,
		\tilde{\bX}^{(i)}]_{\text{swap}(S)}, y^{(i)})
	\end{equation}
	since the sets (1) and (2) are independent. \eqref{eq:selexch1} holds
	for $i = 2$ because we are following the classical knockoffs
	construction. For $i = 1$, \eqref{eq:selexch1} actually holds with
	exact equality, and not just equality in distribution.
	
	We can also argue from the perspective of M\revv{X} knockoffs from Definition
	\ref{def:def_MfK}. By construction, it is clear that
	$\tilde{\bX}^{(2)}$ are valid M\revv{X} knockoffs for $\bX^{(2)}$. We thus
	study $\tilde{\bX}^{(1)}$: since $\tilde{\bX}^{(1)} = {\bX}^{(1)}$,
	the exchangeability property is trivial; also,
	$\tilde{\bX}^{(1)} \independent y^{(1)} \, | \, \bX^{(1)}$ since
	$\bX^{(1)}$ determines $\tilde{\bX}^{(1)}$.
\end{document}